\DeclareMathOperator*{\defeq}{\triangleq}
\newtheorem{theorem}{Theorem}
\newtheorem{lemma}{Lemma}
\newtheorem{proposition}{Proposition}
\newcommand{\bit}{\begin{itemize}}
\newcommand{\eit}{\end{itemize}}
\newcommand{\bc}{\begin{center}}
\newcommand{\ec}{\end{center}}
\newcommand{\ba}{\begin{array}}
\newcommand{\ea}{\end{array}}
\newcommand{\beq}{\begin{equation}}
\newcommand{\eeq}{\end{equation}}
\newcommand{\beqn}{\begin{equation*}}
\newcommand{\eeqn}{\end{equation*}}
\newcommand{\bean}{\begin{eqnarray*}}
\newcommand{\eean}{\end{eqnarray*}}
\newcommand{\bea}{\begin{eqnarray}}
\newcommand{\eea}{\end{eqnarray}}
\def\E{\mathbb{E}}
\def\av{\boldsymbol{a}}
\def\bv{\boldsymbol{b}}
\def\cv{\boldsymbol{c}}
\def\ev{\boldsymbol{e}}
\def\hv{\boldsymbol{h}}
\def\sv{\boldsymbol{s}}
\def\uv{\boldsymbol{u}}
\def\vv{\boldsymbol{v}}
\def\xv{\boldsymbol{x}}
\def\yv{\boldsymbol{y}}
\def\zv{\boldsymbol{z}}
\def\Am{\boldsymbol{A}}
\def\Bm{\boldsymbol{B}}
\def\Km{\boldsymbol{K}}
\def\Um{\boldsymbol{U}}
\newcommand{\Cc}{{\mathcal C}}
\newcommand{\Nc}{{\mathcal N}}
\newcommand{\Rc}{{\mathcal R}}
\newcommand{\Xc}{{\mathcal X}}
\newcommand{\Zc}{{\mathcal Z}}
\newcommand{\T}{{\scriptscriptstyle\mathsf{T}}}
\renewcommand{\H}{{\scriptscriptstyle\mathsf{H}}}
\newtheorem{remark}{Remark}
\renewcommand{\Bmatrix}[1]{\begin{bmatrix}#1\end{bmatrix}}
\newcommand{\Tt}{T_{\tau}}
\newcommand{\Td}{T_{d}}
\newcommand{\non}{\nonumber}
\newcommand{\Hen}{\mathbb{H}}
\newcommand{\hen}{\mathrm{h}}
\newcommand{\Imu}{\mathbb{I}}
\newcommand{\Ci}{C_{\text{ideal}}}
\newcommand{\bunderline}[1]{\underline{#1\mkern-4mu}\mkern4mu }
\newcommand{\hvu}{\bunderline{\boldsymbol{h}}}
\newcommand{\xvu}{\bunderline{\boldsymbol{x}}}
\newcommand{\uvu}{\bunderline{\boldsymbol{u}}}
\newcommand{\vvu}{\bunderline{\boldsymbol{v}}}
\newcommand{\yvu}{\bunderline{\boldsymbol{y}}}
\newcommand{\zvu}{\bunderline{\boldsymbol{z}}}
\newcommand{\svu}{\bunderline{\boldsymbol{s}}}
\newcommand{\qvu}{\bunderline{\boldsymbol{q}}}
\newcommand{\evu}{\bunderline{\boldsymbol{e}}}
\newcommand{\bvu}{\bunderline{\boldsymbol{b}}}
\newcommand{\cvu}{\bunderline{\boldsymbol{c}}}
\newcommand{\fvu}{\bunderline{\boldsymbol{f}}}
\newcommand{\avu}{\bunderline{\boldsymbol{a}}}
\newcommand{\uu}{\bunderline{u}}
\newcommand{\au}{\bunderline{a}}
\newcommand{\zerou}{\underline{0}}
\newcommand{\xvut}[1][]{\ifthenelse{\isempty{#1}}{\xvu_{t}}{\xvu_{#1}}}
\newcommand{\hvut}[1][]{\ifthenelse{\isempty{#1}}{\hvu_{t}}{\hvu_{#1}}}
\newcommand{\uvut}[1][]{\ifthenelse{\isempty{#1}}{\uvu_{t}}{\uvu_{#1}}}
\newcommand{\vvut}[1][]{\ifthenelse{\isempty{#1}}{\vvu_{t}}{\vvu_{#1}}}
\newcommand{\svut}[1][]{\ifthenelse{\isempty{#1}}{\svu_{t}}{\svu_{#1}}}
\newcommand{\qvut}[1][]{\ifthenelse{\isempty{#1}}{\qvu_{t}}{\qvu_{#1}}}
\newcommand{\zvut}[1][]{\ifthenelse{\isempty{#1}}{\zvu_{t}}{\zvu_{#1}}}
\newcommand{\yvut}[1][]{\ifthenelse{\isempty{#1}}{\yvu_{t}}{\yvu_{#1}}}
\newcommand{\fvut}[1][]{\ifthenelse{\isempty{#1}}{\fvu_{t}}{\fvu_{#1}}}
\newcommand{\hvt}[1][]{\ifthenelse{\isempty{#1}}{\hv_{t,m}}{\hv_{t,#1}}}
\newcommand{\hvot}[1][]{\ifthenelse{\isempty{#1}}{\hv_{1,m}}{\hv_{1,#1}}}
\newcommand{\zt}[1][]{\ifthenelse{\isempty{#1}}{\zv_{t}}{\zv_{#1}}}
\newcommand{\yt}[1][]{\ifthenelse{\isempty{#1}}{\yv_{t}}{\yv_{#1}}}
\newcommand{\st}[1][]{\ifthenelse{\isempty{#1}}{\sv_{t}}{\sv_{#1}}}
\newcommand{\Wme}{\boldsymbol{w}}
\newcommand{\trace}{\mathrm{tr}}
\begin{document}
\sloppy

\title{{\huge On the MISO Channel  with Feedback: Can Infinitely Massive Antennas Achieve  Infinite Capacity?}} 
\author{Jinyuan Chen 
\thanks{Jinyuan Chen is with Louisiana Tech University, Department of Electrical Engineering, Ruston, USA (email: jinyuan@latech.edu).  This work was presented in part at the 2017 IEEE International Symposium on Information Theory.}
}


\maketitle
\pagestyle{headings}

\begin{abstract}
We consider communication over a multiple-input single-output (MISO) block fading channel in the presence of an independent noiseless feedback link. We assume that the transmitter and receiver have no prior knowledge of the channel state realizations, but the transmitter and receiver can acquire the channel state information (CSIT/CSIR) via downlink training  and feedback.
For this channel, we show that increasing the number of transmit antennas  to infinity  will \emph{not} achieve an infinite capacity,  for a finite channel coherence length and a finite input constraint on the second or fourth moment.
This insight follows from our new capacity bounds that hold for any \emph{linear} and \emph{nonlinear} coding strategies, and any channel training schemes.
In addition to the channel capacity bounds, we also provide a characterization on the beamforming gain that is also known as array gain or power gain, at the regime with a large number of antennas. 

\end{abstract}

\section{Introduction}

Motivated by the increasing demand for higher data rates in wireless communication systems,  a significant effort is being made to study the use of massive multiple-input multiple-output (massive MIMO) systems \cite{Marzetta:10,Andrews+:14,NLM:13}.  As  equipped with a large number of antennas, the massive MIMO system has potential to boost the channel's beamforming gain that is also known as array gain or power gain (cf.~\cite{Goldsmith:05, TV:05}).  In the massive MIMO channels, for example, in a massive multiple-input single-output (MISO) channel,  the capacity may increase logarithmically with the number of  antennas (cf.~\cite{Goldsmith:05, TV:05, Tel:99,JP:03}), which implies that \emph{infinitely} massive  antennas may allow us to achieve  an \emph{infinite} capacity, even with a \emph{finite} power constraint at the transmitter.

However, the above exciting result  is based on the key assumption that the instantaneous fading coefficients are perfectly known to the receiver/transmitter  (perfect CSIR/CSIT).
In general, CSIT and CSIR entail channel training and feedback. In a typical system with frequency-division-duplex (FDD) mode, CSIT comes from  channel training  and feedback operating over the downlink channel and   feedback channel respectively. The overhead of the training and feedback may in turn affect the channel capacity.
Therefore, it remains open if a massive MIMO system could still provide a significant capacity benefit as we expected.   Specifically, we might ask the following question: 
\emph{Can infinitely massive  antennas always achieve an infinite capacity in a massive MIMO channel?}

In this work, we  study this question by focusing on a massive MISO block fading channel with output feedback. 
We assume that the transmitter and receiver have no prior knowledge of the channel state realizations, but the transmitter and receiver can acquire the channel state information via  downlink training  and feedback.
Let us begin with a simple case where the channel coherence length is $T_c =2$ (channel uses) and the input signals are limited  by a finite \emph{second}-moment  constraint that is also known as long-term average power constraint.
Since the coherence length is $T_c =2$,  the transmitter could use the first and the second channel uses of  each channel block for channel training and data transmission, respectively.  Based on this scheme, one might tentatively expect an infinite rate for the case with  infinite number of transmit antennas, because a little channel state information might be very useful for this case.
However, we show that in this setting increasing the  transmit-antenna number  to infinity  will \emph{not}  yield an infinite capacity.  
This  result is in sharp contrast to the result of the setting with perfect CSIT/CSIR (e.g. through a genie-aided training and feedback),  in which the capacity will go to infinity as the antenna number grows to infinity (cf.~\cite{JP:03, GV:97, Gallager:68, KAC:90, AC:91}).

As a main contribution of this work, we derive capacity upper bound and lower bound for the MISO channel with feedback under the second  moment and the fourth moment input constraints, respectively.  
The result reveals that  increasing the  transmit-antenna number to infinity  will \emph{not}  yield an infinite capacity, for the case with a finite channel coherence and a finite input constraint on the second or fourth moment.
In addition to the capacity bounds, this work also provides a characterization on the channel's beamforming gain at the regime with a large number of antennas. 
Similarly to the degrees-of-freedom metric (cf.~\cite{ZT:02}) that usually captures the prelog factor  of  capacity at the high power regime,
beamforming gain is used in this work to capture the prelog factor of capacity at the high antenna-number regime.

{\bf Related works:} The  capacity of the channels with feedback, or with imperfect CSIT/CSIR, has been studied extensively in the literature for varying settings, e.g.,  the point-to-point channels  
(cf.~\cite{LS:02, Medard:2000, Narula+:98, Love+:03, Erkip+:03,TG:06, RR:06, XGR:06, SJ:07, JS:07, DL:06, YL:07, KVM:14}) 
and the broadcast channels (cf.~\cite{LSW:05, YJG:06, Jindal:06m, Love+:08, DJ:14}).
However, a common assumption in  those  works above is that  imperfect CSIT and CSIR were acquired without considering the overhead in channel training.
The channel training overhead cannot be negligible  when the number of channel parameters to be estimated is large and the channel coherence is relatively small.
This work  is categorized  in the line of works studying the multiple-antenna networks where CSIT and CSIR were acquired via  channel training and feedback,  such as \cite{SH:10, Caire+:10m, KCJ:08, KJC:09, CJKR:07,HKD:11,KJC:11, Caire+:13, Caire+:14,Choi+:14}. 
To the best of our knowledge, the previous capacity upper bounds on this topic hold only for \emph{linear schemes}. 
Specifically, the work in \cite{SH:10} considered, among others, a MISO block fading channel with limited feedback, under the assumptions of \emph{linear} coding schemes and  a \emph{fixed ratio} of the coherence length to the antenna number.  
The work in \cite{SH:10} also assumed a dedicated training, i.e.,  a certain fraction of each channel block is used specifically  for the channel training.
For that setting, the work in \cite{SH:10} showed that the (linear) capacity can increase logarithmically with the number of  antennas. 
In a similar direction,  the work in \cite{Caire+:10m} investigated the achievable ergodic rates of a MIMO block fading broadcast channel with dedicated training and noisy feedback, under the assumption of \emph{linear} coding schemes. The work in \cite{Caire+:10m} derived  the lower  and upper bounds of the achievable rate  as the \emph{expectation} of some functions of the channel estimates.  
Our channel can be considered as a specific block fading channel with in-block memory, due to feedback, in which the capacity is generally \emph{NP-hard} to compute (cf.~\cite{Kramer:14}). 
Specifically, the capacity of our setting is a \emph{multiletter} expression 
and finding  the optimal input distribution is NP-hard  (cf.~\cite{Kramer:14}). In our setting, the channel input  at each time  is a function of the previous channel outputs and the message. 
Note that, under the assumptions of  linear coding schemes and a dedicated channel training,  the capacity bound may be reduced to  a \emph{single-letter} expression (cf.~\cite{SH:10,Caire+:10m}). 
That is because,  with linear coding schemes and a dedicated channel training, the channel can be considered as a \emph{non-feedback} channel with imperfect CSIT/CSIR. 
However, in our setting, feedback \emph{cannot} be removed at any point of time.  Therefore, the previous approaches used in the settings with linear schemes and dedicated channel training (cf.~\cite{SH:10,Caire+:10m}) might not be directly applied in our setting.
In our converse proof, we transform the \emph{NP-hard} capacity problem into a relaxed problem that is  computable.  In our work we focus on the beamforming gain performance, as tight capacity bounds are still hard to compute. 

In one different direction, the previous work in \cite{HH:03} studied the capacity of  a MIMO channel with training but \emph{without feedback}. In that setting, the receiver can acquire the CSIR via channel training but the transmitter will not have channel state information due to the lack of the feedback link.  
Finally, in  another different direction, some previous  works considered the noncoherent communication without channel training and feedback (cf.~\cite{ZT:02, MH:99, MRYDLSB:13,XGH:13, CCLM:13, CMGEG:15, CMG:16, FQW:17, SSD:17, BAZHH:18, GP:17, VR:97, Raphaeli:96, GDME:17} and the references therein).  Specifically, the authors in \cite{ZT:02} studied a MIMO  noncoherent block fading channel, where the receiver and the transmitter have no channel side information, and computed the capacity prelog (degrees-of-freedom) of this channel at high power regime. 
In \cite{MRYDLSB:13},  the authors investigated the degrees-of-freedom of a single-input multiple-output (SIMO) channel with temporally correlated  block fading, in the noncoherent setting. In \cite{CMGEG:15} and  \cite{CMG:16},  the two works  studied the capacity scaling laws for noncoherent  communications in the wideband massive SIMO channel and the massive SIMO multiple access channel, respectively, at the regime with a large number of receiving antennas. 
Our work is very different from those works, as we consider both training and feedback in our setting.

The remainder of this work is organized as follows. 
Section~\ref{sec:system} describes the system model. 
Section~\ref{sec:result}  provides  the main results of this work. 
The converse  and  achievability proofs are described in Section~\ref{sec:converse2g}, Section~\ref{sec:misoc},  and the appendices.
The conclusion and discussion are provided  in Section~\ref{sec:concl}.
Throughout this work, $(\bullet)^\T$, $(\bullet)^{*}$, $(\bullet)^{\H}$ and $(\bullet)^{-1}$  denote the transpose, conjugate,  conjugate transpose and inverse operations, respectively. $||\bullet||$ denotes the Euclidean norm, $\det(\bullet)$ denotes the determinant, $\trace(\bullet)$ denotes the trace, and $|\bullet|$ denotes the magnitude. 
We use $\Am \succeq \mathbf{0}$ to denote that matrix $\Am $ is  Hermitian positive semidefinite, and use $\Am \preceq \Bm $ to mean that  $\Bm- \Am \succeq \mathbf{0}$.
Logarithms are in base~$2$.
We let $\ev^{j}_{i} = (\ev_i, \ev_{i+1}, \cdots, \ev_j)$ if $i\leq j$, else, let $\ev^{j}_{i}$ denote an empty term. Let $\ev^{j} = \ev^{j}_{1}$.
$\Imu(\bullet)$, $\Hen(\bullet)$ and $\hen(\bullet)$ denote the mutual information, entropy and differential entropy,  respectively.  
$\lfloor \bullet \rfloor$ denotes the largest integer not greater than the argument and $\lceil \bullet \rceil$ denotes the smallest integer not less than  the argument. 
   $\Zc$,  $\Rc$ and $\Cc$ denote the sets of integers, real numbers and  complex numbers, respectively.    
   $o(\bullet)$ comes from the standard Landau notation, where 
$f(x)=o(g(x))$ implies that $\lim_{x \to \infty} f(x)/g(x) =0$. 
 $[ a\  \text{mod} \  m]$ denotes the modulo operation, i.e., $[ a\  \text{mod} \  m] = r$ if the number of $a$ can be represented as  $a= \ell m + r $ for $\ell \in \Zc$ and $|r| < |m|$.  
 $\uvu \sim \mathcal{CN}(\uu_0, \Omega_0)$  denotes that the random vector $\uvu$ is  proper complex Gaussian  distributed with mean $\uu_0$ and covariance $\Omega_0$. $\uvu$ is said to be proper if $\E [(\uvu -\E[\uvu] )(\uvu -\E[\uvu] )^\T] = \mathbf{0}$.
When a complex Gaussian vector is proper and with zero mean,  it is said to be circularly symmetric complex Gaussian. 
$\uv \sim \Xc^2 (k)$  denotes that  $\uv$ is a chi-squared random variable that is defined as the sum of squares of $k$ independent and identically distributed (i.i.d.)  standard normal  $\Nc(0, 1)$ random variables.
Unless for some specific parameters, the random matrix, random variable and random vector are usually denoted by the  bold italic uppercase symbol (e.g., $\Um$),  bold italic lowercase symbol (e.g., $\uv$) and  bold italic lowercase symbol with underline (e.g., $\uvu$) respectively,  while the corresponding realizations are non-bold (e.g., $U$, $u$ and $\uu$).

\section{System model \label{sec:system} }

We consider a   MISO  channel where a transmitter with $M $ ($M\geq 2$) antennas sends  information to a single-antenna user, as illustrated in Fig.~\ref{fig:MassiveMISO}. 
The signal  received  by the user at time $t$  is given as
\begin{align}
\yv_t &=   \hvu^\T_t \xvu_t  +   \zv_t,   \label{eq:misoy}
\end{align}
$t=1,2,\cdots,n$, where $\xvut$ denotes the transmitted signal vector at time $t$, $\zt \sim \mathcal{CN}(0, 1)$ denotes the additive white Gaussian noise (AWGN), $\hvut \defeq [ \hvt[1], \hvt[2], \cdots,   \hvt[M]]^\T \sim \mathcal{CN}( \zerou,  I_M) $ denotes the $M\times 1$ channel vector at time~$t$, and $ \hvt[m]$ denotes a channel coefficient of the $m$th transmit antenna at time~$t$. 
We assume a \emph{block fading} model (cf.~\cite{BPS:98, Caire+:10m}), in which the channel coefficients remain constant during a coherence block of $T_c$ channel uses and change \emph{independently} from one block to the next, i.e., \[ \hvut[\ell T_c+1]=\hvut[\ell T_c+2]=\cdots =\hvut[\ell T_c+T_c] \quad \text{and} \quad   \hvut[\ell T_c+T_c]  \  \text{is independent of} \ \hvut[(\ell +1) T_c+1] \] for $\ell =0, 1,  \cdots, L-1$ and $L=n/T_c$, where $n, L$ and $T_c$ are assumed to be integers. 
We assume that the channel coefficients in each block are initially unknown to the transmitter and the user. 
At the end of each time $t$, the user can feed back the channel outputs to the transmitter over an independent feedback link. 
For simplicity we assume that the feedback link is noiseless (error-free) and with a unit time delay, i.e., at the beginning of time $t+1$, the transmitter knows  $\yv^{t} \defeq ( \yt[1], \yt[2], \cdots, \yt[t] )$.    

\begin{figure}
\centering
\includegraphics[width=7cm]{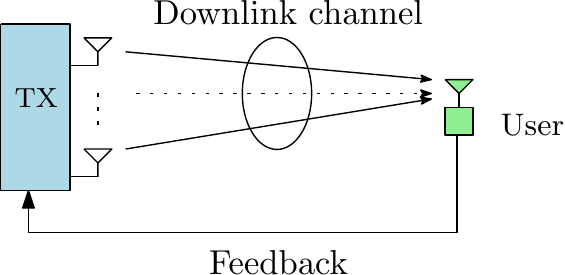}
\caption{MISO channel with a feedback link.}
\label{fig:MassiveMISO}
\end{figure}

For this feedback communication  of total $n$ channel uses, the transmitter wishes to send the user a message index $\Wme$ that is uniformly distributed over $\ \{1,2, \cdots, 2^{nR}\}$.
We specify a $(2^{nR}, n)$ feedback code with encoding maps
 \begin{align}
  \xvut : \  \{1,2, \cdots, 2^{nR}\} \times \Cc^{t-1}   \to \Cc^{M},  \quad  t=1,2, \cdots,n   \label{eq:mapx} 
  \end{align} 
that result in codewords (or code functions, more precisely) 
 \begin{align}
   \xvu^n(\Wme, \yv^{n-1} ) \!=\!  \bigl(  \xvut[1](\Wme ),  \xvut[2](\Wme, \yv_1), \cdots,    \xvut[n](\Wme, \yv^{n-1} )  \bigr).  \label{eq:mapcodeword} 
  \end{align} 
 Then the user decodes the message with decoding maps
   \begin{align}
 \hat{\Wme}_n  :  \ \Cc^n \to    \{1,2, \cdots, 2^{nR}\}.   \label{eq:mapdec} 
  \end{align} 
 We consider two cases of  constraints  on the input signals.  
 At first we consider the \emph{second moment} input constraint such that 
\begin{align}
 \frac{1}{n} \sum_{t=1}^{n}    \E \Bigl[ \|\xvut( \Wme,  \yv^{t -1})\|^2 \Bigr]  \leq  P  \label{eq:powerEve} 
  \end{align} 
 where  the expectation is over all possible noise and fading sequences as well as the message $\Wme$, for some $P \in \Rc $, $0< P < +\infty$.   This second moment  constraint is also known as the \emph{average  power constraint}. 
 We then consider the \emph{fourth moment} input constraint such that 
 \begin{align}
 \frac{1}{n} \sum_{t=1}^{n}    \E \Bigl[ \|\xvut( \Wme,  \yv^{t -1})\|^4 \Bigr]  \leq  \kappa^2 P^2   \label{eq:power4m}
  \end{align} 
  where $\kappa$ is a positive constant.  The fourth moment input constraint has been introduced in several communication scenarios (cf.~\cite{MG:02, GPV:05, GPV:02, BT:06,  PV:04}). 
For some certain cases,  imposing the fourth moment constraint  is identical to imposing a limitation on the \emph{kurtosis} that is a measure of peakedness of the signal (cf.~\cite{GPV:05, GPV:02, BT:06}).
The probability  of error $\mathrm{P}^{(n)}_e$ is defined as   
   \begin{align}
 \mathrm{P}^{(n)}_e \defeq&    \frac{1}{2^{nR}} \sum_{w=1}^{2^{nR}} \mathrm{Pr}\bigl\{ \hat{\Wme}_n (\yv^n) \neq w | \Wme=w  \bigr\}  . \non 
  \end{align} 
 A rate $R$ (bits per channel use) is said to be achievable if there exists a sequence of $(2^{nR}, n)$ codes with $ \mathrm{P}^{(n)}_e \to 0$  as $n\to \infty$.
The capacity of this channel $C$ is defined as  the supremum of all achievable rates.

\subsection{Beamforming gain}

In this work, we specifically focus on the capacity effect of the channel with a large  number of antennas, which may be captured by the metric of beamforming gain. 
For the  capacity effect of the channel with high power, one might consider the metric of degrees-of-freedom that is beyond the scope of this work.
 
 In our setting, the channel capacity (and the beamforming gain) might depend on the antenna number $M$ and the channel coherence length $T_c$.  Intuitively, when the channel coherence length $T_c$ is sufficiently large, i.e., $T_c \gg M$,  the channel might be considered as a static channel, in which the capacity (and the beamforming gain) might be the same as that of an ideal case with perfect CSIT and CSIR.  However,  when $T_c$ is decreased to a relatively small number compared with $M$, e.g., when $T_c=1$ (the case with fast fading),  then the channel capacity (and the beamforming gain) might be decreased significantly.   
In order to study the interplay between the beamforming gain,  antenna number $M$, and channel coherence length $T_c$, in this work we introduce a new parameter:  
\[\alpha  \defeq \frac{\log T_c}{\log M},  \quad  \alpha \geq 0\] 
that is the ratio between the coherence length and the antenna number in a logarithmic scale. In our setting, channel coherence length $T_c$ can be rewritten by  $T_c \defeq M^{\alpha}$.
When $M$ is very large,  $\alpha = 0$ refers to a class of channels where the coherence length $T_c$ is finite, while $\alpha = 1$ refers to a class of channels where $T_c$ and $M$ are scaled similarly.

In our setting, the beamforming gain of the channel is defined as   \[ b(\alpha)  \defeq  \limsup_{  M\to \infty}   \frac{ C(\alpha, P, M)  }{ \log M}.\]  
Similarly to the definition of  generalized degrees-of-freedom (GDoF, see \cite{ETW:08}), 
the beamforming gain $b(\alpha)$ captures the capacity prelog factor  for a class of channels with a fixed $\alpha$, at the regime with a large number of antennas. 
 This approximation on the capacity is a middle step, or perhaps the first step, for understanding the channel capacity.
In this setting $b= 0$ means zero beamforming gain, while $b = 1$ denotes a full beamforming gain.
For the ideal case with perfect CSIT and  CSIR (e.g., through a genie-aided method) one might achieve a full beamforming gain. However, for this setting where CSIR and CSIT are acquired via downlink training and feedback, the beamforming gain is generally unknown so far.  In the following we seek to characterize the  beamforming gain of this setting.

\section{Main results \label{sec:result}}

This section provides the main results for  a MISO  channel with feedback defined in Section~\ref{sec:system}.
The proofs are shown in Section~\ref{sec:converse2g}, Section~\ref{sec:misoc},  and the appendices. 
Before showing the main results of this work, let us first revisit the \emph{ideal} case of MISO  channel with  perfect CSIT and  CSIR, and with a \emph{second} moment input constraint.
According to the previous works in \cite{JP:03, GV:97, Gallager:68, KAC:90, AC:91},  the channel capacity of this ideal case, denoted by $ \Ci$, is characterized  in the following closed form
\begin{align}
 \Ci  =   \max_{\bar{P} (\gamma): \int_{\gamma}\bar{P} (\gamma) f_{\gamma}(\gamma) d\gamma =P}  \int_{\gamma}   \log \bigl(1 +   \bar{P} (\gamma) \cdot \gamma \bigr) f_{\gamma} (\gamma) d \gamma \label{eq:idealc}  
\end{align}
 where $\gamma \defeq  \|\hvu_t \|^2$; \  $f_{\gamma} (\gamma)$ is the probability  density function of $\gamma$;  $\bar{P} (\gamma)$ is the power allocation  function; and the optimal solution  of $\bar{P} (\gamma)$ is based on a water-filling algorithm (cf.~\cite{JP:03, GV:97, Gallager:68, KAC:90, AC:91}).    
 When the antenna-number $M$ is large, the capacity expressed in \eqref{eq:idealc} tends to $\log(1+ PM)$, which is summarized in the following proposition. 
 \vspace{5pt}
 \begin{proposition}  [Ideal case] \label{thm:idealc}
For the ideal case of MISO  channel (cf.~\eqref{eq:misoy}) with  perfect CSIT and  CSIR, and with a  second moment input constraint (cf.~\eqref{eq:powerEve}),  
the channel capacity $\Ci$ is approximated by:    \[ \Ci = \log(1+ PM) + o(\log M), \]
and the corresponding beamforming gain $b (\alpha)$ is characterized by 
\[   b (\alpha) = 1 ,  \quad  \forall \alpha \geq 0. \]
\end{proposition}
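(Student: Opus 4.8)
The plan is to sandwich $\Ci$ between $\log(1+PM)-O(1)$ and $\log(1+PM)+o(1)$, from which both the stated capacity approximation and the value $b(\alpha)=1$ follow at once. The starting observation is that $\gamma \defeq \|\hvu_t\|^2 = \sum_{m=1}^{M} |\hvt[m]|^2$ is a sum of $M$ i.i.d.\ unit-mean exponential random variables (equivalently $\gamma \sim \tfrac12\Xc^2(2M)$), so that $\E[\gamma]=M$, $\mathrm{Var}(\gamma)=M$, and $\E[1/\gamma]=1/(M-1)$ for $M\ge 2$. These three moment facts are all the argument needs: the concentration of $\gamma$ about its mean $M$ is exactly what pins the capacity to $\log(1+PM)$.

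For the lower bound I would drop the optimization in \eqref{eq:idealc} and use the feasible constant allocation $\bar{P}(\gamma)\equiv P$ (which meets the average-power budget in \eqref{eq:idealc}), giving $\Ci \ge \E[\log(1+P\gamma)]$. Restricting the expectation to the event $\{\gamma \ge M/2\}$, using $\log(1+PM/2)\ge \log(1+PM)-1$ together with the Chebyshev estimate $\Pr(\gamma < M/2)\le \mathrm{Var}(\gamma)/(M/2)^2 = 4/M$, one obtains $\Ci \ge (\log(1+PM)-1)(1-4/M) = \log(1+PM)-O(1)$, since $\log(1+PM)/M \to 0$.

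For the upper bound I would invoke the water-filling form of the optimizer, $\bar{P}^\star(\gamma)=(\mu - 1/\gamma)^+$, cited in the statement. The main obstacle is that a naive application of Jensen's inequality to \eqref{eq:idealc} does not obviously return $\log(1+PM)$, because the optimizer could in principle pile power onto atypically large values of $\gamma$. This is resolved by exploiting the water-filling structure twice: first, $\bar{P}^\star(\gamma)\le \mu$ pointwise; second, the power constraint forces $P=\E[(\mu-1/\gamma)^+]\ge \mu - \E[1/\gamma]=\mu - 1/(M-1)$, hence $\mu \le P+1/(M-1)$. Jensen's inequality then gives $\Ci = \E[\log(1+\bar{P}^\star(\gamma)\gamma)] \le \log(1+\mu\,\E[\gamma]) = \log(1+\mu M) \le \log(1+PM + M/(M-1)) = \log(1+PM)+o(1)$.

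Combining the two bounds yields $\Ci = \log(1+PM) + O(1) = \log(1+PM) + o(\log M)$. Finally, for the beamforming gain I would note that in the ideal case perfect CSIT/CSIR removes any training cost, so the per-symbol capacity in \eqref{eq:idealc} --- and hence $C(\alpha,P,M)=\Ci$ --- is independent of the coherence length $T_c$, i.e.\ of $\alpha$. Therefore $b(\alpha) = \limsup_{M\to\infty} \Ci/\log M = \limsup_{M\to\infty}(\log(1+PM)+o(\log M))/\log M = 1$ for every $\alpha\ge 0$, using $\log(1+PM)/\log M \to 1$.
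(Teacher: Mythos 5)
Your proof is correct, but it takes a genuinely different route from the paper's on both sides of the sandwich. For the lower bound, the paper also starts from the constant allocation $\bar{P}(\gamma)\equiv P$, but it then evaluates $\E\bigl[\log(1+P\gamma)\bigr]$ via an exact chi-square log-moment estimate: writing $2\gamma\sim\Xc^2(2M)$, it proves (Lemma~\ref{lm:chi}, via the digamma function and a harmonic-series bound) that $\E[\log(2\gamma)]\geq\log\max\{2M-2,1\}$, which gives $\Ci\geq\log(1+(M-1)P)-1$; your Chebyshev argument replaces this special-function computation with elementary concentration of $\gamma$ around $M$, at the price of the harmless factor $(1-4/M)$ (and of only being meaningful for large $M$, which is all the asymptotic claim needs). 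Note that Lemma~\ref{lm:chi} is not throwaway machinery in the paper: it is reused in the achievability proof of Proposition~\ref{pro:rate}, which is one payoff of that route. For the upper bound, the paper never invokes the water-filling form of the optimizer: it splits $\log\bigl(1+\bar{P}(\gamma)\gamma\bigr)\leq\log\bigl(1+\bar{P}(\gamma)\bigr)+\log(1+\gamma)$ and applies Jensen separately to each term, yielding $\Ci\leq\log(1+P)+\log(1+M)$ uniformly over all feasible allocations; this sidesteps exactly the correlation issue you flagged (power piling onto atypically large $\gamma$) without needing to know the optimizer at all. Your argument instead leans on the cited water-filling characterization $\bar{P}^\star(\gamma)=(\mu-1/\gamma)^+$ together with the moment identity $\E[1/\gamma]=1/(M-1)$ to force $\mu\leq P+1/(M-1)$; this is legitimate (the paper itself asserts the optimizer is water-filling, with references) and in fact yields the slightly sharper estimate $\log(1+PM)+o(1)$ versus the paper's $\log(1+PM)+O(1)$, though both are far more accuracy than the claimed $o(\log M)$ requires. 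The final beamforming-gain step, $b(\alpha)=\lim_{M\to\infty}\Ci/\log M=1$ independently of $\alpha$, is identical in both proofs.
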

\vspace{5pt}
Proposition~\ref{thm:idealc} follows from the capacity expression in \eqref{eq:idealc} and the asymptotic analysis that is provided in Appendix~\ref{sec:idealc}.  Proposition~\ref{thm:idealc} reveals that the capacity of a MISO  channel with  perfect CSIT and  CSIR  will go to infinity as  the antenna-number $M$ grows to infinity, even with a finite  input constraint $P$. 
Proposition~\ref{thm:idealc} also reveals that full beamforming gain  $(b = 1)$ is achievable with  perfect CSIT and  CSIR, for any given channel coherence length $T_c \in \Zc^+$.

Let us now go back to the  MISO  channel with feedback defined in Section~\ref{sec:system}, where the transmitter and receiver have no prior knowledge of the channel state realizations, but the transmitter and receiver can acquire the CSIT/CSIR via  downlink training  and feedback.  
In this work, we specifically focus on the channel capacity effect of the system with a large  number of antennas, which may be captured by the metric of beamforming gain. 
The following  results summarize  the beamforming gain of  the channel under two input constraints, respectively.
\vspace{5pt}
\begin{theorem}  [Beamforming gain, second moment] \label{thm:MISOb2}
For the MISO  channel with feedback defined in Section~\ref{sec:system},  the  beamforming gain is bounded as 
\[    \min\{\alpha,1\}  \leq b (\alpha)    \leq   \min\{2\alpha,1\} ,    \quad   \quad   \forall \alpha \geq 0  \]
under the  second moment input constraint (cf.~\eqref{eq:powerEve}).
\end{theorem}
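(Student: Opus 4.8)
The plan is to prove the two inequalities separately and, since the fading is independent across blocks, to reduce each to an analysis of a single coherence block; let $\hvu$ denote the (constant) channel of a generic block. For the lower bound $b(\alpha)\ge\min\{\alpha,1\}$ I would exhibit an explicit two-phase \emph{train-then-beamform} scheme. In the first $d$ channel uses the transmitter sends orthogonal pilots from $d$ of its antennas; since the feedback link is noiseless with unit delay, by channel use $d+1$ the transmitter knows $\yv^{d}$ and hence a minimum-mean-square estimate of the corresponding $d$ coefficients. In the remaining $T_c-d$ uses it performs maximum-ratio transmission along that estimate, so the effective received power concentrates around $P\sum_{m=1}^{d}|[\hvu]_m|^2\approx Pd$ by the law of large numbers. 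Treating the estimate as perfect costs only a bounded rate penalty (a standard worst-case-noise/estimation argument), giving an achievable rate of at least $\tfrac{T_c-d}{T_c}\log(1+Pd)+o(\log M)$. Choosing $d=\min\{T_c,M\}/\log M$ makes the training fraction $d/T_c$ vanish while $\log(Pd)=(1+o(1))\log\min\{T_c,M\}$; since $\log T_c=\alpha\log M$ this yields prelog $\min\{\alpha,1\}$.

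For the converse I would combine two bounds. The bound $b(\alpha)\le 1$ is immediate from a genie argument: revealing all channel realizations to the transmitter can only increase capacity, and the resulting perfect-CSIT/CSIR channel is exactly the ideal case of Proposition~\ref{thm:idealc}, whose capacity is $\log(1+PM)+o(\log M)$, i.e.\ prelog $1$. The substantive bound is $b(\alpha)\le 2\alpha$. Working per block, I would write $\Imu(\Wme;\yv^{T_c})\le \hen(\yv^{T_c})-T_c\,\hen(\zv_1)$ and bound the output entropy by the Gaussian maximum-entropy inequality followed by Hadamard's inequality, obtaining $\hen(\yv^{T_c})\le T_c\log\!\bigl(\pi e\,(1+G/T_c)\bigr)$, where $G\defeq\sum_{t}\E\,|\hvu^{\T}\xvut|^2$ is the total received signal energy in the block. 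Hence the per-block rate is at most $\log(1+G/T_c)$, and it remains to show that the effective beamforming gain $G/(T_cP)$ cannot exceed $M^{2\alpha}=T_c^{2}$ up to constants, which turns $\log(1+G/T_c)$ into $2\alpha\log M+O(1)$.

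The heart of the converse, and the step I expect to be the main obstacle, is this bound on $G$. Writing $\E\,|\hvu^{\T}\xvut|^2=\E\bigl[\xvut^{\H} R_t\,\xvut\bigr]$ with $R_t\defeq\E[\hvu^{*}\hvu^{\T}\mid \Wme,\yv^{t-1}]$ the posterior second moment of the channel given the feedback and message available when $\xvut$ is chosen, the gain is controlled by the posterior-mean energy $\|\E[\hvu\mid \Wme,\yv^{t-1}]\|^2$. Because each channel use supplies only one scalar noisy observation of $\hvu$, the posterior covariance can lose at most unit trace per observation, so the learned energy over a block is at most $T_c$; combined with the power budget $\sum_t\E\|\xvut\|^2\le T_cP$ and a Cauchy--Schwarz step that decouples the (possibly spiky) instantaneous power $\|\xvut\|^2$ from the channel-knowledge term, this caps $G$ at order $T_c^{2}\cdot T_cP$, i.e.\ $G/(T_cP)\lesssim T_c^{2}=M^{2\alpha}$. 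The difficulty is that feedback makes $\xvut$ statistically dependent on $\hvu$ (the ``in-block memory''), so $R_t$ and $\xvut$ are correlated and the outputs are non-Gaussian; controlling $\sum_t\E\bigl[\|\xvut\|^2\,\|\E[\hvu\mid \Wme,\yv^{t-1}]\|^2\bigr]$ rigorously under the second-moment constraint is precisely where the extra factor of two relative to the $\alpha$ lower bound is incurred, leaving a gap that I do not expect this technique to close.
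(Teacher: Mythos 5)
Your achievability argument and your $b(\alpha)\le 1$ cap are essentially the paper's own proof: the paper's scheme (Theorem~\ref{thm:MISOlower}, Proposition~\ref{pro:rate}) is exactly train-then-beamform with training length $\Tt=\lceil \min\{M,T_c\}/\log\max\{4,\min\{M,T_c\}\}\rceil$, and the paper likewise invokes the ideal-CSI Proposition~\ref{thm:idealc} to cap $b$ at $1$. The genuine gap is in your converse step $b(\alpha)\le 2\alpha$: the key claim that the received energy $G\defeq\sum_t\E|\hvu^\T\xvut|^2$ in a block is at most of order $T_c^{2}\cdot T_cP$ is \emph{false} under a second-moment constraint --- no finite bound on $G$ exists at all. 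Feedback permits spiky power policies: with $T_c\ge 2$, send a pilot of power $P$ at $t=1$, so that $\|\hat{\hvu}_2\|^2$ is exponentially distributed with $\Pr(\|\hat{\hvu}_2\|^2>\lambda)=e^{-c\lambda}$ for $c=(P+1)/P$; at $t=2$, transmit $\xvut[2]=\sqrt{Q}\,\hat{\hvu}_2^{*}/\|\hat{\hvu}_2\|$ with $Q=Pe^{c\lambda}$ on the event $\{\|\hat{\hvu}_2\|^2>\lambda\}$ and send nothing otherwise. Then $\E\|\xvut[2]\|^2=Qe^{-c\lambda}=P$, so the average-power constraint holds, yet $\E|\hvut[2]^\T\xvut[2]|^2\ge Q\,\lambda\,\Pr(\|\hat{\hvu}_2\|^2>\lambda)=\lambda P$, which is unbounded in $\lambda$. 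Hence the chain ``rate $\le\log(1+G/T_c)$, then bound $G$'' collapses at its second link; the first inequality (Jensen applied unconditionally to the output power) is exactly too weak to see that such rare-event policies buy no rate. Your proposed repair --- Cauchy--Schwarz to decouple $\|\xvut\|^2$ from $\|\hat{\hvu}_t\|^2$ --- requires control of $\E\|\xvut\|^4$, which is precisely what the second-moment constraint does not provide; the paper runs exactly that Cauchy--Schwarz argument, but only under the \emph{fourth}-moment constraint (Theorem~\ref{thm:MISO}, Appendix~\ref{sec:converse}), where it in fact yields the stronger prelog $\min\{\alpha,1\}$, not $2\alpha$.

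The missing idea for the second-moment case is the paper's genie-aided enhancement (Theorem~\ref{thm:MISO2u}, Section~\ref{sec:converse2g}): reveal $\tilde{\yv}_t\defeq\|\hat{\hvu}_t\|+\tilde{\zv}_t$ to the receiver, bound $nR$ by $\sum_t\hen(\tilde{\yv}_t)+\sum_t\hen(\yv_t\,|\,\tilde{\yv}_t)$ plus vanishing terms, and then exploit $\|\hat{\hvu}_t\|^2\le(|\tilde{\yv}_t|+|\tilde{\zv}_t|)^2$ inside $\hen(\yv_t\,|\,\tilde{\yv}_t)$. Because one conditions on $\tilde{\yv}_t$, the logarithm splits the channel-knowledge factor $(1+|\tilde{\yv}_t|)^2$ from the power factor $1+(1+|\tilde{\zv}_t|)^2\|\xvut\|^2$ \emph{before} any expectation over the transmit policy is taken, so spiky policies gain nothing, and Jensen is applied to each factor separately. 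The factor of $2$ in $\min\{2\alpha,1\}$ is the price of this genie: it arises from the penalty terms $\hen(\tilde{\yv}_t)\approx\log\min\{M,T_c\}$ together with the term $2\log(1+\E|\tilde{\yv}_t|)\approx\log\min\{M,T_c\}$, and not from any quadratic-in-$T_c$ cap on the effective gain, which, as shown above, does not exist.
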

\vspace{5pt}
Theorem~\ref{thm:MISOb2} follows from a capacity upper bound in Theorem~\ref{thm:MISO2u} that  is shown in Section~\ref{sec:converse2g} (see also Remark~\ref{rmk:proofuMISOb2} in Section~\ref{sec:converse2g}), and a capacity lower bound in Theorem~\ref{thm:MISOlower} that is shown in Section~\ref{sec:misoc}.  
In Theorem~\ref{thm:MISOb2}, we have $b =0$ when $\alpha=0$. This implies that, given a finite second-moment input constraint and  a finite  channel coherence length (i.e., $\alpha=0$),  the capacity will \emph{not} go to infinity (i.e., $b =0$)  as  the antenna number $M$ grows to infinity.   This  result is in sharp contrast to the result of the perfect CSIT/CSIR case,  in which the capacity will be infinite when $M$ is taken to infinity, as shown in  Proposition~\ref{thm:idealc}.

\begin{figure}
\centering
\includegraphics[width=7cm]{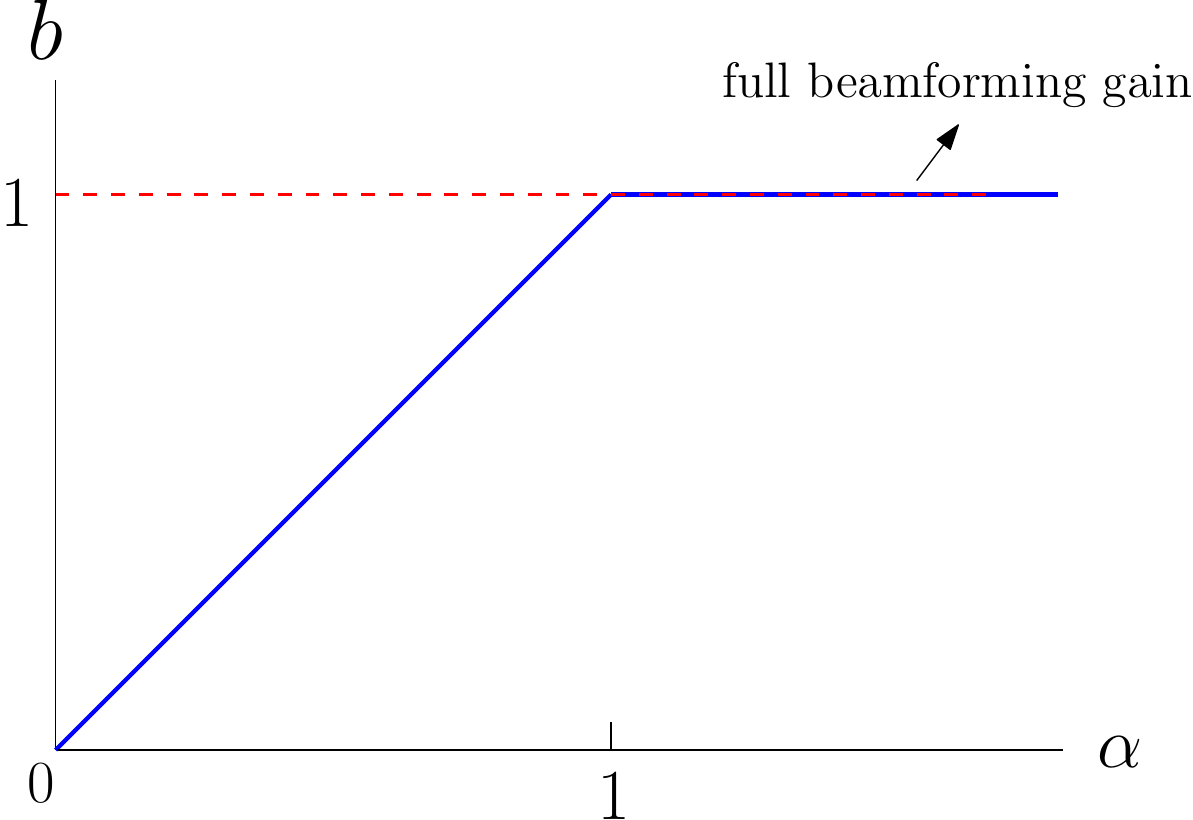}
\caption{Beamforming gain $b$ vs. $\alpha$ for the MISO channel with feedback, under the fourth moment input constraint.}
\label{fig:beamMassive}
\end{figure}

\vspace{8pt}
\begin{theorem}  [Beamforming gain, fourth moment]  \label{thm:MISOb4}
For the MISO  channel with feedback defined in Section~\ref{sec:system},  the  beamforming gain is characterized by  
\[   b (\alpha) = \min\{\alpha,1\}, \quad \quad   \forall \alpha \geq 0\]
under the fourth moment input constraint (cf.~\eqref{eq:power4m}). 
\end{theorem}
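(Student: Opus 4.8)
The plan is to establish the exact identity $b(\alpha)=\min\{\alpha,1\}$ by matching a lower bound and an upper bound, where only the upper bound requires work specific to \eqref{eq:power4m}. For the lower bound, I would reuse verbatim the achievability scheme already built for the second-moment bound in Theorem~\ref{thm:MISOlower}: that scheme allocates a bounded, non-peaky per-symbol power of order $P$ during both its training and beamforming phases, so its empirical fourth moment $\frac1n\sum_t\E[\|\xvu_t\|^4]$ is of order $P^2$ and hence satisfies \eqref{eq:power4m} for a suitable constant $\kappa$. Consequently the same rate, and therefore the beamforming gain $\min\{\alpha,1\}$, is achievable under the fourth-moment constraint, giving $b(\alpha)\ge\min\{\alpha,1\}$. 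The regime $\alpha\ge1$ of the upper bound is also immediate: since any input meeting \eqref{eq:power4m} also meets \eqref{eq:powerEve} with $P$ replaced by $\kappa P$ (because $\E[\|\xvu_t\|^2]\le\sqrt{\E[\|\xvu_t\|^4]}$), the capacity is at most that of the ideal perfect-CSIT/CSIR channel, which by Proposition~\ref{thm:idealc} has $b=1$. It therefore remains to prove the tight upper bound $b(\alpha)\le\alpha$ for $\alpha<1$.

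For this converse I would follow the template of Theorem~\ref{thm:MISO2u} but sharpen its key step using \eqref{eq:power4m}. Starting from Fano and the chain rule, $nR\le \Imu(\Wme;\yv^n)+n\epsilon_n=\sum_{t=1}^n\Imu(\Wme;\yv_t\mid\yv^{t-1})+n\epsilon_n$. Since the fading is independent across coherence blocks and each block's channel $\hvu$ is independent of all previously observed quantities, I would reduce the analysis to a single block and, within it, apply the Gaussian maximum-entropy bound together with $\hen(\yv_t\mid\yv^{t-1},\Wme,\hvu)=\hen(\zv_t)$ to obtain
\[
\Imu(\Wme;\yv^n)\ \le\ \sum_{t=1}^n\E\bigl[\log(1+V_t)\bigr],\qquad V_t\defeq\E\bigl[|\hvu^\T\xvu_t|^2\,\big|\,\yv^{t-1}\bigr].
\]
Introducing the MMSE estimate $\widehat{\hvu}_{t-1}\defeq\E[\hvu\mid\yv^{t-1},\Wme]$ formed from the information on which $\xvu_t$ depends, and using that the posterior error covariance is $\preceq I_M$ while $\xvu_t$ is deterministic given $(\Wme,\yv^{t-1})$, a pathwise Cauchy--Schwarz argument yields $\E[V_t]\le\E\bigl[(\|\widehat{\hvu}_{t-1}\|^2+1)\,\|\xvu_t\|^2\bigr]$.

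The crux is to control $\sum_t\E[\log(1+V_t)]$, and this is exactly where the fourth moment enters. Under \eqref{eq:powerEve} alone the transmitter may perform \emph{opportunistic peaky power control}, concentrating a per-block budget of order $T_c P$ into the rare symbols where the running estimate $\|\widehat{\hvu}_{t-1}\|^2$ is large; the resulting positive correlation between $\|\widehat{\hvu}_{t-1}\|^2$ and $\|\xvu_t\|^2$ inflates $\E[V_t]$ up to order $T_c^2 P$ and accounts for the loose exponent $2\alpha$ in Theorem~\ref{thm:MISOb2}. The fourth-moment constraint suppresses this mechanism, since by Cauchy--Schwarz
\[
\E\bigl[\|\widehat{\hvu}_{t-1}\|^2\,\|\xvu_t\|^2\bigr]\ \le\ \sqrt{\E\bigl[\|\widehat{\hvu}_{t-1}\|^4\bigr]}\,\sqrt{\E\bigl[\|\xvu_t\|^4\bigr]},
\]
which decouples the estimate from the now fourth-moment-limited input. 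The remaining ingredient is an estimation lemma showing that after $t-1\le T_c-1$ feedback rounds the estimate $\widehat{\hvu}_{t-1}$ is confined to the at most $(t-1)$-dimensional subspace probed so far and is conditionally Gaussian, so that $\E[\|\widehat{\hvu}_{t-1}\|^4]\le C\,T_c^2$ for an absolute constant $C$. Combining these with \eqref{eq:power4m}, and using the concavity of both $\log$ and $\sqrt{\cdot}$ to pass the time-average inside, would give $\frac1n\Imu(\Wme;\yv^n)\le\log\!\bigl(1+O(T_c)\bigr)+o(\log M)=\alpha\log M+o(\log M)$, i.e.\ $b(\alpha)\le\alpha$.

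I expect the estimation lemma to be the main obstacle. Establishing both the dimension confinement and the bound $\E[\|\widehat{\hvu}_{t-1}\|^4]\le CT_c^2$ is delicate because the probing directions $\xvu_1,\dots,\xvu_{t-1}$ are chosen adaptively through the noiseless feedback, so $\widehat{\hvu}_{t-1}$ is only \emph{conditionally} Gaussian and the relevant quantities remain entangled with the message and the entire past. Making this rigorous requires transforming the NP-hard multiletter problem into the computable relaxation used in the converse of Theorem~\ref{thm:MISO2u}; the fourth-moment Cauchy--Schwarz step is precisely what renders that relaxation tight enough to close the factor-of-two gap between the lower and upper bounds of Theorem~\ref{thm:MISOb2}.
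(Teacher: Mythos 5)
Your overall architecture coincides with the paper's: the lower bound reuses the scheme of Theorem~\ref{thm:MISOlower} with the transmit power rescaled by a constant (the paper uses $P_o=\kappa P/\sqrt{3}$; note that $\kappa$ in \eqref{eq:power4m} is \emph{given}, so you must shrink $P$ rather than ``choose a suitable $\kappa$'', though this does not affect the prelog), and your converse outline is exactly the paper's Theorem~\ref{thm:MISO} in Appendix~\ref{sec:converse}: Fano, the Gaussian maximum-entropy bound, the MMSE decomposition $\E[|\hvut^\T\xvut|^2]\le\E[(\|\hat{\hvut}\|^2+1)\|\xvut\|^2]$ via Lemmas~\ref{lm:densityh} and~\ref{lm:estimate1}, Cauchy--Schwarz against $\sqrt{\E[\|\xvut\|^4]}$, and the equal-allocation Jensen step (Lemma~\ref{lm:maxsqrt}). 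As in your outline, no genie-aided enhancement is needed here; the paper's single bound $\log\bigl(1+\min\{M+2,\sqrt{2}(T_c+1)\}\kappa P\bigr)$ in fact covers both regimes of $\alpha$ at once, so your separate appeal to Proposition~\ref{thm:idealc} for $\alpha\ge 1$ is valid but unnecessary.

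The genuine gap is the estimation lemma you defer: $\E[\|\hat{\hvut}\|^4]\le C\,T_c^2$. This is the paper's Lemma~\ref{lm:ExpCht} (specifically \eqref{eq:chbound3} and \eqref{eq:chbound5}), and it is the technical heart of the converse; your proposed justification --- that $\hat{\hvu}_{t}$ is ``confined to the at most $(t-1)$-dimensional subspace probed so far and is conditionally Gaussian'' --- would not deliver it. Because the probing directions $\xvut[1],\dots,\xvut[t-1]$ are chosen adaptively through feedback, $\hat{\hvut}$ is not Gaussian in any unconditional sense, and subspace confinement by itself bounds nothing: a vector living in a one-dimensional subspace can still have arbitrarily large fourth moment. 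What the paper actually does (Appendix~\ref{sec:lmExpCht}) is write $\hat{\hvu}_{t+1}=\hat{\hvu}_{t}+\hat{\hvu}_{t,t+1}$ and prove per-increment moment bounds: conditionally on $(\Wme,\yv^{t-1})$ the increment is proper Gaussian with $\E\bigl[\|\hat{\hvu}_{t,t+1}\|^2\mid\Wme,\yv^{t-1}\bigr]\le 1$ and $\E\bigl[\|\hat{\hvu}_{t,t+1}\|^4\mid\Wme,\yv^{t-1}\bigr]\le 3$ (Lemmas~\ref{lm:boundh1} and~\ref{lm:boundh41}, which themselves need $\mathbf{0}\preceq\Omega_t\preceq I_M$ from Lemma~\ref{lm:estimate1}), the odd cross terms vanish by properness, and hence the recursion $\E\|\hat{\hvu}_{t+1}\|^4\le\E\|\hat{\hvu}_{t}\|^4+4\,\E\|\hat{\hvu}_{t}\|^2+3$ (Lemma~\ref{lm:independent}) telescopes within each coherence block (the estimate resets at block boundaries) to $2[(t-1)\ \text{mod}\ T_c]^2+5[(t-1)\ \text{mod}\ T_c]$. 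Until you supply this increment/recursion argument, or an equivalent rigorous bound, your converse for $\alpha<1$ --- and hence the equality claimed in the theorem --- is not proved.
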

\vspace{5pt}

Theorem~\ref{thm:MISOb4} follows from a capacity upper bound  in Theorem~\ref{thm:MISO} that is shown in Appendix~\ref{sec:converse} (see also Remark~\ref{rmk:proofuMISOb4} in Appendix~\ref{sec:converse}), and a capacity lower bound in Theorem~\ref{thm:MISOlower} that is shown in Section~\ref{sec:misoc}.  Theorem~\ref{thm:MISOb4} reveals that,  given a finite fourth-moment input constraint and a finite channel coherence length, again, the capacity will \emph{not} go to infinity when  the antenna number $M$ is taken to infinity.

As illustrated in Fig.~\ref{fig:beamMassive},  a full beamforming gain, i.e., $b =1$, is achievable when $\alpha \geq 1$.  Intuitively, for the case with large $\alpha$, $\alpha \gg 1$,  the channel can be considered as a static channel, in which a full beamforming gain could be achieved easily via sufficiently long downlink training. 
Theorem~\ref{thm:MISOb4} reveals an interesting insight that, instead of a large $\alpha$,  $\alpha = 1$ is sufficient for achieving a full beamforming gain.  
From Theorem~\ref{thm:MISOb4} we note that, under a finite fourth-moment input constraint,  the channel capacity is asymptotically scaled as $\log ( 1+ \min \{ M,  T_c\})$, which reveals another interesting insight that using more transmit antennas than the coherence length does not yield a \emph{significant} gain in capacity for this setting with feedback, in an asymptotic sense.  This insight is similar to the insight  for the setting \emph{without feedback},  in which using more transmit antennas than the coherence length does not yield a gain in capacity  (cf.~\cite{MH:99}).
We conjecture that the beamforming gain for the setting with second moment input constraint (cf. Theorem~\ref{thm:MISOb2}) is the same as that for the setting with fourth moment input constraint (cf. Theorem~\ref{thm:MISOb4}), i.e., $b (\alpha) = \min\{\alpha,1\}$. If this is the case, then the above two insights also hold for the setting with  second moment input constraint.

\section{Converse: the case with second moment input constraint  \label{sec:converse2g} }

This section provides a capacity upper bound for the MISO  channel defined in Section~\ref{sec:system}, under a second moment input constraint (cf.~\eqref{eq:powerEve}).    The result  is summarized in the following theorem. 
\vspace{5pt}
\begin{theorem} [Upper bound, second moment]  \label{thm:MISO2u}
For the MISO  channel with feedback defined in Section~\ref{sec:system}, the  capacity is upper bounded by 
\[C \leq    2 \log \bigl(  4 +  3 \cdot \min\{ M,  T_c\}  \bigl)   +   \log\bigl(      1+    4 P    \bigr)     \]
under the second moment input constraint.  
\end{theorem}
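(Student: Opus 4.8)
The plan is to control $I(\Wme;\yv^n)$ through Fano's inequality and then to exploit the block-fading structure together with the elementary fact that, inside any coherence block, the $M$-dimensional channel vector can be probed only through $T_c$ scalar observations, so that the number of \emph{resolvable} channel directions is capped at $r\defeq\min\{M,T_c\}$. By Fano, every achievable rate obeys $nR\le I(\Wme;\yv^n)+n\epsilon_n$ with $\epsilon_n\to0$. Because at each time $\xvut$ is a deterministic function of $(\Wme,\yv^{t-1})$, the standard feedback (directed-information) inequality gives $I(\Wme;\yv^n)\le\sum_{t=1}^n I(\xvu_t;\yv_t\mid\yv^{t-1})$, and I would write each summand as $\hen(\yv_t\mid\yv^{t-1})-\hen(\yv_t\mid\xvu_t,\yv^{t-1})$. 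The subtracted conditional entropy is at least the noise floor $\log(\pi e)$, but more importantly it also absorbs the posterior uncertainty of the \emph{not-yet-learned} part of $\hvu_t$; consequently each summand is driven only by an \emph{effective} signal built from the component of the channel that past feedback has already revealed.

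Next I would group the $n$ summands into the $L=n/T_c$ coherence blocks and bound the $T_c$ terms belonging to block $\ell$ as a unit. The point is that, although feedback lets the transmitter beamform onto $\hvu_\ell$ and thereby inflate the per-use received power toward $M\,\E[\|\xvut\|^2]$, it can align only with those directions of $\hvu_\ell$ that have already been learned, and at most $r=\min\{M,T_c\}$ directions can be learned from $T_c$ observations. Accumulating the effective signal energy over the block across these $\le r$ directions, the target is to show that block $\ell$ contributes at most $T_c\bigl[\,2\log(4+3r)+\log(1+4P_{(\ell)})\,\bigr]$, where $P_{(\ell)}\defeq\frac{1}{T_c}\sum_{t\in\ell}\E[\|\xvut\|^2]$ is the average per-use input power in that block.

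Assembling the blocks is then routine. Summing the per-block bounds, the dimension term $2\log(4+3r)$ is already constant per channel use, while $\frac1n\sum_\ell T_c P_{(\ell)}=\frac1n\sum_{t=1}^n\E[\|\xvut\|^2]\le P$ by \eqref{eq:powerEve}. Applying Jensen's inequality to the concave map $P'\mapsto\log(1+4P')$ collapses the $L$ power terms into the single summand $\log(1+4P)$, which after dividing by $n$ and letting $n\to\infty$ yields $C\le 2\log(4+3\min\{M,T_c\})+\log(1+4P)$.

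The hard part is the per-block bound of the second paragraph, and it is delicate for two intertwined reasons. First, feedback makes the block input matrix $\mathbf{X}_\ell$ statistically dependent on the current channel $\hvu_\ell$ through the within-block outputs, so one cannot condition on $\mathbf{X}_\ell$ and treat the block output as $\cnormal(\mathbf{0},\mathbf{X}_\ell^\T\mathbf{X}_\ell^{*}+\Im)$; and the opposite extreme---a genie that simply hands $\hvu_\ell$ to the receiver---is far too generous, merely reproducing the ideal bound of Proposition~\ref{thm:idealc}. The relaxation must keep the realization-wise fact that the useful signal lies in an at-most-$r$-dimensional subspace while still reducing to a computable quantity. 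Second, under only the second-moment constraint \eqref{eq:powerEve} the input may be arbitrarily peaky, and it is exactly this uncontrolled peakiness that forces the looser $2\log(\cdot)$ factor here, in contrast to the sharper $\log(\cdot)$ factor obtainable under the fourth-moment constraint (cf.~Theorem~\ref{thm:MISO}); bounding the effective SNR without a kurtosis control is the crux of the converse.
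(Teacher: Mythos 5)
There is a genuine gap, and in fact two problems. First, your single-letterization step is invalid for this channel: the inequality $\Imu(\Wme;\yv^n)\le\sum_{t=1}^n \Imu(\xvu_t;\yv_t\mid\yv^{t-1})$ requires the Markov chain $\Wme \to (\xvu_t,\yv^{t-1})\to \yv_t$, which fails here precisely because of in-block memory. Given $(\xvu_t,\yv^{t-1})$, knowledge of $\Wme$ additionally determines $\xvu^{t-1}$, which refines the posterior of $\hvu_t$ and hence changes the law of $\yv_t$; so $\hen(\yv_t\mid\yv^{t-1},\Wme,\xvu_t)\le \hen(\yv_t\mid\yv^{t-1},\xvu_t)$ and your inequality actually points the wrong way (one gets $\Imu(\Wme;\yv^n)\ge\sum_t \Imu(\xvu_t;\yv_t\mid\yv^{t-1})$ in general). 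The paper avoids this by keeping $\Wme$ in the conditioning and lower-bounding $\hen(\yv_t\mid\yv^{t-1},\Wme,\hvut,\xvut)=\log(\pi e)$, never discarding the message from the conditional entropy term.

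Second, and more fundamentally, the per-block bound that you correctly identify as ``the crux'' is exactly the content of the theorem, and your proposal offers no mechanism to prove it. The heuristic that only $r=\min\{M,T_c\}$ directions can be learned does give $\E[\|\hat{\hvu}_t\|^2]\le\min\{M,T_c\}$ (the paper's Lemma~\ref{lm:ExpCht}), but the obstruction is the cross term $\E\bigl[\|\hat{\hvut}\|^2\cdot\|\xvut\|^2\bigr]$: under a second-moment-only constraint the transmitter can concentrate its power exactly on realizations where $\|\hat{\hvut}\|$ is large, and Cauchy--Schwarz is unavailable because $\E[\|\xvut\|^4]$ is uncontrolled (that route is what the paper uses for the fourth-moment case in Theorem~\ref{thm:MISO}). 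The paper's resolution is the genie-aided enhancement: reveal $\tilde{\yv}_t=\|\hat{\hvu}_t\|+\tilde{\zv}_t$ to the receiver, pay its entropy $\hen(\tilde{\yv}_t)\le\log\bigl(\pi e(1+\min\{M,T_c\})\bigr)$ as a penalty, and then, conditioning on $\tilde{\yv}_t$, use the triangle inequality $\|\hat{\hvut}\|^2\le(|\tilde{\yv}_t|+|\tilde{\zv}_t|)^2$ together with the independence of $\tilde{\zv}_t$ from $\xvut$ and Jensen's inequality to split $\hen(\yv_t\mid\tilde{\yv}_t)$ into $2\log(1+\E[|\tilde{\yv}_t|])+\log(1+4\E[\|\xvut\|^2])$. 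This is what decouples the learned channel gain from the input power, and it is where the factor $2$ in $2\log(4+3\min\{M,T_c\})$ actually comes from (one $\log$ from the genie's entropy, one from the conditional-entropy bound) --- not merely from ``peakiness'' in the abstract. Your proposal names the difficulty accurately but contains no substitute for this step, so the claimed per-block inequality is unsupported and the proof does not go through.
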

\vspace{5pt}

\vspace{5pt}
\begin{remark} [Proof of Theorem~\ref{thm:MISOb2}, upper bound]  \label{rmk:proofuMISOb2}
In this work, we specifically focus on the  beamforming gain of the channel. 
From the capacity upper bound in Theorem~\ref{thm:MISO2u}, we can easily derive  an upper bound on the  beamforming gain: 
\begin{align}
b(\alpha)  &\leq   \lim_{  M\to \infty}   \frac{  2 \log \bigl(  4 +  3 \cdot \min\{ M,  M^{\alpha}\}  \bigl)   +   \log\bigl(      1+    4 P    \bigr)  }{ \log M}     \non   \\
  &=  \min \{ 2  , 2 \alpha\}   \non
\end{align}
under the second moment input constraint, recalling that $T_c= M^{\alpha}$.  
On the other hand, the beamforming gain is also  upper bounded by $b(\alpha) \leq 1$, $\forall \alpha$,  even for the ideal case with perfect CSIT and CSIR (cf.~Proposition~\ref{thm:idealc}). Therefore,  the beamforming gain is upper bounded by  \[ b(\alpha) \leq \min \{1  ,  2\alpha\}\] under the second moment input constraint.  It then proves the beamforming gain  upper bound described in Theorem~\ref{thm:MISOb2}.
\end{remark}
\vspace{5pt}

In the rest of this section we will provide the proof of Theorem~\ref{thm:MISO2u}.  
In the proof we will use Lemmas~\ref{lm:densityh}-\ref{lm:ExpCht}  shown in this section (see later on).  
We will also use some notations  given as
\begin{align}
\hat{\hvu}_{t} &\defeq  \sum_{ i = T_c \lfloor \frac{t-1}{T_c}\rfloor  + 1 }^{t -1}    \frac{\Omega_{i} \xvut[i]^{*} (\yt[i] - \xvut[i]^\T \hat{\hvu}_{i} )  }{  \xvut[i]^\T \Omega_{i} \xvut[i]^{*}  +1 }  \quad \text{for} \quad  t \neq  \ell T_c +1,   \label{eq:mmse424}  \\ 
  \Omega_{t}  &  \defeq I_M - \sum_{ i = T_c \lfloor \frac{t-1}{T_c}\rfloor  + 1 }^{t-1}   \frac{\Omega_{i} \xvut[i]^{*}  \xvut[i]^\T  \Omega_{i}}{\xvut[i]^\T \Omega_{i} \xvut[i]^{*}  +1 }   \quad \text{for} \quad t  \neq  \ell T_c +1,   \label{eq:mmse9832}  \\
  \tilde{\hvut} & \defeq  \hvut -  \hat{\hvut}   \quad   \forall t  \in \{ 1, 2,  \cdots, n\}  \label{eq:mmse924}     
  \end{align}
and  $\hat{\hvu}_{\ell T_c +1} \defeq \zerou$, \ $ \Omega_{\ell T_c +1}  \defeq  I_M $,  $ \forall \ell  \in \{0, 1,  \cdots,L-1\}$.  
An additional notation that will be used is given as 
\begin{align}
\tilde{\yv}_t  \defeq   \| \hat{\hvu}_{t}\| +  \tilde{ \zt},   \quad t=1,2, \cdots,n   \label{eq:genie5678}  
\end{align}
where $\hat{\hvu}_{t}$ is defined in \eqref{eq:mmse424}, and $ \tilde{ \zt}   \sim  \Cc\Nc (  0 ,   1) $ is a random variable that is independent of  $\Wme, \{\hvut \}_t, \{\zt \}_t $ and $\{ \tilde{\zv_{\ell}} \}_{\ell \neq t}$.  Before describing the necessary lemmas and  the proof details, let us first provide a roadmap of our proof.

\vspace{5pt}

{\bf Roadmap and intuitions of the proof:} The challenge of our proof is mainly due to the correlation between the channel input and the previous  channel outputs (see \eqref{eq:mapx} and \eqref{eq:mapcodeword}), and the high dimension of the channel inputs (with a large number of antennas).
Note that the previous approaches used in the settings with linear schemes (cf.~\cite{SH:10,Caire+:10m}) cannot be directly applied in our setting, in which the coding scheme could be nonlinear.  The proof consists of the following steps. 
\begin{itemize}
\item  \emph{Step 1:   genie-aided channel enhancement}.  In this step, we  enhance the original setting by providing a genie-aided information, i.e., $\{ \tilde{\yv}_t \}_{t=1}^n$, that is defined in  \eqref{eq:genie5678},  to the receiver at the end of the whole communication.
\item \emph{Step 2:  bound the rate of the enhanced channel}.  In this step, we bound the rate of the enhanced channel as
\begin{align}
 nR \leq   \sum_{t=1}^n   \hen(\tilde{\yv}_t ) + \sum_{t=1}^n \hen(\yv_t \big|  \tilde{\yv}_{t} )  + o(\log M) . \non  
\end{align}
In the above bound,   the differential entropies  $\hen(\tilde{\yv}_t ), t=1,2, \cdots, n,$  correspond to the \emph{penalty} terms due to the  genie-aided  channel enhancement. 
\item  \emph{Step 3: bound the penalty terms}. In this step, we prove that 
\begin{align}
   \hen\bigl( \tilde{\yv}_t   \bigr)   \leq      \min\{\alpha,1\} \cdot \log M  + o(\log M), \quad  \forall t=1,2, \cdots, n  \non
\end{align}
by using the differential entropy maximizer (i.e., Gaussian distribution)  and Lemma~\ref{lm:ExpCht} (see below).  In this step, Lemma~\ref{lm:ExpCht} is used to bound the average power of $ \tilde{\yv}_t$.  Note that the \emph{penalty} terms  lead to a gap between our beamforming gain upper bound and inner bound, as shown in Theorem~\ref{thm:MISOb2}.
\item  \emph{Steps 4-6: bound the differential entropy $\hen( \yv_t \big|  \tilde{\yv}_{t} )$}.  The  difficulty of our proof lies in bounding the differential entropy $\hen( \yv_t \big|  \tilde{\yv}_{t} )$, which is involved with Steps 4-6.   

In  \emph{Step~4}, we prove that 
\begin{align}
  \hen( \yv_t \big|  \tilde{\yv}_{t} )  \leq    \E \Bigl[  \log\Bigl(\pi e \bigl(  1 + \E \bigl[ |\hvut^\T \xvut |^2   |  \tilde{\yv}_{t}   \bigr]    \bigl) \Bigr)   \Bigr]  \non 
\end{align}
by using the differential entropy maximizer.  
    
In  \emph{Step~5}, we provide an upper bound on the expectation term $\E \bigl[ |\hvut^\T \xvut |^2  |  \tilde{\yv}_{t} \bigr]$. 
The challenge of this step is due to the correlation between $\xvut$ and $\hvut$ (see \eqref{eq:mapx} and \eqref{eq:mapcodeword}).  
In this step, we prove that 
\begin{align}
 \E \Bigl[ |\hvut^\T \xvut |^2  \big|  \tilde{\yv}_{t}\Bigr] \leq   \E \Bigl[    \|\hat{\hvut}\|^2 \cdot \| \xvut \|^2   +  \| \xvut \|^2     \Big |  \tilde{\yv}_{t}  \Bigr]   \label{eq:3trmax366111} 
\end{align}
by using Lemma~\ref{lm:densityh} and  Lemma~\ref{lm:estimate1} (see below), where $\hat{\hvut}$ is defined in \eqref{eq:mmse424}.   Lemma~\ref{lm:densityh} corresponds to the minimum mean square error (MMSE) estimator.  It reveals that $\hat{\hvu}_{t}$ is the MMSE estimate of $\hvut$  given  $(\yv^{t-1}, \Wme)$, and that
$ \tilde{\hvut} \  \big| \  (\yv^{t-1}, \Wme)     \sim       \Cc\Nc (  \zerou ,   \Omega_{t}) $,
where $ \tilde{\hvut} \defeq  \hvut -  \hat{\hvut}$ and  $\Omega_{t}$ is defined in \eqref{eq:mmse9832}.  Lemma~\ref{lm:estimate1} reveals that $\mathbf{0} \preceq \Omega_{t} \preceq  I_M$,  $\forall t \in \{1,2, \cdots, n\}$.

In \emph{Step 6}, we provide a final bound on $\hen( \yv_t \big|  \tilde{\yv}_{t} )$.   Note that  $\xvut$ and $\hat{\hvut}$ are correlated.  Without conditioning on $ \tilde{\yv}_{t}$ (a genie-aided information),  it is challenging to derive a tight bound on the expectation term  $\E \bigl[    \|\hat{\hvut}\|^2 \cdot \| \xvut \|^2  \bigr]$ in \eqref{eq:3trmax366111}.
  In this  step we take  the benefit of genie-aided channel enhancement ---  which leads to a condition $\tilde{\yv}_{t}$ in the expectation term  $\E \bigl[    \|\hat{\hvut}\|^2 \cdot \| \xvut \|^2  \bigr]$ in \eqref{eq:3trmax366111} --- and prove that 
\[  \hen( \yv_t \big|  \tilde{\yv}_{t} )   \leq      \min\{\alpha,1\} \cdot \log M    + \log\bigl(      1+   4 \cdot \E \bigl[ \| \xvut \|^2   \bigr]   \bigr)  + o(\log M)  . \]  
 \item  \emph{Step~7:  derive  a final capacity upper bound}. In the final step, we combine the previous steps and derive  a  capacity upper bound with  the optimal power allocation: 
 \begin{align}
 nR   \leq   n \cdot \min\{2 \alpha, 2\}  \cdot \log M    +  n \cdot o(\log M).    \non   
\end{align}
\end{itemize}

\vspace{5pt}

The lemmas that will be used in our proofs are provided as follows.  The first lemma corresponds to the MMSE estimator.  This lemma is the extension of the well-known result of MMSE estimator (see, for example, \cite[Chapter 15.8]{Kay:93}).

\begin{lemma}  [MMSE]\label{lm:densityh}
Consider independent complex Gaussian random vectors    $\zvut \in \Cc^{N\times 1} \sim \Cc\Nc ( \zerou ,  I_N)$, $t=1,2,\cdots,T$,  and  $\uvu \in \Cc^{M\times 1}\sim \Cc\Nc ( \hat{\uvu}_{1} ,    \Omega_{1})$, for  some fixed $\hat{\uvu}_1$ and Hermitian positive semidefinite $ \Omega_{1}$.   
Let \[\yvut=\Am_t\uvu+ \zvut,  \quad  t=1,2,\cdots,T,\] where $\Am_t\in \Cc^{N\times M}$ is a deterministic function of $(\yvu^{t-1}, w)$ and
 $w$ is a  fixed parameter. 
Then, the conditional density of $\uvu$ given  $(\yvu^{t-1}, w)$ is 
\[  \uvu \big| (\yvu^{t-1}, w)   \  \sim  \  \Cc\Nc ( \hat{\uvu}_{t},   \Omega_{t})\]
where  
 \begin{align}
 \hat{\uvu}_{t} & =  \hat{\uvu}_{1} +  \sum_{ i =  1 }^{t -1} \Omega_{i} \Am^\H_{i}( \Am_{i}\Omega_{i}\Am^\H_{i}  + I_N )^{-1} (\yvut[i]  - \Am_{i} \hat{\uvu}_{i} )  \label{eq:mmseuv11}  \\ 
  \Omega_{t}  &  =   \Omega_{1} - \sum_{ i =  1 }^{t-1} \Omega_{i} \Am^\H_{i} ( \Am_{i}\Omega_{i}\Am^\H_{i}  + I_N )^{-1}  \Am_{i}\Omega_{i} \label{eq:mmseuv22}  
\end{align}
 for $t = 2,3, \cdots, T$.
Furthermore,   $\hat{\uvu}_{t}$ and $\vvu_t \defeq \uvu  - \hat{\uvu}_{t}$ are conditionally independent given  $(\yvu^{t-2}, w)$,  and we have 
\[  \vvu_t \big| (\yvu^{t-1}, w)   \  \sim  \ \Cc\Nc ( \zerou,   \Omega_{t}) . \]
\end{lemma}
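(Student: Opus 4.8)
The plan is to prove the lemma by induction on $t$, reducing the claim at each step to a single complex Gaussian Bayesian (measurement) update. The base case $t=1$ is immediate, since $\yvu^{0}$ is empty and conditioning on $(\yvu^{0},w)=w$ leaves the prior $\uvu\sim\Cc\Nc(\hat{\uvu}_{1},\Omega_{1})$ untouched. For the inductive step I would assume $\uvu\mid(\yvu^{t-1},w)\sim\Cc\Nc(\hat{\uvu}_{t},\Omega_{t})$ and determine the conditional law of $\uvu$ after the arrival of $\yvu_{t}=\Am_{t}\uvu+\zvu_{t}$.

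The step I expect to be the main obstacle is the data-dependence of $\Am_{t}$: because $\Am_{t}$ is a function of the past outputs $(\yvu^{t-1},w)$, the system is not a fixed linear-Gaussian model, so the textbook MMSE/Kalman recursion does not apply verbatim. The resolution is that, once we condition on $(\yvu^{t-1},w)$, the matrix $\Am_{t}$ becomes a deterministic (constant) matrix. Moreover, since $\zvu_{t}$ is independent of $(\uvu,\zvu_{1},\dots,\zvu_{t-1},w)$ and $\yvu^{t-1}$ is a measurable function of those variables, $\zvu_{t}$ remains $\Cc\Nc(\zerou,I_{N})$ and conditionally independent of $\uvu$ given $(\yvu^{t-1},w)$. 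Hence, conditionally on $(\yvu^{t-1},w)$, the pair $(\uvu,\yvu_{t})$ is jointly proper complex Gaussian with means $\hat{\uvu}_{t}$ and $\Am_{t}\hat{\uvu}_{t}$, cross-covariance $\Omega_{t}\Am_{t}^{\H}$, and $\mathrm{Cov}(\yvu_{t})=\Am_{t}\Omega_{t}\Am_{t}^{\H}+I_{N}$, the latter being positive definite so that its inverse always exists.

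Applying the conditioning formula for jointly proper complex Gaussian vectors then yields $\uvu\mid(\yvu^{t},w)\sim\Cc\Nc(\hat{\uvu}_{t+1},\Omega_{t+1})$ with
\[ \hat{\uvu}_{t+1}=\hat{\uvu}_{t}+\Omega_{t}\Am_{t}^{\H}(\Am_{t}\Omega_{t}\Am_{t}^{\H}+I_{N})^{-1}(\yvu_{t}-\Am_{t}\hat{\uvu}_{t}), \]
\[ \Omega_{t+1}=\Omega_{t}-\Omega_{t}\Am_{t}^{\H}(\Am_{t}\Omega_{t}\Am_{t}^{\H}+I_{N})^{-1}\Am_{t}\Omega_{t}, \]
which are precisely the cumulative sums \eqref{eq:mmseuv11}--\eqref{eq:mmseuv22} generated by iterating this one-step update from the prior; this closes the induction. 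The distributional claim $\vvu_{t}\mid(\yvu^{t-1},w)\sim\Cc\Nc(\zerou,\Omega_{t})$ then follows at once, because $\hat{\uvu}_{t}$ is deterministic given $(\yvu^{t-1},w)$ and $\vvu_{t}=\uvu-\hat{\uvu}_{t}$ is an affine image of $\uvu\mid(\yvu^{t-1},w)\sim\Cc\Nc(\hat{\uvu}_{t},\Omega_{t})$.

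Finally, for the conditional independence of $\hat{\uvu}_{t}$ and $\vvu_{t}$ given $(\yvu^{t-2},w)$, I would invoke the orthogonality principle. Conditioned on $(\yvu^{t-2},w)$, the quantities $\hat{\uvu}_{t-1},\Omega_{t-1},\Am_{t-1}$ are deterministic, so by the one-step update $\hat{\uvu}_{t}$ is an affine function of the new observation $\yvu_{t-1}$, while $\vvu_{t}=\uvu-\hat{\uvu}_{t}$ is the associated MMSE error. In the jointly Gaussian setting this error is uncorrelated with $\yvu_{t-1}$ and hence, by Gaussianity, independent of $\yvu_{t-1}$; it is therefore independent of the function $\hat{\uvu}_{t}$ of it, giving the asserted conditional independence and completing the proof.
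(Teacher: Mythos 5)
Your proposal is correct and follows essentially the same route as the paper: the paper likewise handles the data-dependent $\Am_t$ by conditioning on $(\yvu^{t-1},w)$ so that $\Am_t$ becomes deterministic and $\zvu_t$ stays conditionally $\Cc\Nc(\zerou,I_N)$ and independent of $\uvu$, then applies the one-step jointly-proper-Gaussian conditioning update (proved inline via a decorrelating affine transformation for $t=2,3$, with the general case declared analogous), and establishes the independence of $\hat{\uvu}_t$ and $\vvu_t$ from vanishing cross-covariance plus joint properness. Your explicit induction merely formalizes the paper's ``$t=2$, $t=3$, and similarly for general $t$'' structure, and your orthogonality-principle argument for the conditional independence is the same computation.
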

\vspace{10pt}
 \begin{proof}
 The proof is shown in Appendix~\ref{sec:lmdensityh}.  
 \end{proof} 
\vspace{10pt}

\begin{remark} 
In our setting, we consider the case of $\yv_t =  \xvu_t^\T  \hvu_t +   \zv_t$,  where $\xvu_t$ is a deterministic function of $(\yv^{t-1}, \Wme)$ given the encoding maps in \eqref{eq:mapx}.  Lemma~\ref{lm:densityh} reveals that $ \hat{\hvu}_{t}$ is the MMSE estimate of  $\hvu_t$ given  $(\yv^{t-1}, \Wme)$ and  $\hvu_t  | (\yv^{t-1}, \Wme)\sim \Cc\Nc ( \hat{\hvu}_{t},   \Omega_{t})$, where $\hat{\hvu}_{t}$ and $\Omega_{t}$ are defined in \eqref{eq:mmse424} and \eqref{eq:mmse9832} in our setting.
\end{remark}

\vspace{10pt}

\begin{lemma}   \label{lm:estimate1}
Consider any  vector $\evu_i \in \Cc^{M \times 1 }$, $i \in \Zc $, and let
\begin{align}
  \Km_{t}  &  \defeq I_M - \sum_{ i =  1 }^{t-1}   \frac{\Km_{i} \evu_i^{*}  \evu^\T_i  \Km_{i}}{\evu^\T_i \Km_{i} \evu_i^{*}  +1 },    \quad    t =2, 3, 4, \cdots      \label{eq:estm3255}
  \end{align}
and $\Km_{1}    \defeq I_M $, then we have  
\[ \mathbf{0} \preceq \Km_{t}  \preceq  I_M,  \quad \forall t \in \{1,2,3, \cdots \}.\] 
\end{lemma}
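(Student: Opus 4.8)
\emph{Proof plan.} The plan is to read a one-step recursion off the definition and then establish the two semidefinite inequalities by induction on $t$. Setting $\Delta_i \defeq \frac{\Km_i \evu_i^* \evu_i^\T \Km_i}{\evu_i^\T \Km_i \evu_i^* + 1}$, the definition \eqref{eq:estm3255} reads $\Km_t = I_M - \sum_{i=1}^{t-1}\Delta_i$, so subtracting consecutive partial sums gives the telescoping identity
\[
\Km_{t+1} = \Km_t - \Delta_t = \Km_t - \frac{\Km_t \evu_t^* \evu_t^\T \Km_t}{\evu_t^\T \Km_t \evu_t^* + 1}, \quad t\geq 1,
\]
with $\Km_1 = I_M$. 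The key structural observation is that, whenever $\Km_i \succeq \mathbf{0}$, the numerator factors as $\Km_i \evu_i^* \evu_i^\T \Km_i = (\Km_i \evu_i^*)(\Km_i \evu_i^*)^\H$ (using that $\Km_i$ is Hermitian and $(\evu_i^*)^\H = \evu_i^\T$), so each $\Delta_i$ is a rank-one Hermitian positive semidefinite matrix; moreover the denominator obeys $\evu_i^\T \Km_i \evu_i^* + 1 \geq 1 > 0$, so no degeneracy can occur.

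\emph{Lower bound.} First I would establish $\Km_t \succeq \mathbf{0}$ by induction. The base case $\Km_1 = I_M \succeq \mathbf{0}$ is immediate. For the inductive step, assume $\Km_t \succeq \mathbf{0}$ and set $\vv \defeq \Km_t^{1/2} \evu_t^*$, where $\Km_t^{1/2}$ is the Hermitian positive semidefinite square root. Substituting into the recursion and factoring out $\Km_t^{1/2}$ on both sides yields the congruence
\[
\Km_{t+1} = \Km_t^{1/2}\Bigl( I_M - \frac{\vv \vv^\H}{\vv^\H \vv + 1}\Bigr)\Km_t^{1/2}.
\]
The inner matrix has eigenvalue $1$ on $\{\vv\}^\perp$ and eigenvalue $1 - \frac{\|\vv\|^2}{\|\vv\|^2+1} = \frac{1}{\|\vv\|^2+1} > 0$ along $\vv$, hence is positive semidefinite; since congruence by $\Km_t^{1/2}$ preserves positive semidefiniteness, $\Km_{t+1} \succeq \mathbf{0}$, closing the induction.

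\emph{Upper bound.} Given the lower bound, the upper bound needs no further induction: for every $i$ we now have $\Km_i \succeq \mathbf{0}$, hence each $\Delta_i \succeq \mathbf{0}$ by the structural observation above, and therefore $I_M - \Km_t = \sum_{i=1}^{t-1}\Delta_i \succeq \mathbf{0}$, i.e.\ $\Km_t \preceq I_M$. Combining the two parts gives $\mathbf{0}\preceq \Km_t \preceq I_M$ for all $t$.

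\emph{Main obstacle and an alternative route.} The only delicate point is the lower bound, specifically keeping $\Km_{t+1}$ positive semidefinite after subtracting the rank-one correction; the square-root congruence handles this cleanly, and degenerate cases (e.g.\ $\Km_t \evu_t^* = \mathbf{0}$) are automatically fine since then $\Delta_t = \mathbf{0}$. An alternative that bypasses the computation is to recognize $\Km_t$ as an MMSE error covariance: applying Lemma~\ref{lm:densityh} with $\uvu \sim \Cc\Nc(\zerou, I_M)$ and the scalar observation model $\yvut[i] = \evu_i^\T \uvu + \zvut[i]$ (so that $\Am_i = \evu_i^\T$, a constant, hence trivially a deterministic function of the past) identifies $\Km_t$ with the conditional covariance $\Omega_t$. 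Being a covariance matrix, $\Km_t$ is then Hermitian positive semidefinite by construction, and the upper bound again follows from the telescoping representation $\Km_t = I_M - \sum_{i<t}\Delta_i$.
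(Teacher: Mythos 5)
Your main argument is correct and is essentially the paper's own proof: both proceed by induction on $t$, factor through the Hermitian square root $\Km_t^{1/2}$, and obtain the upper bound by writing $I_M - \Km_t$ as a sum of positive semidefinite rank-one terms; the only difference is that where you diagonalize the inner matrix $I_M - \vv\vv^\H/(\|\vv\|^2+1)$ and invoke congruence invariance, the paper checks nonnegativity of the same quadratic form $\xvu^\H \Km_{t+1}\xvu$ directly via Cauchy--Schwarz, which amounts to the identical computation. Your sketched alternative --- identifying $\Km_t$ with the MMSE error covariance $\Omega_t$ of Lemma~\ref{lm:densityh} under the constant scalar observation model $\Am_i = \evu_i^\T$, so that positive semidefiniteness is automatic for a covariance matrix --- is a genuinely different and valid route that the paper does not take, at the cost of invoking heavier machinery for what is otherwise an elementary matrix fact.
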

\vspace{5pt}
\begin{proof}
 The proof is shown in Appendix~\ref{sec:estimate1}.
 \end{proof}

 \vspace{5pt}
\begin{lemma}   \label{lm:max}
The solution for the following maximization problem 
\begin{align}
\text{maximize} \quad       &\sum_{t=1}^n    \log(1 +    c s_t )  \non\\
 \text{subject to} \quad  &  \sum_{t=1}^n  s_t \leq  m  \non  \\
        &   s_t \geq 0 ,  \quad  t =1,2,\cdots,n  \non
\end{align}
is  $ s^{\star}_1 =  s^{\star}_2 = \cdots =  s^{\star}_n = m/n$, for  constants $m >0 $ and $c > 0$.
\end{lemma}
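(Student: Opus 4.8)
The plan is to treat this as a concave-maximization problem over a convex feasible region and to exploit its symmetry; no heavy machinery is needed. First I would observe that the objective $f(s_1,\dots,s_n) = \sum_{t=1}^n \log(1+c s_t)$ is strictly increasing in each coordinate, since $c>0$. Consequently, at any maximizer the budget constraint must be active, i.e. $\sum_{t=1}^n s_t = m$: if instead $\sum_t s_t < m$ at a feasible point, one could increase a single $s_t$ by a small positive amount while remaining feasible and strictly increase $f$, contradicting optimality. This reduces the problem to maximizing $f$ over the simplex $\{(s_1,\dots,s_n): s_t\ge 0,\ \sum_{t=1}^n s_t = m\}$.

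Next I would invoke concavity. The scalar map $g(x)=\log(1+cx)$ is strictly concave on $[0,\infty)$ because its second derivative $-c^2/(1+cx)^2$ is strictly negative. Applying Jensen's inequality to $g$ and using the now-tight budget $\sum_{t=1}^n s_t = m$ gives
\[
\frac{1}{n}\sum_{t=1}^n \log(1+c s_t) \;\le\; \log\Bigl(1 + \frac{c}{n}\sum_{t=1}^n s_t\Bigr) \;=\; \log\Bigl(1 + \frac{cm}{n}\Bigr),
\]
so that $\sum_{t=1}^n \log(1+c s_t) \le n\log(1+cm/n)$ for every feasible point. The point $s_1^\star=\cdots=s_n^\star=m/n$ is feasible and attains this bound with equality, hence it is optimal. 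Strict concavity of $g$ ensures that equality in Jensen forces all arguments equal, which yields the uniqueness claimed in the statement.

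Alternatively, the same conclusion follows from the Karush--Kuhn--Tucker conditions: stationarity of the Lagrangian gives $c/(1+c s_t)=\lambda$ for every $t$, an equation whose right-hand side is independent of $t$, so all $s_t$ must coincide, and the multipliers on the nonnegativity constraints vanish because $m/n>0$. The only point requiring any care — and the closest thing to an obstacle — is verifying that the nonnegativity constraints are inactive and the budget is saturated at the optimum; both follow immediately from the strict monotonicity of $f$ noted above. Since the argument is entirely standard, I anticipate no genuine difficulty.
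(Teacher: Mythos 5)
Your proposal is correct and follows essentially the same route as the paper: Jensen's inequality applied to the concave function $\log(1+cx)$, combined with the budget constraint, yields the bound $n\log(1+cm/n)$, attained at the equal-allocation point. The extra observations you add (activeness of the budget constraint, uniqueness via strict concavity, and the KKT alternative) are fine but not needed beyond what the paper's one-paragraph argument already establishes.
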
 
 \begin{proof}
 The proof follows directly from Jensen's inequality.   Note that $ f(x) = \log(1 + c x) $ is a concave function.  By using Jensen's inequality, we have 
\[ \frac{1}{n} \sum_{t=1}^n    \log(1 +    c s_t )   \leq     \log \Bigl(1 +      \frac{c}{n} \sum_{t=1}^n  s_t \Bigr)     \]
 which, together with the constraint of $ \sum_{t=1}^n  s_t \leq  m $, gives the bound $ \sum_{t=1}^n    \log(1 +     cs_t )   \leq  n \log (1 +      \frac{c m}{n} ) $. The equality holds when $s^{\star}_1 =  s^{\star}_2 = \cdots =  s^{\star}_n = m/n$.
\end{proof}

 \vspace{5pt}
\begin{lemma}   \label{lm:ExpCht}
For   $\hat{\hvu}_t$ defined   in \eqref{eq:mmse424}, 
we have 
\begin{align}
\E   \bigl[  \|\hat{\hvu}_t\|^2  \bigr]  & \leq     [ (t -1) \  \text{mod} \  T_c]   \label{eq:chbound1}\\ 
\E   \bigl[  \|\hat{\hvu}_t\|^2  \bigr]   &\leq    M     \label{eq:chbound2}  \\ 
\E   \bigl[  \|\hat{\hvu}_t\|^4  \bigr]   &\leq      2 [(t-1) \ \text{mod} \  T_c]^2 +5 [(t-1) \  \text{mod} \  T_c]      \label{eq:chbound3}    \\ 
\E   \bigl[  \|\hat{\hvu}_t\|^4  \bigr]   &\leq     M^2 +2M    \label{eq:chbound4}     \\ 
 \E \bigl[  ( \|\hat{\hvut}\|^2 +1)^2\bigr]  & \leq   \min\bigl\{M^2 + 4M +1 ,  \  \     2 [ (t-1) \  \text{mod} \  T_c]^2 + 7 [(t -1) \ \text{mod} \  T_c] +1    \bigr\}  \label{eq:chbound5}  
\end{align}
for $t=1,2, \cdots, n$. $[ t \  \text{mod} \  T_c]$ denotes a modulo operation.
\end{lemma}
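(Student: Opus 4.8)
The plan is to avoid the closed-form sum in \eqref{eq:mmse424} and instead work with the recursive (innovations) form of the MMSE filter. Fix the coherence block containing time $t$, let $s$ be its first index, and set $\tau \defeq [(t-1)\ \text{mod}\ T_c]$, the number of in-block observations feeding $\hat{\hvu}_t$. By Lemma~\ref{lm:densityh}, $\hvu_i \,|\, (\yv^{i-1},\Wme) \sim \Cc\Nc(\hat{\hvu}_i, \Omega_i)$, so the innovation $\nu_i \defeq \yv_i - \xvu_i^\T \hat{\hvu}_i = \xvu_i^\T(\hvu_i - \hat{\hvu}_i) + \zv_i$ satisfies $\nu_i \,|\, (\yv^{i-1},\Wme) \sim \Cc\Nc(0, \sigma_i^2)$ with $\sigma_i^2 \defeq \xvu_i^\T \Omega_i \xvu_i^* + 1$, and the estimator in \eqref{eq:mmse424} is exactly the accumulation $\hat{\hvu}_{i+1} = \hat{\hvu}_i + c_i \nu_i$, where $c_i \defeq \Omega_i \xvu_i^* / \sigma_i^2$. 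The one structural fact I would borrow from Lemma~\ref{lm:estimate1} is that $\mathbf{0} \preceq \Omega_i \preceq I_M$ implies $\Omega_i^2 \preceq \Omega_i$, so the scalar per-step gain $\delta_i \defeq \|c_i\|^2 \sigma_i^2 = (\xvu_i^\T \Omega_i^2 \xvu_i^*)/\sigma_i^2$ obeys $0 \le \delta_i < 1$. This ``each observation adds at most one unit of estimate energy'' is the engine for all five inequalities.

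For the second moments I would run a conditional-mean recursion on $g_i \defeq \|\hat{\hvu}_i\|^2$. Since $\nu_i$ is conditionally zero-mean, $\E[g_{i+1}\,|\,\yv^{i-1},\Wme] = g_i + \delta_i$; starting from $g_s = 0$ and iterating over the $\tau$ in-block steps with $\delta_i < 1$ gives $\E[\|\hat{\hvu}_t\|^2] \le \tau$, which is \eqref{eq:chbound1}. For \eqref{eq:chbound2} I would instead use that $\hat{\hvu}_t = \E[\hvu_t \,|\, \yv^{t-1},\Wme]$ is a conditional mean, so conditional Jensen yields $\|\hat{\hvu}_t\|^2 \le \E[\|\hvu_t\|^2 \,|\, \yv^{t-1},\Wme]$ pointwise and hence $\E[\|\hat{\hvu}_t\|^2] \le \E[\|\hvu_t\|^2] = M$.

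The fourth-moment bound \eqref{eq:chbound3} is the crux, and I expect the recursion for $G_i \defeq g_i^2$ to be the main obstacle. Writing $g_{i+1} = g_i + 2\Re(\hat{\hvu}_i^\H c_i \nu_i) + \|c_i\|^2 |\nu_i|^2$, squaring, and taking the conditional expectation using the proper-Gaussian moments $\E[\nu_i\,|\,\cdot]=0$, $\E[|\nu_i|^2\,|\,\cdot]=\sigma_i^2$, $\E[\nu_i^2\,|\,\cdot]=0$, $\E[|\nu_i|^4\,|\,\cdot]=2\sigma_i^4$ annihilates the two odd cross terms and leaves $\E[G_{i+1}\,|\,\cdot] = G_i + 2|\hat{\hvu}_i^\H c_i|^2\sigma_i^2 + 2 g_i \delta_i + 2\delta_i^2$. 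The delicate step is the surviving cross term $|\hat{\hvu}_i^\H c_i|^2$, which I would control by Cauchy--Schwarz, $|\hat{\hvu}_i^\H c_i|^2 \le g_i \|c_i\|^2$, converting it into $g_i \delta_i$; with $\delta_i < 1$ this collapses to the clean recursion $\E[G_{i+1}\,|\,\cdot] \le G_i + 4 g_i \delta_i + 2\delta_i \le G_i + 4 g_i + 2$. Taking expectations, substituting the in-block estimate $\E[g_i] \le (i-s)$, and summing the resulting arithmetic series over the $\tau$ steps yields a bound of order $2\tau^2 + O(\tau)$, in particular \eqref{eq:chbound3}. For the $M$-side bound \eqref{eq:chbound4} I would again use the conditional-mean route: $\|\hat{\hvu}_t\|^4 \le (\E[\|\hvu_t\|^2\,|\,\cdot])^2$ pointwise, and the variance-reduction inequality $\E[(\E[X\,|\,\cdot])^2] \le \E[X^2]$ applied to $X = \|\hvu_t\|^2$ (a sum of $M$ i.i.d. unit-mean exponentials, so $\E[X^2] = M^2 + M$) gives $\E[\|\hat{\hvu}_t\|^4] \le M^2 + M \le M^2 + 2M$.

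Finally, \eqref{eq:chbound5} requires no new argument: expanding $\E[(\|\hat{\hvu}_t\|^2 + 1)^2] = \E[\|\hat{\hvu}_t\|^4] + 2\E[\|\hat{\hvu}_t\|^2] + 1$ and inserting the two $\tau$-bounds \eqref{eq:chbound1} and \eqref{eq:chbound3} gives the $2\tau^2 + 7\tau + 1$ branch, while inserting the two $M$-bounds \eqref{eq:chbound2} and \eqref{eq:chbound4} gives the $M^2 + 4M + 1$ branch; retaining the smaller one produces the stated minimum. The only genuinely tricky bookkeeping is the complex-Gaussian fourth moment together with the Cauchy--Schwarz cross term in the recursion for $G_i$; the rest is a telescoping recursion and an arithmetic sum.
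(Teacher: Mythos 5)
Your proposal is correct, and its engine is in substance the same as the paper's: the paper's Appendix~\ref{sec:lmExpCht} also works with the innovations form $\hat{\hvu}_{i+1}=\hat{\hvu}_i+\hat{\hvu}_{i,i+1}$, kills the odd cross terms using the conditional Gaussianity from Lemma~\ref{lm:densityh}, bounds the per-step energy by $1$ (its Lemma~\ref{lm:boundh1} is exactly your $\delta_i\le 1$, proved from the same $\Omega_i^2\preceq\Omega_i$ fact), and obtains \eqref{eq:chbound3} from a fourth-moment drift recursion (its Lemma~\ref{lm:independent}, $\E\|\hat{\hvu}_{t+1}\|^4\le\E\|\hat{\hvu}_t\|^4+4\E\|\hat{\hvu}_t\|^2+3$) summed over the block; your treatment of the surviving cross term by Cauchy--Schwarz is mathematically the same as the paper's trace inequality $\trace(AB)\le\lambda_{\max}(A)\trace(B)$. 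Where you genuinely diverge is on the $M$-side bounds \eqref{eq:chbound2} and \eqref{eq:chbound4}: the paper proves these by expanding $\|\hat{\hvu}_t+\tilde{\hvu}_t\|^2$ and $\|\hat{\hvu}_t+\tilde{\hvu}_t\|^4$ and verifying that the cross terms vanish or are nonnegative (which for the fourth power takes some care with odd moments of proper Gaussians), whereas you invoke conditional Jensen twice ($\|\E[\hvu_t\mid\cdot]\|^2\le\E[\|\hvu_t\|^2\mid\cdot]$ and $\E[(\E[X\mid\cdot])^2]\le\E[X^2]$), which is shorter, needs no case-checking, and even gives the slightly tighter $M^2+M$. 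A second small difference: you use the exact circularly-symmetric fourth moment $\E[|\nu_i|^4\mid\cdot]=2\sigma_i^4$, giving per-step constant $2$, while the paper's Lemma~\ref{lm:fourthmoment} (as quoted) yields the looser constant $3$; both land within the stated bounds, yours with a little room to spare ($2\tau^2$ versus the claimed $2\tau^2+5\tau$).
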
 
 \begin{proof}
 The proof is shown in Appendix~\ref{sec:lmExpCht}.
 \end{proof}

\begin{figure}
\centering
\includegraphics[width=10cm]{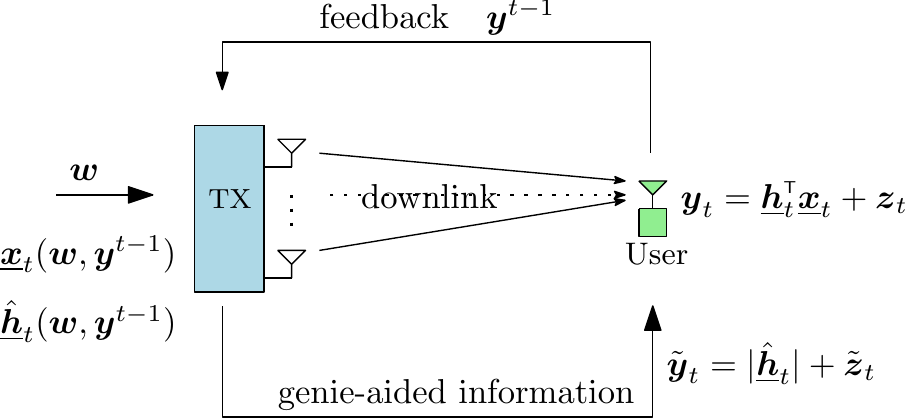}
\caption{A feedback MISO channel with a genie-aided information. The feedback information $\yv^{t-1}$ and the message $\Wme$ are available at the transmitter at time $t$, $t=1,2,\cdots, n$.  The channel outputs $ \yv^{n}$ and genie-aided information $\tilde{\yv}^{n}$ are available at the receiver after time $t=n$.  Both $\xvut$ and $\hat{\hvu}_{t}$ are the functions of $(\Wme, \yv^{t-1})$.}
\label{fig:MassiveMISOgenie}
\end{figure}

 \vspace{5pt}
 
The proof details of each step for Theorem~\ref{thm:MISO2u} are provided as follows.  Recall that the proof of  Theorem~\ref{thm:MISO2u} follows the roadmap mentioned earlier.

\subsection{Step 1:  genie-aided  channel  enhancement  \label{sec:genie} }

For the original channel model defined in Section~\ref{sec:system},  the transmitter obtains the information of $(\Wme, \yv^{t-1})$ at time $t$, while the receiver obtains the information of $\yv^{t}$ at time $t$, for $t=1,2, \cdots,n$.  At the end of the whole communication, i.e., after time $t=n$, the information of  $\yv^{n}$ is available at  the receiver.
We now  enhance the setting by providing a genie-aided information $\tilde{\yv}^{n} \defeq \{ \tilde{\yv}_t \}_{t=1}^n$ to the receiver at the end of the whole communication, where $ \tilde{\yv}_t$ is defined in \eqref{eq:genie5678}.
In the enhanced setting (see Fig.~\ref{fig:MassiveMISOgenie}), the transmitter has the same information as before at each time $t$, but the receiver has more information, i.e., \[ (\yv^{n}, \tilde{\yv}^{n})\] at the end of the whole communication. Therefore, the channel capacity (and its upper bound) of the enhanced setting will serve as the upper bound of the channel capacity of the original setting.  In what follows we will investigate the  capacity upper bound of the enhanced setting.  
As we will see later on,  this  channel enhancement step plays an important role  in deriving our capacity upper bound.

\subsection{Step 2: bound the rate of the enhanced channel     \label{sec:genierate} }

We proceed to bound the rate of the  enhanced setting as follows:
\begin{align}
nR &=  \Hen(\Wme)  \nonumber\\
&=  \Imu(\Wme; \yv^{n}, \tilde{\yv}^{n})   +   \Hen (\Wme| \yv^{n}, \tilde{\yv}^{n})   \nonumber\\
&\leq \Imu(\Wme; \yv^{n}, \tilde{\yv}^{n})  +  n \epsilon_n  \label{eq:3fano3121} \\
&= \sum_{t=1}^n \Imu(\Wme; \yv_t, \tilde{\yv}_t\big|  \yv^{t-1}, \tilde{\yv}^{t-1} )  +  n \epsilon_n  \label{eq:3chain8835} \\
&= \sum_{t=1}^n \Bigl( \Imu(\Wme; \tilde{\yv}_t\big|  \yv^{t-1}, \tilde{\yv}^{t-1} )  + \Imu(\Wme; \yv_t \big|  \yv^{t-1}, \tilde{\yv}^{t} )  \Bigr)+  n \epsilon_n  \label{eq:3chain9256} \\
&= \sum_{t=1}^n \Bigl(  \hen(\tilde{\yv}_t\big|  \yv^{t-1}, \tilde{\yv}^{t-1} ) - \hen(\tilde{\yv}_t\big|  \yv^{t-1}, \tilde{\yv}^{t-1}, \Wme )    + \hen(\yv_t \big|  \yv^{t-1}, \tilde{\yv}^{t} ) - \hen(\yv_t \big|  \yv^{t-1}, \tilde{\yv}^{t}, \Wme )   \Bigr)+  n \epsilon_n  \non \\
&\leq  \!\sum_{t=1}^n \! \Bigl(  \hen(\tilde{\yv}_t\big|  \yv^{t-1}, \tilde{\yv}^{t-1} ) \! - \underbrace{\hen(\tilde{\yv}_t\big|  \yv^{t-1}, \tilde{\yv}^{t-1}, \Wme , \hat{\hvu}_{t}) }_{= \log (\pi e ) }    \!+ \hen(\yv_t \big|  \yv^{t-1}, \tilde{\yv}^{t} ) \!- \underbrace{ \hen(\yv_t \big|  \yv^{t-1}, \tilde{\yv}^{t}, \Wme , \hvut, \xvut)}_{= \log (\pi e ) }    \Bigr) \!+\!  n \epsilon_n  \label{eq:3cond2566}  \\
&=  \sum_{t=1}^n \Bigl(  \hen(\tilde{\yv}_t\big|  \yv^{t-1}, \tilde{\yv}^{t-1} ) - \log (\pi e ) + \hen(\yv_t \big|  \yv^{t-1}, \tilde{\yv}^{t} ) -  \log (\pi e )   \Bigr)+  n \epsilon_n  \label{eq:3cond5823}  \\
&\leq   \sum_{t=1}^n   \hen(\tilde{\yv}_t ) + \sum_{t=1}^n \hen(\yv_t \big|  \tilde{\yv}_{t} )  - 2 n\log (\pi e ) +  n \epsilon_n  \label{eq:3bound1}  
\end{align}
where \eqref{eq:3fano3121} follows from Fano's inequality  and $\epsilon_n \to 0$ as $n \to \infty$;
 \eqref{eq:3chain8835} and \eqref{eq:3chain9256} result from chain rule;
\eqref{eq:3cond2566} and \eqref{eq:3bound1} use  the fact that conditioning reduces differential entropy;
\eqref{eq:3cond5823} is from that $\hen( \yt \big| \yv^{t-1}, \tilde{\yv}^{t}, \Wme , \hvut, \xvut) $ $=  \hen( \yt- \hvut^\T\xvut \big| \yv^{t-1}, \tilde{\yv}^{t}, \Wme , \hvut, \xvut) $ $=  \hen( \zt) = \log (\pi e )$ and that $\hen(\tilde{\yv}_t\big|  \yv^{t-1}, \tilde{\yv}^{t-1}, \Wme , \hat{\hvu}_{t}) $ $= \hen(\tilde{\yv}_t -  |\hat{\hvu}_{t}|  \ \big|  \yv^{t-1}, \tilde{\yv}^{t-1}, \Wme , \hat{\hvu}_{t})  $ $= \hen(\tilde{\zv}_t) =  \log (\pi e )$.

\subsection{Step 3: bound $ \hen(\tilde{\yv}_t )$ by using the differential entropy maximizer  and Lemma~\ref{lm:ExpCht}  \label{sec:ub221} }

We proceed to  upper bound the  differential entropy $ \hen(\tilde{\yv}_t )$ in \eqref{eq:3bound1}, for $t\in \{1,2, \cdots, n\}$. 
Note that the average power of $\tilde{\yv}_t $  is 
\[  \E \bigl[ | \tilde{\yv}_t  |^2  \bigr]   =   \E \bigl[   \bigl|  \| \hat{\hvu}_{t}\| +  \tilde{ \zt} \bigr|^2  \bigr]    =   1 + \E \bigl[   \| \hat{\hvu}_{t}\|^2  \bigr]  \]
(cf.~\eqref{eq:genie5678}).  Since differential entropy is maximized by a circularly symmetric complex Gaussian distribution with the same average power, we have 
\begin{align}
   \hen\bigl( \tilde{\yv}_t   \bigr)   \leq  \log\Bigl(\pi e \bigl(   1 +  \E \bigl[   \| \hat{\hvu}_{t}\|^2 \bigr] \bigl) \Bigr).  \label{eq:3guassian8143}  
\end{align}
From  \eqref{eq:chbound1} and \eqref{eq:chbound2} in Lemma~\ref{lm:ExpCht}   we have
\[ \E \bigl[   \| \hat{\hvu}_{t}\|^2 \bigr] \leq  \min\{ M,  \   [  (t -1) \  \text{mod} \  T_c]\} \]
 which, together with \eqref{eq:3guassian8143}, gives 
\begin{align}
   \hen\bigl( \tilde{\yv}_t   \bigr)  & \leq  \log\Bigl(\pi e \bigl(   1 + \min\{ M,  \   [  (t -1) \  \text{mod} \  T_c]\}   \bigl) \Bigr)  \non   \\ 
    &\leq  \log\Bigl(\pi e \bigl(   1 + \min\{ M,  T_c \}   \bigl) \Bigr)    \label{eq:3guassian8221}  
\end{align}
$\forall t\in \{1,2, \cdots, n\}$, where $[  (\bullet )  \text{mod} \  T_c]$ denotes a modulo operation; and \eqref{eq:3guassian8221}  uses the identity of $[ (t -1) \  \text{mod} \  T_c] \leq T_c$.
Then, combining  \eqref{eq:3bound1}  and  \eqref{eq:3guassian8221} yields the following  bound on the rate: 
\begin{align}
 nR - n \epsilon_n    & \leq   \sum_{t=1}^n  \log \bigl(   1 +  \min\{ M, T_c\}  \bigl)     + \sum_{t=1}^n \hen(\yv_t \big|  \tilde{\yv}_{t} )  -  n\log (\pi e ) .  \label{eq:3bound11}   
\end{align}
Note that the first  term in the right-hand side of \eqref{eq:3bound11} corresponds to  a penalty on the capacity upper bound, due to the  genie-aided  channel enhancement, because it corresponds to the differential entropy of the genie-aided information $\tilde{\yv}^{n}$.  This penalty leads to the factor $2$ in the upper bound of beamforming gain, as shown in Theorem~\ref{thm:MISOb2}, i.e., $b (\alpha)    \leq   \min\{2\alpha,1\}$.

\subsection{Step 4:  bound $\hen( \yv_t \big|  \tilde{\yv}_{t} )$ by using the differential entropy maximizer  \label{sec:ub334} }

Let us now focus on the  conditional differential entropy $\hen( \yv_t \big|  \tilde{\yv}_{t} )$ in \eqref{eq:3bound11}. 
Note that  
\begin{align}
  \hen( \yv_t \big|  \tilde{\yv}_{t} ) = \E_{\tilde{\yv}_{t}} [ \hen( \yv_t \big|  \tilde{\yv}_{t} =  \tilde{y}_{t} )  ].  \label{eq:3guassian3355}  
\end{align}
Again, by using the fact that  Gaussian distribution with the same average power  maximizes the differential entropy, we have 
\begin{align}
   \hen\bigl(  \yv_t \big|    \tilde{\yv}_{t} =  \tilde{y}_{t} \bigr)   &\leq  \log\Bigl(\pi e \cdot \E \bigl[ | \yv_t  |^2  |  \tilde{\yv}_{t} =  \tilde{y}_{t}  \bigr]  \Bigr) \non\\
   &= \log\Bigl(\pi e \bigl(  1 + \E \bigl[ |\hvut^\T \xvut |^2   |  \tilde{\yv}_{t} =  \tilde{y}_{t}  \bigr]    \bigl) \Bigr)   \label{eq:3guassian0099}  
\end{align}
 which, together with \eqref{eq:3guassian3355},  yields   
\begin{align}
  \hen( \yv_t \big|  \tilde{\yv}_{t} )  \leq    \E
  \Bigl[  \log\Bigl(\pi e \bigl(  1 + \E \bigl[ |\hvut^\T \xvut |^2   |  \tilde{\yv}_{t}   \bigr]    \bigl) \Bigr)   \Bigr]. \label{eq:3guassian8877}  
\end{align}

\subsection{Step 5: bound $\E \bigl[ |\hvut^\T \xvut |^2  |  \tilde{\yv}_{t} \bigr]$ by dealing with the correlation between $\xvut$ and $\hvut$   \label{sec:ub443} }

Since $\xvut$ and $\hvut$ are correlated, computing the value of $\E \bigl[ |\hvut^\T \xvut |^2   |  \tilde{\yv}_{t} \bigr]$ (shown in \eqref{eq:3guassian8877}) could be challenging in general. 
We now bound the value of $\E \bigl[ |\hvut^\T \xvut |^2  |  \tilde{\yv}_{t} \bigr]$ as follows: 
\begin{align}
 \E \Bigl[ |\hvut^\T \xvut |^2  \big|  \tilde{\yv}_{t}\Bigr] 
& = \E \Bigl[ \E   \Bigl[ |\hvut^\T \xvut |^2  \big| \yv^{t-1}, \Wme,  \tilde{\yv}_{t}  \Bigr]   \Big|  \tilde{\yv}_{t}  \Bigr]  \label{eq:3exex} \\
& = \E \Bigl[  \E   \Bigl[ |\hvut^\T \xvut |^2  \big| \yv^{t-1}, \Wme  \Bigr]   \Big|  \tilde{\yv}_{t}  \Bigr]  \label{eq:3exexmarkov} \\
& =  \E \Bigl[ \E   \Bigl[  \trace \bigl( \hvut^{*} \hvut^\T \xvut \xvut^{\H} \bigr)       \big| \yv^{t-1}, \Wme \Bigr]  \Big |  \tilde{\yv}_{t}  \Bigr]  \label{eq:1trace221} \\
& =  \E \Bigl[ \trace \Bigl( \E \bigl[   \hvut^{*} \hvut^\T   \big| \yv^{t-1}, \Wme \bigr] \cdot \xvut \xvut^{\H} \Bigr)   \Big |  \tilde{\yv}_{t}  \Bigr]  \label{eq:1traceex52} \\
& =  \E \Bigl[ \trace \Bigl( \E \Bigl[   (\hat{\hvut} + \tilde{\hvut})^{*} (\hat{\hvut} + \tilde{\hvut})^\T   \big| \yv^{t-1}, \Wme \Bigr] \cdot \xvut \xvut^{\H} \Bigr) \Big |  \tilde{\yv}_{t}   \Bigr]  \label{eq:1exph93} \\
& =  \E \Bigl[ \trace \Bigl( \E \Bigl[   \hat{\hvut}^{*}\hat{\hvut}^\T  + \tilde{\hvut}^{*}\tilde{\hvut}^\T  +\hat{\hvut}^{*}\tilde{\hvut}^\T  +  \tilde{\hvut}^{*} \hat{\hvut}^\T  \  \big| \yv^{t-1}, \Wme \Bigr] \cdot  \xvut \xvut^{\H}  \Bigr) \Big |  \tilde{\yv}_{t}  \Bigr]  \nonumber \\
& =  \E \Bigl[ \trace \Bigl( \bigl(\hat{\hvut}^{*}\hat{\hvut}^\T  +\Omega_{t} \bigr) \cdot  \xvut \xvut^{\H}     \Bigr) \Big |  \tilde{\yv}_{t}  \Bigr]  \label{eq:1hmean} \\
& =  \E \Bigl[ \trace \Bigl( \hat{\hvut}^{*}\hat{\hvut}^\T \xvut \xvut^{\H}  \Bigr)   + \trace \Bigl(    \Omega_{t}  \xvut \xvut^{\H}    \Bigr)  \Big |  \tilde{\yv}_{t}  \Bigr]  \nonumber \\
&\leq \E \Bigl[     \lambda_{\max}(\hat{\hvut}^{*}\hat{\hvut}^\T) \cdot \trace(\xvut \xvut^{\H})   + \lambda_{\max}( \Omega_{t})  \cdot \trace(\xvut \xvut^{\H})    \Big |  \tilde{\yv}_{t}     \Bigr]  \label{eq:1trmax54} \\
&\leq   \E \Bigl[    \|\hat{\hvut}\|^2 \cdot \| \xvut \|^2   +  \| \xvut \|^2     \Big |  \tilde{\yv}_{t}  \Bigr]  \label{eq:3trmax366} 
\end{align}
where \eqref{eq:3exex} follows from the identity that $\E[\boldsymbol{a} | \boldsymbol{c}] = \E[\E[\boldsymbol{a} | \boldsymbol{b}, \boldsymbol{c}] | \boldsymbol{c}]$ for any three random variables $\boldsymbol{a}$, $\boldsymbol{b}$ and  $\boldsymbol{c}$;
\eqref{eq:3exexmarkov} stems from the Markov chain of  $  \{\yv^{t-1}, \Wme,  \tilde{\yv}_{t}\}  \to \{\yv^{t-1}, \Wme\} \to \{\hvut,  \xvut\}$; remind that $\tilde{\yv}_{t} = \| \hat{\hvu}_{t}\| +  \tilde{ \zt}$, and both $\hat{\hvu}_{t}$ and $\xvut$ are deterministic functions of $( \yv^{t-1}, \Wme )$ given the encoding maps in \eqref{eq:mapx};
\eqref{eq:1trace221} results from the fact that $|\hvut^\T \xvut |^2 = \trace (\hvut^\T \xvut \xvut^\H \hvut^{*} ) =  \trace (\hvut^{*} \hvut^\T \xvut \xvut^\H )$ by using the identity of $\trace( A B) = \trace( BA)$ for any matrices $A \in \Cc^{m \times q}$, $B\in \Cc^{q \times m}$;
\eqref{eq:1traceex52} stems from the fact that $\xvut$ is a deterministic function of $( \yv^{t-1}, \Wme )$; in \eqref{eq:1exph93} we just replace $\hvut$ with $\hat{\hvut} + \tilde{\hvut} $, where $\hat{\hvut}$  and $\tilde{\hvut}$ are defined in \eqref{eq:mmse424}-\eqref{eq:mmse924};
\eqref{eq:1hmean} results from the fact that  $\hat{\hvut}$ is a deterministic function of $(\yv^{t-1}, \Wme)$,  and the fact that the conditional density of $ \tilde{\hvut}$ given $(\yv^{t-1}, \Wme)$ is 
\[   \tilde{\hvut} \  \big| \  (\yv^{t-1}, \Wme)   \quad  \sim    \quad     \Cc\Nc (  \zerou ,   \Omega_{t})  \]
(see Lemma~\ref{lm:densityh});
 \eqref{eq:1trmax54} follows from the identity that  $\trace( A B)  \leq  \lambda_{\max}(A ) \trace(B)$, where $\lambda_{\max}(A)$ corresponds to the maximum eigenvalue of matrix $A$, for positive semidefinite $m\times m$ Hermitian matrices $A$ and $B$; 
 \eqref{eq:3trmax366} results from the facts that $\lambda_{\max}(\hat{\hvut}^{*}\hat{\hvut}^\T) = \|\hat{\hvut}\|^2$, $\trace(\xvut \xvut^{\H}) = \| \xvut\|^2$, and  $\lambda_{\max}( \Omega_{t}) \leq 1$  (see Lemma~\ref{lm:estimate1}).

\subsection{Step 6: bound $\hen( \yv_t \big|  \tilde{\yv}_{t} )$ by dealing with the correlation between $\xvut$ and $\hat{\hvut}$   \label{sec:ub554} }

By plugging \eqref{eq:3trmax366} into \eqref{eq:3guassian8877}, it yields 
 \begin{align}
  \hen( \yv_t \big|  \tilde{\yv}_{t} )  \leq    \E
  \Bigl[  \log\Bigl(\pi e \bigl(  1 +     \E \bigl[    \|\hat{\hvut}\|^2 \cdot \| \xvut \|^2   +  \| \xvut \|^2       \big|  \tilde{\yv}_{t}  \bigr]    \bigl) \Bigr)   \Bigr]. \label{eq:3guassian3452}  
\end{align}
Note that  $\xvut$ and $\hat{\hvut}$ are correlated.  Without conditioning on $ \tilde{\yv}_{t}$ (a genie-aided information),  it is challenging to derive a tight bound on the expectation term  $\E \bigl[    \|\hat{\hvut}\|^2 \cdot \| \xvut \|^2  \bigr]$ in \eqref{eq:3guassian3452}.
In this step, we take  the benefit of genie-aided channel enhancement ---  which leads to a condition $\tilde{\yv}_{t}$ in the expectation term  $\E \bigl[    \|\hat{\hvut}\|^2 \cdot \| \xvut \|^2  \bigr]$ in \eqref{eq:3guassian3452} ---   and provide an upper bound  on $\hen( \yv_t \big|  \tilde{\yv}_{t} )$. 

 Since $\tilde{\yv}_{t} = \| \hat{\hvu}_{t}\| +  \tilde{ \zt}$ (cf.~\eqref{eq:genie5678}), we could bound  $\|\hat{\hvut}\|^2$ by using triangle inequality: 
  \begin{align}
   \|\hat{\hvut}\|^2  = | \tilde{\yv}_{t}  - \tilde{\zv}_{t} |^2  \leq   ( | \tilde{\yv}_{t}| + | \tilde{\zv}_{t} |) ^2     \label{eq:3triangle9952}  
\end{align}
which, together with \eqref{eq:3guassian3452}, gives
 \begin{align}
  \hen( \yv_t \big|  \tilde{\yv}_{t} )  - \log(\pi e)  &\leq    \E  \Bigl[  \log\Bigl( \E \Bigl[ 1 +       ( | \tilde{\yv}_{t}|  + | \tilde{\zv}_{t} |)^2 \cdot \| \xvut \|^2   +  \| \xvut \|^2       \Big|  \tilde{\yv}_{t}  \Bigr]   \Bigr)   \Bigr]    \label{eq:3triangle8255}   \\
    &\leq    \E  \Bigl[  \log\Bigl( \E \Bigl[ \bigl(  1+  | \tilde{\yv}_{t}| \bigr)^2 \cdot  \bigl(  1+    ( 1+    | \tilde{\zv}_{t} |)^2 \cdot \| \xvut \|^2  \bigr)  \Big|  \tilde{\yv}_{t} \Bigr]   \Bigr)   \Bigr] \label{eq:3bound2234}    \\
        & =    2  \cdot \E  \bigl[ \log  \bigl(  1+  | \tilde{\yv}_{t}| \bigr)  \bigr]  +  \E  \Bigl[   \log\bigl(   \E \bigl[   1+    ( 1+    | \tilde{\zv}_{t} |)^2 \cdot \| \xvut \|^2   \big|  \tilde{\yv}_{t} \bigr]   \bigr)   \Bigr]    \non    \\
       & \leq    2 \cdot \log  \bigl(  1+   \E [  | \tilde{\yv}_{t}| ] \bigr)  + \log\bigl(   \underbrace{ \E \bigl[   1+    ( 1+    | \tilde{\zv}_{t} |)^2 \cdot \| \xvut \|^2   \bigr] }_{= 1+   \E [ ( 1+    | \tilde{\zv}_{t} |)^2 ]  \cdot \E [ \| \xvut \|^2   ] }   \bigr)   \label{eq:3jensen2244}    \\
       & =     2 \cdot \log  \bigl(  1+   \E [  | \tilde{\yv}_{t}| ] \bigr)  + \log\bigl(      1+   \underbrace{\E \bigl[ ( 1+    | \tilde{\zv}_{t} |)^2 \bigr] }_{\leq 4}  \cdot \E \bigl[ \| \xvut \|^2   \bigr]   \bigr)   \label{eq:3ind4324}    \\
            & \leq      2 \cdot \log  \bigl(  1+  \E [  | \tilde{\yv}_{t}| ] \bigr)  + \log\bigl(      1+   4  \cdot \E \bigl[ \| \xvut \|^2   \bigr]   \bigr)   \label{eq:3ind2577}    \\  
        & \leq      2 \cdot \log  \bigl(  1+   \sqrt{  \E [  | \tilde{\yv}_{t}|^2 ] } \bigr)  + \log\bigl(      1+   4 \cdot \E \bigl[ \| \xvut \|^2   \bigr]   \bigr)     \label{eq:3triangle8386}    \\
               & \leq     2 \cdot \log  \bigl(  1+   \sqrt{ \min\{ M,  T_c\}  +  1 } \bigr)  + \log\bigl(      1+   4 \cdot \E \bigl[ \| \xvut \|^2   \bigr]   \bigr)     \label{eq:3triangle32466}  
\end{align}
where \eqref{eq:3triangle8255} is from \eqref{eq:3guassian3452} and \eqref{eq:3triangle9952};
\eqref{eq:3bound2234} follows from the identity that $1+ (a_1+ a_2)^2 \cdot a_3 + a_3\leq $ $ (1+a_1)^2\cdot  \bigl(1+ (1+a_2)^2\cdot a_3\bigr)$ for any $a_1, a_2, a_3 \geq 0$; 
\eqref{eq:3jensen2244} results from Jensen's inequality;
\eqref{eq:3ind4324} follows from the independence between $ \tilde{\zv}_{t}$ and $\xvut$;
\eqref{eq:3ind2577} stems from the fact that $\E \bigl[ ( 1+    | \tilde{\zv}_{t} |)^2 \bigr] = 1  +\E [ | \tilde{\zv}_{t} |^2] +  2 \E [ | \tilde{\zv}_{t} |]    =  2 + \sqrt{\pi}  \leq 4$, given that $ \E [ | \tilde{\zv}_{t} |] =  \sqrt{\pi}/2$ for $ \tilde{ \zt}   \sim  \Cc\Nc (  0 ,   1) $;  
\eqref{eq:3triangle8386} uses the fact that  $ 0\leq   \E [  | \tilde{\yv}_{t}|^2 ]  - ( \E [  | \tilde{\yv}_{t}| ] )^2 $  since  $0 \leq  \E (|\tilde{\yv}_{t}| -  \E | \tilde{\yv}_{t}| )^2=  \E [  | \tilde{\yv}_{t}|^2 ]  - ( \E [  | \tilde{\yv}_{t}| ] )^2$;
\eqref{eq:3triangle32466}  follows from the fact that $\E [  | \tilde{\yv}_{t}|^2 ] = \E [  | \| \hat{\hvu}_{t}\|  + \tilde{ \zt} |^2 ] = \E [ \| \hat{\hvu}_{t}\|^2] + \E[ |  \tilde{ \zt} |^2 ] = \E [ \| \hat{\hvu}_{t}\|^2] + 1 \leq    \min\{ M,  \   [  (t -1) \  \text{mod} \  T_c]\}  +  1 \leq   \min\{ M,  T_c\}  +  1$, by using \eqref{eq:chbound1} and \eqref{eq:chbound2} in Lemma~\ref{lm:ExpCht}.

 \subsection{Step 7:  derive  a final capacity upper bound with  the optimal power allocation   \label{sec:ub665} }

Finally, by plugging \eqref{eq:3triangle32466}  into \eqref{eq:3bound11} we have 
\begin{align}
 nR \!-\! n \epsilon_n  &\leq   \sum_{t=1}^n  \log \bigl(   1 +  \min\{ M,  T_c\}  \bigl)      
  +   \sum_{t=1}^n    \log  \bigl(  1+   \sqrt{  \min\{ M,  T_c\}  +  1 } \bigr)^2     +   \sum_{t=1}^n  \log\bigl(      1+   4  \E \bigl[ \| \xvut \|^2   \bigr]   \bigr)   \label{eq:3bound211}    \\
    &\leq   2n  \log \bigl(   1 +  3(\min\{ M,  T_c\} +1)  \bigl)     +   \sum_{t=1}^n  \log\bigl(      1+   4  \E \bigl[ \| \xvut \|^2   \bigr]   \bigr)   \label{eq:3bound3652}    \\
& \leq    2n  \log \bigl(   1 +  3(\min\{ M,  T_c\} +1)  \bigl)   +   \max_{   \sum_{t =1}^n \E[ \|\xvu_{t}\|^2 ] \leq n  P  }   \sum_{t=1}^n  \log\bigl(      1+   4 \E \bigl[ \| \xvut \|^2   \bigr]   \bigr)  \label{eq:3bound662}    \\
 & \leq   2n  \log \bigl(   1 +  3(\min\{ M,  T_c\} +1)  \bigl)   +  n   \log\bigl(      1+    4 P    \bigr)   \label{eq:3power999}    
\end{align}
where \eqref{eq:3bound211} follows from  \eqref{eq:3bound11} and \eqref{eq:3triangle32466};
\eqref{eq:3bound3652} stems from the identity that $(1+a) < (1+\sqrt{1+a})^2 \leq 1+ 3(a+1)  $ for any $a\geq 0$; 
\eqref{eq:3bound662} results from   maximizing  the RHS of \eqref{eq:3bound3652}  under a second moment constraint; 
\eqref{eq:3power999} follows from Lemma~\ref{lm:max}. 
At this point,  as $n\to \infty$, we have the bound  $R  \leq   2 \log \bigl(  4 +  3 \cdot \min\{ M,  T_c\}  \bigl)   +   \log\bigl(      1+    4 P    \bigr)  $ and complete the proof.

\section{Achievability \label{sec:misoc} }

This section will prove the following theorem for the MISO channel defined in Section~\ref{sec:system}.
\vspace{5pt}
 \begin{theorem} [Lower bound] \label{thm:MISOlower}
For the MISO  channel  defined in Section~\ref{sec:system}, the  capacity  is lower bounded by
\[ C \geq   \! \frac{T_c \!- \!\Tt}{T_c}  \cdot  \log \Bigl(   1  + \frac{ P  \cdot \max\{ \! (\Tt \! -\! 1) , 1/2 \} }{  2  + \frac{1}{P}    } -    \frac{1 }{   \max\{ \Tt , 2  \}  }  \Bigr)   \] 
under the second moment input constraint; while under the fourth moment input constraint, the  capacity  is lower bounded by
\[ C \geq  \! \frac{T_c \!-\! \Tt}{T_c}  \cdot  \log \Bigl(   1  + \frac{ P_o  \cdot \max\{ \!(\Tt \!-\! 1) , 1/2 \} }{  2  + \frac{1}{P_o}    } -    \frac{1 }{   \max\{ \Tt , 2  \}  }  \Bigr)     \]  
  where  $P_o \defeq \frac{\kappa P}{\sqrt{3}}$ and  $\Tt \defeq \lceil \frac{\min\{M, T_c\}}{ \log  \max\{4, \  \min\{M, T_c\}\}}\rceil  $. 
\end{theorem}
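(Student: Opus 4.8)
The plan is to establish Theorem~\ref{thm:MISOlower} constructively, with a training-then-beamforming scheme that runs independently on each coherence block and exploits the noiseless feedback to synchronize the transmitter's and receiver's channel estimates. Within a block (on which $\hvu$ is constant) I would spend the first $\Tt$ channel uses on training and the remaining $T_c-\Tt$ on data. In training slot $i$ ($1\le i\le \Tt$) the transmitter excites only antenna $i$ with power $P$, so the receiver observes a noisy version of the $i$th channel coefficient and feeds it back; both terminals then share the same MMSE estimate $\hat{\hvu}$, supported on the first $\Tt$ antennas. By Lemma~\ref{lm:densityh}, $\hvu=\hat{\hvu}+\tilde{\hvu}$ with $\tilde{\hvu}\,|\,\hat{\hvu}\sim\mathcal{CN}(\zerou,\sigma_e^2 I)$ independent of $\hat{\hvu}$ and per-coordinate error variance $\sigma_e^2=1/(P+1)$. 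For data slot $t$ the transmitter sends $\xvut=\sqrt{P}\,s_t\,\wv$ with matched-filter beamformer $\wv=\hat{\hvu}^{*}/\|\hat{\hvu}\|$, $\E|s_t|^2=1$, so that the effective scalar gain is $\hvu^\T\wv=\|\hat{\hvu}\|+\tilde{\hvu}^\T\wv$.

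Next I would turn each data slot into a coherent scalar channel with receiver side information $\hat{\hvu}$. Writing $\yv_t=\sqrt{P}\,s_t\,\|\hat{\hvu}\|+\bigl(\sqrt{P}\,s_t\,\tilde{\hvu}^\T\wv+\zv_t\bigr)$, the bracketed residual is uncorrelated with $s_t$ and has conditional variance $1+P\sigma_e^2=(2P+1)/(P+1)$. With Gaussian $s_t$, the worst-case-uncorrelated-noise argument (a Gaussian noise of matched covariance minimizes the mutual information) lower-bounds the per-slot rate by $\log\!\bigl(1+c\|\hat{\hvu}\|^2\bigr)$ with $c=P(P+1)/(2P+1)$. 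Since the blocks are i.i.d.\ and the receiver knows the coherent gain $\|\hat{\hvu}\|$, a random code across blocks achieves the ergodic average of this quantity weighted by the data fraction $\tfrac{T_c-\Tt}{T_c}$, with $\mathrm{P}^{(n)}_e\to 0$. This reduces everything to lower bounding $\E\bigl[\log(1+c\|\hat{\hvu}\|^2)\bigr]$, where $\|\hat{\hvu}\|^2$ is $\mathrm{Gamma}(\Tt,\theta)$ (a sum of $\Tt$ i.i.d.\ exponentials of mean $\theta=P/(P+1)$), so that $c\theta=P/(2+1/P)$.

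The main obstacle is precisely this expectation, because Jensen's inequality runs the wrong way for the concave map $w\mapsto\log(1+cw)$. I would instead invoke the convexity of $u\mapsto\log(1+e^{u})$ and apply Jensen to $u=\ln(c\|\hat{\hvu}\|^2)$, obtaining $\E[\log(1+c\|\hat{\hvu}\|^2)]\ge\log\!\bigl(1+c\,e^{\E[\ln\|\hat{\hvu}\|^2]}\bigr)$; using $\E[\ln\|\hat{\hvu}\|^2]=\ln\theta+\psi(\Tt)$ and the elementary bound $e^{\psi(\Tt)}\ge \Tt-1$ produces the numerator $P(\Tt-1)/(2+1/P)$. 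The residual second-order terms, of the order $\mathrm{Var}(\|\hat{\hvu}\|^2)/(\E\|\hat{\hvu}\|^2)^2=1/\Tt$, account for the additive correction $-1/\max\{\Tt,2\}$, and the floors $\max\{\Tt-1,1/2\}$, $\max\{\Tt,2\}$ keep the bound meaningful in the degenerate case $\Tt=1$. Substituting the optimized training length $\Tt=\lceil \min\{M,T_c\}/\log\max\{4,\min\{M,T_c\}\}\rceil$ then gives the second-moment bound; one checks that with $T_c=M^{\alpha}$ the data fraction tends to $1$ and the logarithm scales as $\min\{\alpha,1\}\log M$, matching the inner bound of Theorems~\ref{thm:MISOb2} and~\ref{thm:MISOb4}.

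Finally, for the fourth-moment constraint I would keep the identical scheme and only rescale the operating power to $P_o=\kappa P/\sqrt{3}$, a choice that leaves enough room for the pilots ($\|\xvut\|^4=P_o^2$) and the Gaussian data ($\E\|\xvut\|^4=2P_o^2$ per data slot) to jointly meet $\tfrac1n\sum_t\E[\|\xvut\|^4]\le\kappa^2P^2$. Because every estimation and rate computation above depends on the input only through its second moment, replacing $P$ by $P_o$ throughout carries the whole argument over unchanged and yields the second stated bound; combined with the converse in Theorem~\ref{thm:MISO} this establishes $b(\alpha)=\min\{\alpha,1\}$.
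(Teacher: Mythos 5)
Your proposal is correct, and it implements the same scheme as the paper (train antennas $1,\dots,\Tt$ one per slot with power $P$, feed back, MMSE-estimate, then matched-filter beamform with Gaussian symbols over the remaining $T_c-\Tt$ slots), with the same reduction of the rate to $\E\bigl[\log\bigl(1+c\|\hat{\hvu}_{\tau}\|^2\bigr)\bigr]$, $c\theta = P/(2+1/P)$: your ``worst-case uncorrelated noise'' invocation is exactly what the paper proves by hand in Appendix~C via the LMMSE coefficient $\beta_t$ and the Gaussian entropy maximizer, and your per-slot single-letterization is the paper's step of bounding $\hen(\sqrt{P}\|\hat{\hvu}_{\tau}\|\svu_d\,|\,\yvu_d,\hat{\hvu}_{\tau})$ by a sum of per-slot conditional entropies (valid in the lower-bound direction even though the slots share the same estimation error). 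Where you genuinely diverge is the final expectation bound. The paper writes $\log(1+aw)=\log w+\log(1/w+a)$ with $w=\|\delta\hat{\hvu}_{\tau}\|^2\sim\Xc^2(2\Tt)$, bounds $\E[\log w]\ge\log\max\{2(\Tt-1),1\}$ by its chi-square lemma and $\E[\log(1/w+a)]\ge\log(1/(2\Tt)+a)$ by Jensen on the convex map $x\mapsto\log(1/x+a)$; the second step is precisely what produces the $-1/\max\{\Tt,2\}$ correction in the theorem. You instead apply Jensen once to the convex map $u\mapsto\log(1+e^u)$, getting $\E[\log(1+cw)]\ge\log\bigl(1+c\,e^{\E[\ln w]}\bigr)$, and then use the same digamma/harmonic bound ($e^{\psi(\Tt)}\ge\Tt-1$ for $\Tt\ge2$, $e^{\psi(1)}=e^{-\gamma_o}\ge 1/2$) that underlies the paper's chi-square lemma. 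This yields $\log\bigl(1+P\max\{\Tt-1,1/2\}/(2+1/P)\bigr)$, which is strictly stronger than the theorem's expression and hence implies it a fortiori---a cleaner route that dispenses with the correction term altogether. One presentational caveat: your sentence attributing the $-1/\max\{\Tt,2\}$ term to ``residual second-order terms'' of order $\mathrm{Var}(\|\hat{\hvu}\|^2)/(\E\|\hat{\hvu}\|^2)^2$ is not what your derivation gives (nothing in your chain produces that term), and it should simply be replaced by the observation that your bound dominates the stated one by monotonicity of the logarithm. The fourth-moment part is handled the same way in both proofs: the scheme's inputs have $\E\|\xvut\|^4\le 2P_o^2\le\kappa^2P^2$ with $P_o=\kappa P/\sqrt{3}$, so the identical analysis applies with $P$ replaced by $P_o$.
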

\vspace{5pt}

\vspace{5pt}
\begin{remark} [Proof of Theorems~\ref{thm:MISOb2} and~\ref{thm:MISOb4}, lower bound]  \label{rmk:prooflb24}
From the capacity lower bounds in Theorem~\ref{thm:MISOlower}, one can easily derive  a  lower bound of the beamforming gain:
\begin{align}
b(\alpha)   &\geq  \min \{1  ,  \alpha\}   \non
\end{align}
under each of the two input constraints (cf.~\eqref{eq:powerEve} and~\eqref{eq:power4m}). 
It then proves the beamforming gain  lower bounds for Theorem~\ref{thm:MISOb2} and Theorem~\ref{thm:MISOb4}.
\end{remark}
\vspace{5pt}

Specifically, an achievability scheme is provided in this section  for the MISO channel with feedback. To this end, the proposed scheme can achieve a rate  $R$ (bits/channel use)  that is lower bounded by
\begin{align}
R \geq   \frac{T_c - \Tt}{T_c}  \cdot  \log \Bigl(   1  + \frac{ P  \cdot \max\{ (\Tt -1) , 1/2 \} }{  2  + \frac{1}{P}    } -    \frac{1 }{   \max\{ \Tt , 2  \}  }  \Bigr)     \label{eq:rate222}          
\end{align}
under the \emph{second} moment input constraint.  
For the case with a \emph{fourth} moment input constraint (cf.~\eqref{eq:power4m}), the proposed scheme achieves the similar rate 
$R$ with difference being that in the latter case $P$ is replaced with  $P_o$.  Note that by replacing the input power  $P$ with $P_o$, the proposed scheme will satisfy the  \emph{fourth} moment input constraint and achieve the declared rate.  In the following we will just describe the scheme for the case with a \emph{second} moment input constraint. 
Note that the  lower bounds in Theorem~\ref{thm:MISOlower} can be further improved since we just focus on the \emph{simple} scheme.

The proposed scheme is  a simple scheme that uses  no more than  $T_c$ number of transmit-antennas.  
The scheme consists of a downlink training phase and a data transmission phase for \emph{each} coherence block of the channel (see Fig.~\ref{fig:MISOTtTd}).
The choice of  phase duration is critical to the scheme performance,  because  with too small duration for training phase there is not enough time for the channel training,  while with too large duration for training phase there is not enough time for the data transmission.  
In this scheme we set the  durations of the training phase and  data transmission phase  as  
\begin{align}
\Tt = \Bigl\lceil \frac{\min\{M, T_c\}}{ \log  \max\{4, \ \min\{M, T_c\}\}}\Bigr\rceil  ,   \quad  \quad   \Td =T_c  - \Tt   \label{eq:TtTd}
\end{align}
respectively.  
The above design of $\Tt$ implies that,  the training phase takes  a relatively small fraction of the channel coherence length. Considering the typical case of $M\geq T_c$, this fraction is  roughly $\frac{1}{ \log T_c}$, which can be ignored when $M$ and $T_c$ are very large.  On the other hand, we  show that this small fraction of channel coherence length is sufficient to obtain a relatively good channel training and achieve  a relatively good beamforming gain performance (see Theorems~\ref{thm:MISOb2} and~\ref{thm:MISOb4}).  We conjecture that the achievable beamforming gain is optimal  for the setting with second moment input constraint. Note that,  for the setting with the fourth moment input constraint,   the achievable beamforming gain is optimal (see Theorem~\ref{thm:MISOb4}).
Without loss of generality  we focus on the scheme description for the \emph{first} channel block, 
corresponding to the time index $t\in \{1,2,\cdots,T_c\}$.  Note that $\hvut[1]=\hvut[2]=\cdots =\hvut[T_c]$ and  $\hvut[1]= [\hvot[1], \hvot[2], \cdots, \hvot[M]]^\T.$

\begin{figure}
\centering
\includegraphics[width=9cm]{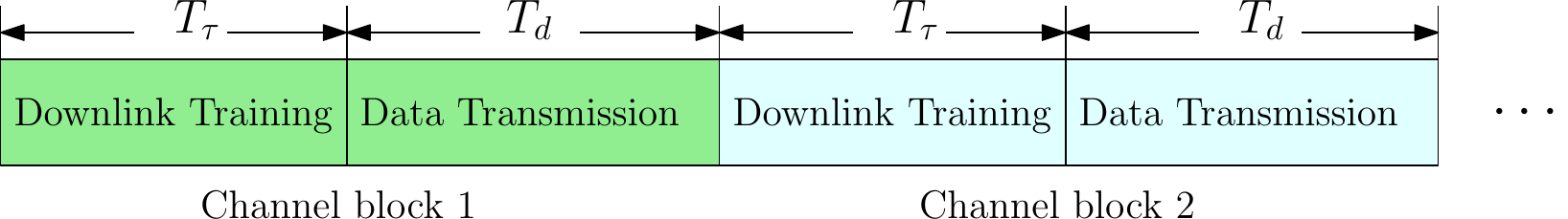}
\caption{The model of downlink training and data transmission, where the downlink training and data transmission are operated over $\Tt$ and $T_d$ channel uses of each channel block.}
\label{fig:MISOTtTd}
\end{figure}

\subsection{Downlink training  \label{sec:training}}

The goal of the downlink training  phase with feedback is to allow both user and transmitter to   learn the channel state information. 
At time $t$, $t \in \{1,2, \cdots, \Tt\}$, the downlink training is operated over the $t$th transmit-antenna in order to estimate the channel $\hvot[t]$,  where $ \hvot[t]$ denotes the channel coefficient between the $t$th transmit antenna and the user during the first channel block. By setting the pilot signal as  $\xvut =\sqrt{P} [ 0 , 0 , \cdots, 0 , 1 , 0,  \cdots, 0 ]^\T$, where the nonzero value is placed at the $t$th element,  then the received signal of user at time $t$ is given as 
\begin{align}
\yt &=   \sqrt{P} \hvot[t] +   \zt, \quad t=1,2, \cdots, \Tt.  \label{eq:ph1t}
\end{align}
As a result,  the user observes $\Tt$ channel training outputs that can be written in a vector form: 
\begin{align}
\yvu_{\tau} &=   \sqrt{P} \hvu_{\tau} +   \zvu_{\tau},  \label{eq:ph1y}
\end{align}
where $\yvu_{\tau} \defeq [ \yt[1], \yt[2], \cdots, \yt[\Tt] ]^\T$, $\hvu_{\tau} \defeq [ \hvot[1], \hvot[2] \cdots, \hvot[\Tt] ]^\T$ and $\zvu_{\tau} \defeq [ \zt[1], \zt[2], \cdots, \zt[\Tt]]^\T.$

After receiving the channel training outputs,  the user can estimate  channel $\hvu_{\tau}$ with MMSE estimator:
\begin{align}
\hat{\hvu}_{\tau} &=      \frac{\sqrt{P} }{P+1} \yvu_{\tau}  .  \label{eq:ph1csit}
\end{align}
The MMSE estimate $\hat{\hvu}_{\tau}$ and estimation error $\tilde{\hvu}_{\tau} \defeq \hvu_{\tau} - \hat{\hvu}_{\tau}$ are  two \emph{independent} complex Gaussian vectors,  
where  
$\hat{\hvu}_{\tau} \sim \Cc\Nc(\zerou,  \frac{P}{P+1} I)$ and $\tilde{\hvu}_{\tau} \sim \Cc\Nc(\zerou, \frac{1}{P+1} I )$.

After MMSE estimation,  the user  feeds back the value of $\hat{\hvu}_{\tau}$ to the transmitter over an independent feedback link (the transmitter can also obtain the MMSE estimate $\hat{\hvu}_{\tau}$ if the user feeds back the channel outputs to the transmitter).

\subsection{Data transmission   \label{sec:data}}

After obtaining the channel state information of $\hat{\hvu}_{\tau}$ (CSIT), the transmitter  sends the data information with linear precoding:  \[\xvut = \sqrt{P} \frac{\hat{\hvu}_{\tau}^{*}}{\|\hat{\hvu}_{\tau}\|} \st,  \quad t  = \Tt+1,  \Tt+2, \cdots,  T_c\] (focusing on the first channel block), where $\st$ denotes the information symbol with unit average power. 
The corresponding  signal received at the user is given as:
\begin{align}
\yv_t &=   \hvu_{\tau}^\T \xvu_t  + \zv_t     \non  \\
    & = \sqrt{P}   (  \hat{\hvu}_{\tau} +\tilde{\hvu}_{\tau})^\T  \frac{\hat{\hvu}_{\tau}^{*}}{\|\hat{\hvu}_{\tau}\|} \sv_t  + \zv_t    \non\\
  & = \sqrt{P} \|\hat{\hvu}_{\tau}\| \sv_t  +  \sqrt{P} \tilde{\hvu}_{\tau}^\T  \frac{\hat{\hvu}_{\tau}^{*}}{\|\hat{\hvu}_{\tau}\|} \sv_t   + \zv_t , \quad  t  = \Tt+1,  \Tt+2, \cdots,  T_c \label{eq:datay}
\end{align}
(again, focusing on the first channel block).
The channel input-output relationship in \eqref{eq:datay} can be further expressed in a vector form:
\begin{align}
\yvu_d  & = \sqrt{P} \|\hat{\hvu}_{\tau}\| \svu_d +  \sqrt{P} \tilde{\hvu}_{\tau}^\T  \frac{\hat{\hvu}_{\tau}^{*}}{\|\hat{\hvu}_{\tau}\|} \svu_d  + \zvu_d    \label{eq:datayv}
\end{align}
where $\yvu_{d} \defeq [ \yt[ \Tt+1], \yt[ \Tt+2], \cdots, \yt[ T_c] ]^\T$, $\svu_d \defeq [  \sv_{\Tt+1} , \sv_{\Tt+2} \cdots,\sv_{T_c} ]^\T$ and $\zvu_{d} \defeq [ \zt[\Tt+1], \zt[\Tt+2], \cdots, \zt[T_c]]^\T.$
Note that the conditional distribution of $\tilde{\hvu}_{\tau}^\T  \frac{\hat{\hvu}_{\tau}^{*}}{\|\hat{\hvu}_{\tau}\|}$ given $\hat{\hvu}_{\tau}$  is  a Gaussian distribution, that is, $\tilde{\hvu}_{\tau}^\T  \frac{\hat{\hvu}_{\tau}^{*}}{\|\hat{\hvu}_{\tau}\|} \ \big|  \hat{\hvu}_{\tau}  \sim \Cc\Nc(0, \frac{1}{P+1}  )$.

\vspace{5pt}

\underline{\emph{Rate analysis}:}  \
We now analyze the achievable rate of the proposed scheme.
At first we assume that the input symbol $\sv_t, \forall t,$ is  circularly symmetric complex Gaussian distributed, i.e., $\sv_t \sim \Cc\Nc ( 0, 1)$, and is independent of $\hat{\hvu}_{\tau}$ and $\hvu_{\tau}$. The following  proposition provides a \emph{lower bound} on the achievable ergodic rate. 
 \vspace{5pt}
\begin{proposition}  \label{pro:rate}
The achievable ergodic rate for the scheme with Gaussian input, training and feedback, and data transmission as described in Sections~\ref{sec:training} and \ref{sec:data} is bounded  as 
\begin{align*}
R &\geq     \frac{T_c - \Tt}{T_c}  \cdot  \log \Bigl(   1  + \frac{ P  \cdot \max\{ (\Tt -1) , 1/2 \} }{  2  + \frac{1}{P}    } -    \frac{1 }{   \max\{ \Tt , 2  \}  }  \Bigr)       
\end{align*}
under the second moment input constraint (cf.~\eqref{eq:powerEve}), where $\Tt = \lceil \frac{\min\{M, T_c\}}{ \log  \max\{4, \min\{M, T_c\}\}}\rceil$. 
\end{proposition}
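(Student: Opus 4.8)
The plan is to analyze the training-plus-beamforming scheme one coherence block at a time, reduce the data phase to an effective scalar fading channel with imperfect CSI, lower-bound its conditional mutual information by the matched LMMSE/worst-case-noise SINR expression, and finally control the resulting ergodic average in closed form. Since the channel is i.i.d.\ across blocks and the scheme is identical in each, it suffices to treat the first block; the per-channel-use rate is then $\frac{\Td}{T_c}=\frac{T_c-\Tt}{T_c}$ times the rate delivered over the $\Td$ data symbols, which accounts for the leading duty-cycle factor.

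First I would fix the realization $\hat{\hvu}_{\tau}$, which the receiver computed from \eqref{eq:ph1csit} and fed back, and read off from \eqref{eq:datay} the effective scalar channel $\yv_t=\sqrt{P}\,\|\hat{\hvu}_{\tau}\|\,\st+(\text{effective noise})$, where the non-signal terms are $\sqrt{P}\,\tilde{\hvu}_{\tau}^\T\frac{\hat{\hvu}_{\tau}^{*}}{\|\hat{\hvu}_{\tau}\|}\st+\zv_t$. Using that $\tilde{\hvu}_{\tau}\sim\Cc\Nc(\zerou,\frac{1}{P+1}I)$ is independent of $\hat{\hvu}_{\tau}$ and of $\st$, I would check that the effective noise is zero-mean and \emph{uncorrelated} with $\st$ given $\hat{\hvu}_{\tau}$ (even though it is not independent, being signal-dependent), and that its conditional variance equals $\frac{P}{P+1}+1=\frac{2P+1}{P+1}$. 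Uncorrelatedness is exactly what is needed for the bound $\hen(\st\,\vert\,\yv_t,\hat{\hvu}_{\tau})\le\hen(\st-\hat{\st}_{\mathrm{LMMSE}})\le\log(\pi e\cdot\text{MMSE}_{\mathrm{lin}})$, so that $\Imu(\st;\yv_t\,\vert\,\hat{\hvu}_{\tau})\ge\log(1/\text{err}) = \log(1+\text{SINR})$ with the conditional SINR $\frac{P\|\hat{\hvu}_{\tau}\|^2}{(2P+1)/(P+1)}$. Because the $\Td$ data slots are conditionally memoryless with i.i.d.\ Gaussian inputs given $\hat{\hvu}_{\tau}$, summing and averaging over $\hat{\hvu}_{\tau}$ gives a data-phase rate of at least
\begin{align}
\E_{\hat{\hvu}_{\tau}}\Bigl[\log\bigl(1+c\,\|\hat{\hvu}_{\tau}\|^2\bigr)\Bigr],\qquad c\defeq\frac{P(P+1)}{2P+1}. \non
\end{align}

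The main obstacle is converting this expected logarithm into the stated closed form with clean constants; a secondary point requiring care is the signal-dependent-noise subtlety above, which is why I invoke the LMMSE bound rather than a naive Gaussian-noise substitution. Here I would use that $\hat{\hvu}_{\tau}\sim\Cc\Nc(\zerou,\frac{P}{P+1}I_{\Tt})$, so $\|\hat{\hvu}_{\tau}\|^2$ is a scaled chi-squared (Gamma) variable, together with the convexity of $u\mapsto\log(1+c\,e^{u})$. Jensen's inequality in the log-domain then yields $\E[\log(1+c\,\|\hat{\hvu}_{\tau}\|^2)]\ge\log(1+c\,e^{\E[\ln\|\hat{\hvu}_{\tau}\|^2]})$, a bound through the geometric mean. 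Evaluating $\E[\ln\|\hat{\hvu}_{\tau}\|^2]=\ln\frac{P}{P+1}+\psi(\Tt)$ via the digamma function, the spare factors of $(P+1)$ cancel against $c$, giving $c\,e^{\E[\ln\|\hat{\hvu}_{\tau}\|^2]}=\frac{P}{2+1/P}\,e^{\psi(\Tt)}$. It then remains to apply the elementary bound $e^{\psi(n)}\ge\max\{n-1,\,1/2\}$ (which follows from $H_{n-1}\ge\ln(n-1)+\gamma$ for $n\ge2$ and $e^{-\gamma}\ge1/2$ for $n=1$) and to absorb the residual slack into the $-\frac{1}{\max\{\Tt,2\}}$ term, producing exactly the expression in the proposition. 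Combining with the $\frac{T_c-\Tt}{T_c}$ prefactor and the choice \eqref{eq:TtTd} of $\Tt$ completes the proof; the fourth-moment version follows verbatim after replacing $P$ by $P_o$ and verifying that the Gaussian input meets \eqref{eq:power4m}.
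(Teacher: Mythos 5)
Your proposal is correct in substance, and it actually lands on a slightly \emph{stronger} bound than the proposition: your Jensen-in-the-log-domain step gives $\log\bigl(1+\tfrac{P\max\{\Tt-1,\,1/2\}}{2+1/P}\bigr)$ per data symbol with no $-\tfrac{1}{\max\{\Tt,2\}}$ correction, so the stated bound follows by monotonicity of the logarithm. The first half of your argument is essentially the paper's: the paper likewise reduces $T_cR=\Imu(\svu_d;\yvu_{\tau},\yvu_d)$ to $\Imu(\sqrt{P}\|\hat{\hvu}_{\tau}\|\svu_d;\yvu_d\,|\,\hat{\hvu}_{\tau})$, single-letterizes, and subtracts an LMMSE term $\beta_t\yv_t$, arriving at exactly your conditional SINR $P(P+1)\|\hat{\hvu}_{\tau}\|^2/(2P+1)$. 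Where you genuinely diverge is in converting $\E\bigl[\log(1+c\|\hat{\hvu}_{\tau}\|^2)\bigr]$ to closed form: you apply Jensen to the convex map $u\mapsto\log(1+ce^{u})$ with $u=\ln\|\hat{\hvu}_{\tau}\|^2$ and evaluate the mean via the digamma function, whereas the paper splits the logarithm as $\E[\log X]+\E[\log(1/X+c')]$, applies Jensen to the convex function $\log(1/x+c')$, and invokes its chi-squared lemma (Lemma~\ref{lm:chi}) for $\E[\log X]$; the combination of those two steps is what produces the $-\tfrac{1}{\max\{\Tt,2\}}$ slack in the stated bound. Both routes ultimately rest on the same harmonic-series/digamma inequality, but yours is tighter and somewhat cleaner.

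One justification needs repair: the claim that the data slots are ``conditionally memoryless given $\hat{\hvu}_{\tau}$'' is false. The estimation error enters through the single scalar $\boldsymbol{g}\defeq\tilde{\hvu}_{\tau}^\T\hat{\hvu}_{\tau}^{*}/\|\hat{\hvu}_{\tau}\|$, which is shared by all $\Td$ data symbols of the block, so conditioned on $\hat{\hvu}_{\tau}$ alone the effective noises $\sqrt{P}\boldsymbol{g}\st+\zv_t$ are dependent across $t$. The inequality you actually need, $\Imu(\svu_d;\yvu_d\,|\,\hat{\hvu}_{\tau})\ge\sum_t\Imu(\st;\yv_t\,|\,\hat{\hvu}_{\tau})$, is nevertheless true for a different reason: with independent inputs, $\hen(\svu_d\,|\,\hat{\hvu}_{\tau})=\sum_t\hen(\st\,|\,\hat{\hvu}_{\tau})$, while the chain rule and conditioning-reduces-entropy give $\hen(\svu_d\,|\,\yvu_d,\hat{\hvu}_{\tau})\le\sum_t\hen(\st\,|\,\yv_t,\hat{\hvu}_{\tau})$ --- this is precisely how the paper single-letterizes in \eqref{eq:diffb35787}. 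With that substitution, your proof goes through.
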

\vspace{5pt}
\begin{proof}
The proof is shown in Appendix~\ref{sec:prorate}.
\end{proof}

\section{Conclusion and discussion} \label{sec:concl}

In this work we provide capacity bounds for the MISO block fading channel with a noiseless feedback link, under the second  and fourth moment input constraints, respectively. 
The result reveals that, increasing the  transmit-antenna number $M$  to infinity  will \emph{not}  yield an infinite capacity,  for the case with a finite coherence length and a finite input constraint on the second or fourth moment. 
In addition to the capacity bounds, this work also provides a characterization on the channel's beamforming gain for some cases. 
Specifically, for the case with a finite fourth-moment input constraint, the result reveals that $\alpha=1$ is sufficient for achieving a full beamforming gain. When $0\leq \alpha \leq 1$, the beamforming gain increases linearly with  $\alpha$.
The result has provided some practical insights for the massive MIMO system operating with FDD mode where  transmitter and receiver acquire the CSIT/CSIR via  downlink training  and feedback.  One practical insight provided in this work is that,  using more transmit antennas than the coherence length does not yield a significant gain in capacity in an asymptotic sense,   under a finite fourth-moment input constraint. 

In what follows we compare  our work with some previous works, and discuss the difficulty of  our converse proof and  the extension to the multiuser broadcast channel.

\subsection{Comparison between our work and some previous works}

In this work, we focus on the MISO block fading channel with a noiseless feedback link, where the transmitter and receiver have no prior knowledge of the channel state realizations, but the transmitter and receiver can acquire the CSIT/CSIR via downlink training  and feedback. 

In the direction with channel training and feedback,  the previous work in \cite{SH:10} has considered, among others, a MISO block fading channel with dedicated training and limited feedback, under the assumptions of \emph{linear} coding schemes and  a \emph{fixed ratio} $T_c/M$, corresponding to a specific case of $\alpha=1$ in our setting.  For that MISO setting with linear coding schemes and $\alpha=1$, the work in \cite{SH:10}  showed that the (linear) capacity is scaled as $\log M + o(\log M)$, or equivalently, the corresponding beamforming gain is $b=1$,  which matches our beamforming gain lower bound when $\alpha=1$.   
In fact, our beamforming gain lower bound is achieved by a  simple linear scheme that holds for any $\alpha \in [0, \infty)$.
So far, it remains open if  the linear schemes are optimal in terms of the beamforming gain, under the second moment input constraint. There is still a gap between our beamforming gain upper and lower bounds. We conjecture that the lower bound is tight and the linear schemes could be optimal in terms of the beamforming gain.
For the other case with the fourth moment input constraint, our derived beamforming gain upper bound reveals that the linear schemes, including the scheme proposed in \cite{SH:10},  indeed can be optimal in terms of the beamforming gain.

In the direction with channel training and feedback,  the other previous work in \cite{Caire+:10m} investigated the achievable ergodic rates of a MIMO block fading broadcast channel with dedicated training and noisy feedback, under the assumption of \emph{linear} coding schemes. Specifically, the work in \cite{Caire+:10m} derived  the lower  and upper bounds of the achievable rate  as the \emph{expectation} of some functions of the channel estimates.  
In our work we consider a different setting, i.e., a  MISO block fading channel with a noiseless feedback link, without the assumption of  linear coding schemes.  
In our setting, computing the capacity might be NP-hard \cite{Kramer:14} (see the discussion in the following subsection). 
Therefore, we mainly focus on the beamforming gain and the derived bounds depend on the parameter $\alpha$  only.
Furthermore, in the setting considered by \cite{Caire+:10m}, the  time overhead of the channel training is not taken into account in the rate analysis. However, in our setting with a large number of antennas, the time overhead of the channel training might be significant and cannot be ignored.

\subsection{Difficulty of  the converse proof}

The challenge of our proof is mainly due to the correlation between the channel inputs and the  channel outputs (see \eqref{eq:mapx} and \eqref{eq:mapcodeword}), and the high dimension of the channel inputs, equipped with a large number of antennas.
Our channel can be considered as a specific block fading channel with in-block memory, in which the capacity is generally \emph{NP-hard} to compute \cite{Kramer:14}. 
Specifically, the capacity of our setting is a \emph{multiletter} expression 
\begin{align*}
C =    \max_{p_{\xvu^{T_c}}}  \Imu(\xvu^{T_c}; \yv^{T_c})  /T_c     
\end{align*}
and finding  the optimal input distribution $p_{\xv^{T_c}}$ is NP-hard  \cite{Kramer:14}. Recall that  the channel input  $\xvu_t \in \Cc^{M}$ at each time $t$ in our setting is a function of the previous channel outputs and the message, i.e., $\xvu_t(\Wme, \yv^{t-1})$. 

Note that, under the assumptions of  linear coding schemes and a dedicated channel training, bounding the capacity (or called as the achievable rate, cf.~\cite{Caire+:10m}) may be reduced to  bounding a \emph{single-letter} expression (cf.~\cite{SH:10,Caire+:10m}). 
For example, let us consider a  setting with linear coding schemes and a dedicated channel training, such as:  1) at first a certain fraction of each channel block is used  for the channel training; 2) the transmitter and receiver(s) acquire the CSIT/CSIR from those training observations only;  3) the remaining fraction of the channel block is used for data transmission only, under the linear coding strategy. Then, after the channel training phase, the channel can be considered as a \emph{non-feedback} channel with imperfect CSIT/CSIR. In that case, the (linear) capacity bound can be reduced to a \emph{single-letter} expression (cf.~\cite{SH:10,Caire+:10m}).
However, in our setting, feedback \emph{cannot} be removed at any point of time.  Therefore, the previous approaches used in the settings with linear schemes and dedicated channel training (cf.~\cite{SH:10,Caire+:10m}) might not be directly applied in our setting.

In our converse proof, we transform the \emph{NP-hard} capacity problem into a relaxed problem that is  computable.  Note that we focus on the beamforming gain performance, as tight capacity bounds are still hard to compute. 
In our proof, a genie-aided channel enhancement is applied.  Although the  genie-aided channel enhancement  leads to a penalty on the beamforming gain, it is an important step that allows us to bound the  involved terms in a computable way. 
Our difficulty lies in Steps~3-6 (see Section~\ref{sec:converse2g}), which deal with  the correlation between the channel inputs and the channel outputs, and the high dimension of the channel inputs.
Specifically, a lemma, corresponding to the MMSE estimator (see Lemma~\ref{lm:densityh}), is used in our proof.

\subsection{Extension to the multiuser broadcast channel}

Due to the difficulty of the converse (as discussed in the previous subsection), in this work we just focus on the MISO  channel with noiseless feedback. 
Even for this setting, the optimal beamforming gain is still unknown so far under the second moment input constraint --- there is still a gap between the derived beamforming gain upper bound and lower bound.  We conjecture that the derived lower bound is optimal. 

In the future work, we will extend our results to the multiuser broadcast channel.  Note that the proposed scheme and the converse can be  extended to a $K$-user MISO broadcast channel with some modifications. In fact, based on our previous approach, we can easily prove that the sum beamforming gain  of a $K$-user MISO broadcast channel with feedback is upper bounded by  $\min\{2\alpha K, K\}$, under the second moment input constrain. This is because $K$-user MISO broadcast channel can be enhanced to $K$ \emph{parallel} MISO channels, and the beamforming gain of each  MISO channel is upper bounded by   $\min\{2 \alpha, 1\}$ according to our result (see Theorem~\ref{thm:MISOb2}).
We also conjecture that $\min\{\alpha K , K\}$ is the optimal sum beamforming gain for the $K$-user MISO broadcast channel. In the future work we will prove this conjecture, which is also related to the conjecture of the MISO  channel.

\appendices

\section{Proofs of Proposition~\ref{thm:idealc}}   \label{sec:idealc}

In this section we provide the proof of Proposition~\ref{thm:idealc}, for the \emph{ideal} case of MISO  channel with  perfect CSIT and  CSIR, and with a \emph{second} moment input constraint.
According to the previous works in \cite{JP:03, GV:97, Gallager:68, KAC:90, AC:91}, for this ideal case,  the channel capacity is characterized  as 
\begin{align}
 \Ci  =   \max_{\bar{P} (\gamma): \ \int_{\gamma}\bar{P} (\gamma) f_{\gamma}(\gamma) d\gamma =P}  \int_{\gamma}   \log \bigl(1 +   \bar{P} (\gamma) \cdot \gamma \bigr) f_{\gamma} (\gamma) d \gamma \label{eq:idealc1}  
\end{align}
 where $\gamma \defeq  \|\hvu_t \|^2$, \  $f_{\gamma} (\gamma)$ is the probability  density function of $\gamma$, \ $\bar{P} (\gamma)$ is the power allocation  function and the optimal solution  of $\bar{P} (\gamma)$ is based on a water-filling algorithm.
We here focus on the asymptotic analysis when  the antenna-number $M$ is large. 

For the capacity $\Ci$ expressed in \eqref{eq:idealc1}, it can be  upper bounded as:
\begin{align}
 \Ci  &=   \max_{\bar{P} (\gamma):  \  \E_{\gamma} [\bar{P} (\gamma) ] =P}    \E_{\gamma}\bigl[    \log \bigl(1 +   \bar{P} (\gamma) \cdot \gamma \bigr) \bigr] \non  \\
 & \leq  \max_{\bar{P} (\gamma):  \  \E_{\gamma} [\bar{P} (\gamma) ] =P}          \E_{\gamma}\bigl[    \log \bigl(1 +   \bar{P} (\gamma)  \bigr) \bigr]   +   \E_{\gamma}\bigl[    \log \bigl(1 + \gamma \bigr) \bigr]   \label{eq:idealc1351}    \\
 & \leq  \max_{\bar{P} (\gamma):  \   \E_{\gamma} [\bar{P} (\gamma) ] =P}           \log \bigl(1 +   \underbrace{\E_{\gamma} [   \bar{P} (\gamma)  ] }_{=P} \bigr)     +  \log \bigl(1 +      \underbrace{\E_{\gamma}[ \gamma ] }_{=M} \bigr)    \label{eq:jensen256}    \\
    & =      \log \bigl(1 +   P \bigr)     +  \log \bigl(1 +    M \bigr)        \label{eq:chiexp9256}    \\
        & =      \log \bigl(1 +   PM + P+M \bigr)         \label{eq:ub8225}   
\end{align}
where  \eqref{eq:idealc1351} results from the identity that $\log (1+ a_1 a_2) \leq \log (1+ a_1 ) + \log (1+  a_2)$  for any  $a_1 \geq 0$ and $a_2 \geq 0$;
\eqref{eq:jensen256} stems from Jensen's inequality; 
\eqref{eq:chiexp9256}  follows from the fact that $ \E_{\gamma}[ \gamma ] =   \E [\|\hvu_t \|^2] = M$.  

Let us now focus on the lower bound on $\Ci $ expressed in \eqref{eq:idealc1}.  Since $\Ci$ is determined by the optimal power allocation of $\bar{P} (\gamma)$ over all possible  power allocation strategies. Clearly, setting $\bar{P} (\gamma) = P$, $\forall \gamma$ (equal power allocation) gives a lower bound on $\Ci$.  
Therefore,
\begin{align}
 \Ci  &=   \max_{\bar{P} (\gamma):  \  \E_{\gamma} [\bar{P} (\gamma) ] =P}    \E_{\gamma}\bigl[    \log \bigl(1 +   \bar{P} (\gamma) \cdot \gamma \bigr) \bigr] \non  \\
 & \geq     \E_{\gamma}\bigl[    \log \bigl(1 +    P \cdot \gamma \bigr) \bigr]   \label{eq:idealc2667}    \\
 & \geq    \Bigl( \E_{\gamma}\bigl[    \log \bigl( P \cdot \gamma \bigr) \bigr]  \Bigr)^{+}  \label{eq:idealc3567}    \\
 &=      \Bigl(  \E_{\gamma}\bigl[    \log \bigl(  2 \gamma \bigr) \bigr]   +       \log \bigl( \frac{P}{2}  \bigr)   \Bigr)^{+}      \non     \\
 &\geq   \Bigl(  \log \max\{ 2M - 2, 1 \}   +       \log \bigl( \frac{P}{2}  \bigr)    \Bigr)^{+}      \label{eq:ub3678}     \\ 
 & =    \Bigl(  \log \max\{ (M - 1) P , P/2 \}        \Bigr)^{+}     \non     \\ 
  & \geq      \log (1+ (M - 1) P )   -1    \label{eq:lb2955}    
 \end{align}
where \eqref{eq:idealc2667} uses a suboptimal power allocation, i.e., $\bar{P} (\gamma) = P$, $\forall \gamma$, which will not increase the value of $ \Ci$;
\eqref{eq:idealc3567} uses the notation of $(\bullet)^{+} =  \max\{\bullet, 0 \}$;
\eqref{eq:ub3678} stems from Lemma~\ref{lm:chi} (see below), that is, $\E_{\gamma}\bigl[    \log \bigl(  2 \gamma \bigr) \bigr]  \geq    \log \max\{ 2M-2, 1 \} $, given that  $2 \gamma =  2 \|\hvu_t \|^2 \sim \Xc^2(2M)$;
\eqref{eq:lb2955} follows from the identity that $ \bigl( \log x \bigr)^{+} \geq   \log(1+ x) -1 $ for a positive $x$.
Therefore, combining the upper bound and lower bound in \eqref{eq:ub8225} and  \eqref{eq:lb2955} leads to the following conclusion:
\begin{align}
      \log (1+ (M - 1) P )   -1    \leq  \Ci    \leq   \log \bigl(1 +   PM + P+M \bigr).    \non 
 \end{align}
 For a finite $P$, we have 
\begin{align}
    \lim_{M\to \infty} \frac{ \log (1+ (M - 1) P )   -1 }{\log (1+ P M)}  =1  \quad \text{and} \quad     \lim_{M\to \infty} \frac{ \log \bigl(1 +   PM + P+M \bigr) }{\log (1+ P M)}  =1   
 \end{align}
which imply that $ \lim_{M\to \infty} \frac{ \Ci   }{\log (1+ P M)}  =1 $, $\Ci = \log(1+ PM) + o(\log M)$,  and that  $ b(\alpha) = \lim_{M\to \infty}  \frac{ \Ci  }{ \log M} = 1$.   
At this point, we complete the proof.

\begin{lemma}     \label{lm:chi}
If  $\uv \sim \Xc^2(k) $ is  a chi-square random variable with $k \geq 2$ degrees of freedom,  $k$ is an even number, then 
\begin{align}
\E   [  \log  \uv ]  & \geq    \log \max\{ k-2, 1 \}  .  \non
\end{align}
 \end{lemma}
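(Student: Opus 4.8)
The plan is to prove the bound $\E[\log \uv] \geq \log\max\{k-2,1\}$ for a chi-squared random variable $\uv \sim \Xc^2(k)$ with $k \geq 2$ even, by computing the expectation $\E[\log\uv]$ explicitly and then bounding it from below. First I would recall that the density of $\uv \sim \Xc^2(k)$ is $f(u) = \frac{1}{2^{k/2}\Gamma(k/2)} u^{k/2-1} e^{-u/2}$ for $u \geq 0$. A standard computation (differentiating the moment-generating identity $\E[\uv^s] = 2^s \Gamma(k/2+s)/\Gamma(k/2)$ at $s=0$) yields the closed form
\begin{align}
\E[\log\uv] = \log 2 + \psi(k/2), \non
\end{align}
where $\psi(\cdot)$ is the digamma function. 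Since $k$ is even, write $k = 2m$ with $m \geq 1$ an integer, so that $\psi(k/2) = \psi(m) = -\gamma + \sum_{j=1}^{m-1} \frac{1}{j}$ (with the empty sum for $m=1$), where $\gamma \approx 0.5772$ is the Euler--Mascheroni constant.

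Next I would convert this into the desired logarithmic lower bound. The target is $\log\max\{k-2,1\} = \log\max\{2m-2,1\}$. For the base case $m = 1$ (i.e.\ $k=2$) the right-hand side is $\log 1 = 0$, and the claim reduces to $\E[\log\uv] = \log 2 - \gamma \geq 0$, which holds since $\log 2 \approx 0.693 > \gamma$. For $m \geq 2$ the target is $\log(2m-2) = \log 2 + \log(m-1)$, so after cancelling the common $\log 2$ the claim becomes
\begin{align}
\psi(m) = -\gamma + \sum_{j=1}^{m-1}\frac{1}{j} \;\geq\; \log(m-1). \non
\end{align}
The natural tool here is the integral comparison $\sum_{j=1}^{m-1}\frac{1}{j} \geq \int_1^{m}\frac{1}{x}\,dx = \log m \geq \log(m-1)$, which over-supplies the needed bound, so one has the extra slack $-\gamma$ to absorb; more carefully I would use the standard inequality $\psi(m) \geq \log(m - \tfrac{1}{2})$ (or $\psi(x) \geq \log x - \tfrac{1}{x}$) valid for the digamma function, from which $\psi(m) \geq \log(m-1)$ follows immediately for $m \geq 2$.

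The main obstacle, and the only genuinely delicate point, is handling the $-\gamma$ constant cleanly and making sure the $m=1$ boundary case is treated correctly, since the $\max\{\cdot,1\}$ in the statement exists precisely to cover that degenerate case where $\log(k-2)$ would be $\log 0 = -\infty$. An alternative, perhaps cleaner, route that avoids invoking digamma inequalities is to apply Jensen's inequality in the reverse direction is \emph{not} available (log is concave, so Jensen gives an upper bound), so instead I would bound $\E[\log\uv]$ from below by truncation or by the crude estimate $\log u \geq \log(k-2) - \frac{k-2-u}{\cdots}$; however, I expect the digamma closed form together with the elementary harmonic-sum/integral comparison to be the shortest and most transparent path, and I would present that. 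Everything else is routine, so the proof should be short.
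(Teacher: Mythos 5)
Your proof is correct and follows essentially the same path as the paper's: both compute $\E[\ln \uv] = \psi(k/2) + \ln 2$ from the chi-square density, dispose of the case $k=2$ via $\ln 2 - \gamma > 0$, and reduce the case $k \geq 4$ to the single inequality $\psi(m) \geq \ln(m-1)$ for $m = k/2 \geq 2$. The only difference is how that inequality is certified: the paper writes $\psi(m) = -\gamma + \sum_{p=1}^{m-1}\tfrac{1}{p}$ and invokes the harmonic-sum bound $\sum_{p=1}^{n}\tfrac{1}{p} \geq \ln n + \gamma$ (citing \cite{CQ:03}), whereas you invoke a digamma lower bound, $\psi(x) \geq \ln(x-\tfrac{1}{2})$ or $\psi(x) \geq \ln x - \tfrac{1}{x}$. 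Since $\sum_{p=1}^{m-1}\tfrac{1}{p} = \psi(m)+\gamma$, these are the same fact in different clothing, and either citation closes the proof (with $\psi(m)\ge\ln m - \tfrac{1}{m}$ you need one extra elementary step, $\ln\tfrac{m}{m-1} \ge \tfrac{1}{m}$, which holds because $-\ln(1-t)\ge t$). One aside in your write-up is wrong, though it does not damage the proof because you do not ultimately rely on it: the bare integral comparison $\sum_{j=1}^{m-1}\tfrac{1}{j} \ge \ln m$ does \emph{not} ``over-supply'' the needed bound. It only gives $\psi(m) \ge \ln m - \gamma$, and $\ln m - \gamma \ge \ln(m-1)$ requires $\ln\tfrac{m}{m-1} \ge \gamma \approx 0.577$, which fails for every $m \ge 3$; a slack that decays like $1/m$ cannot absorb the constant $\gamma$, which is exactly why you (and the paper) need the sharper digamma/harmonic inequality. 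Finally, note that the lemma's logarithm is base $2$ by the paper's convention while your computation is in natural logarithms; this is harmless since the inequality is preserved when both sides are divided by $\ln 2$, but the conversion should be stated, as the paper does.
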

\vspace{2pt}

\begin{proof} 

If  $\uv$ is  a chi-square random variable with $k \geq 2$ degrees of freedom, its probability density function is given by   
\begin{align}
f_{\Xc}(\uv)& = 
\begin{cases}
  \frac{\uv^{k/2-1}  e^{-\uv/2 } }{  2^{k/2} \Gamma (k/2)  }  & \quad   \uv  > 0   \\
   0   & \quad \text{else}   \\
\end{cases}
\label{eq:chisquared11}  
\end{align}
where $\Gamma ( \bullet) $ is a Gamma function (cf.~\cite{Garcia:08}). 
When $k \geq 2$ and $k$ is an even number,   we have 
\begin{align}
\E [\ln \uv ] = \psi(k/2)  + \ln 2      \non
\end{align}
(see~4.352-1 in \cite{TableGR:96}), where  $\psi(x) $ is the digamma function. Note that $\psi(1) = - \gamma_o$, where $\gamma_o \approx 0.57721566$ is Euler's  constant, and for any integer $x>1$ the digamma function $\psi(x)$ can be expressed as
\begin{align}
\psi(x) = -\gamma_o + \sum_{p=1}^{x-1} \frac{1}{p}    \non 
\end{align}
(cf.~\cite{HandbookAS:64, SG:02}). 
Therefore, when $k > 2$ and $k$ is an even number, we have
\begin{align}
\E [\ln \uv ]  &=   \psi(k/2)  + \ln 2  \non \\
& =  -\gamma_o + \sum_{p=1}^{k/2-1} \frac{1}{p}   + \ln 2   \non\\ 
&\geq     \ln (k/2-1)   +  \ln 2   \label{eq:bound2566}   \\
&=      \ln (k-2)    \label{eq:bound89256}
\end{align}
where \eqref{eq:bound2566}  uses  the identity of Harmonic series  $ \sum_{p=1}^{m} \frac{1}{p}  \geq \ln m + \gamma_o$ for any positive natural number $m$ (cf.~\cite{CQ:03}).  
When $k=2$, then 
\begin{align}
\E [\ln \uv ]  &=   \psi(1)  + \ln 2  \non \\
& =  -\gamma_o    + \ln 2   \non\\ 
&\geq 0.   \label{eq:bound02996}
\end{align}
Finally, by combining \eqref{eq:bound89256} and \eqref{eq:bound02996}, we have  $\E   [  \log   \uv]   = \frac{1}{\ln 2}\E   [  \ln   \uv ]    \geq    \frac{1}{\ln 2} \ln (\max\{ k-2, 1 \}) = \log (\max\{ k-2, 1 \}) $.
\end{proof}

\section{Converse: the case with  a  fourth moment input constraint   \label{sec:converse} }

This section provides a capacity upper bound for the MISO  channel defined in Section~\ref{sec:system}, under a fourth moment input constraint (cf.~\eqref{eq:power4m}).    The result of capacity upper bound  is summarized in the following theorem. 
\vspace{5pt}
\begin{theorem} [Upper bound, fourth moment]  \label{thm:MISO}
For the MISO  channel with feedback defined in Section~\ref{sec:system}, the  capacity is upper bounded by 
\[C \leq   \log\bigl(   1 + \min\bigl\{M +2 ,  \      \sqrt{2}( T_c +1)   \bigr\}   \cdot  \kappa P \bigr)  \]
under the fourth moment input constraint in \eqref{eq:power4m}. 
\end{theorem}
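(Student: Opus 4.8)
The plan is to reuse the skeleton of the converse for Theorem~\ref{thm:MISO2u}, but to \emph{dispense with the genie-aided enhancement altogether}: under the fourth moment constraint \eqref{eq:power4m}, the troublesome cross term can be controlled directly by Cauchy--Schwarz, which is precisely what removes the factor-$2$ penalty and collapses the two logarithms into one. I would start from Fano's inequality and the chain rule,
\begin{align}
nR &\leq \Imu(\Wme; \yv^n) + n\epsilon_n \non \\
&= \sum_{t=1}^n \Imu(\Wme; \yv_t \big| \yv^{t-1}) + n\epsilon_n \non \\
&= \sum_{t=1}^n \bigl( \hen(\yv_t \big| \yv^{t-1}) - \hen(\yv_t \big| \yv^{t-1}, \Wme) \bigr) + n\epsilon_n \non
\end{align}
with $\epsilon_n \to 0$. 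The subtracted term is lower bounded by conditioning further on $(\hvut, \xvut)$, giving $\hen(\yv_t \big| \yv^{t-1}, \Wme) \geq \hen(\zv_t) = \log(\pi e)$, while the first term is bounded by the Gaussian differential-entropy maximizer, $\hen(\yv_t \big| \yv^{t-1}) \leq \log\bigl(\pi e\,\E[|\yv_t|^2]\bigr)$ with $\E[|\yv_t|^2] = 1 + \E[|\hvut^\T\xvut|^2]$ (the noise being independent of the current input). This reduces the problem to
\begin{align}
nR - n\epsilon_n \leq \sum_{t=1}^n \log\bigl( 1 + \E[|\hvut^\T\xvut|^2] \bigr). \non
\end{align}

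Next I would bound the per-symbol power $\E[|\hvut^\T\xvut|^2]$ exactly as in the chain \eqref{eq:1trace221}--\eqref{eq:3trmax366} of Step~5, but \emph{without conditioning on} $\tilde{\yv}_t$. Since $\xvut$ is a deterministic function of $(\yv^{t-1}, \Wme)$ and, by Lemma~\ref{lm:densityh}, $\hvut \big| (\yv^{t-1}, \Wme) \sim \Cc\Nc(\hat{\hvut}, \Omega_t)$, the tower property, the identity $\trace(AB)=\trace(BA)$, and the eigenvalue bound $\trace(AB) \leq \lambda_{\max}(A)\trace(B)$ give
\begin{align}
\E[|\hvut^\T\xvut|^2] &= \E\bigl[\trace\bigl((\hat{\hvut}^{*}\hat{\hvut}^\T + \Omega_t)\,\xvut\xvut^\H\bigr)\bigr] \non \\
&\leq \E\bigl[ (\|\hat{\hvut}\|^2 + 1)\,\|\xvut\|^2 \bigr] \non
\end{align}
using $\lambda_{\max}(\Omega_t) \leq 1$ from Lemma~\ref{lm:estimate1}. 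The key observation is that, with a fourth moment budget now available, I no longer need to route through the genie-aided variable $\tilde{\yv}_t$ to tame the correlation between $\|\hat{\hvut}\|^2$ and $\|\xvut\|^2$.

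The decisive step is then Cauchy--Schwarz together with the moment bound \eqref{eq:chbound5} of Lemma~\ref{lm:ExpCht}:
\begin{align}
\E\bigl[ (\|\hat{\hvut}\|^2 + 1)\,\|\xvut\|^2 \bigr] &\leq \sqrt{\E\bigl[(\|\hat{\hvut}\|^2+1)^2\bigr]}\;\sqrt{\E[\|\xvut\|^4]} \non \\
&\leq \min\{M+2,\ \sqrt{2}(T_c+1)\}\;\sqrt{\E[\|\xvut\|^4]} \non
\end{align}
where the final factor follows from $\sqrt{M^2+4M+1} \leq M+2$ together with $\sqrt{2[(t-1)\ \text{mod}\ T_c]^2 + 7[(t-1)\ \text{mod}\ T_c] + 1} \leq \sqrt{2}(T_c+1)$, using $[(t-1)\ \text{mod}\ T_c] \leq T_c-1$. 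Writing $c \defeq \min\{M+2,\ \sqrt{2}(T_c+1)\}$, applying Jensen's inequality twice --- once to the concave map $u\mapsto\log(1+cu)$ across the time index and once to the concave square root --- and invoking the constraint $\frac{1}{n}\sum_t \E[\|\xvut\|^4] \leq \kappa^2 P^2$ yields
\begin{align}
R - \epsilon_n &\leq \frac{1}{n} \sum_{t=1}^n \log\bigl(1 + c\sqrt{\E[\|\xvut\|^4]}\bigr) \non \\
&\leq \log\Bigl(1 + c\,\sqrt{\tfrac{1}{n}\sum_{t=1}^n \E[\|\xvut\|^4]}\Bigr) \non \\
&\leq \log(1 + c\,\kappa P) \non
\end{align}
and letting $n\to\infty$ completes the argument.

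I expect the only genuine subtlety to be the conceptual recognition that the fourth moment constraint is exactly what makes the naive Cauchy--Schwarz split viable: under a mere second moment constraint one controls only $\E[\|\xvut\|^2]$, so $\E[\|\hat{\hvut}\|^2\|\xvut\|^2]$ cannot be decoupled this way, which is precisely why the genie-aided enhancement (and its factor-$2$ cost) was unavoidable in Theorem~\ref{thm:MISO2u}. The remaining tasks --- verifying the arithmetic bound $\sqrt{\E[(\|\hat{\hvut}\|^2+1)^2]} \leq \min\{M+2,\sqrt{2}(T_c+1)\}$ from \eqref{eq:chbound5}, and checking that both Jensen applications point in the reducing direction --- are routine and should present no real obstacle.
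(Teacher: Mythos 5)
Your proposal is correct and follows essentially the same route as the paper's own proof in Appendix~\ref{sec:converse}: Fano plus the Gaussian entropy maximizer, the MMSE decomposition with Lemmas~\ref{lm:densityh} and~\ref{lm:estimate1} to get $\E[|\hvut^\T\xvut|^2] \leq \E[(\|\hat{\hvut}\|^2+1)\|\xvut\|^2]$, Cauchy--Schwarz against the fourth moment budget, the bound \eqref{eq:chbound5} of Lemma~\ref{lm:ExpCht}, and a Jensen-type optimization over the power allocation (your two-step Jensen argument is equivalent to the paper's Lemma~\ref{lm:maxsqrt}). Your observation that the genie-aided enhancement can be dropped here is also exactly the paper's stated distinction between this case and Theorem~\ref{thm:MISO2u}.
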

\vspace{5pt}

\vspace{5pt}
\begin{remark} [Proof of Theorem~\ref{thm:MISOb4}, converse]  \label{rmk:proofuMISOb4}
From the capacity upper bound in Theorem~\ref{thm:MISO}, we can easily derive  an upper bound on the  beamforming gain: 
\begin{align}
b(\alpha)  &\leq   \lim_{  M\to \infty}   \frac{   \log\bigl(   1 + \min\bigl\{M +2 ,  \      \sqrt{2}( M^{\alpha} +1)   \bigr\}   \cdot  \kappa P \bigr) }{ \log M}     \non   \\
  &=  \min \{1  ,  \alpha\} 
  \end{align}
  under the fourth moment input constraint. 
It then proves the converse of Theorem~\ref{thm:MISOb4}.
\end{remark}
\vspace{5pt}

In what follows we provide the proof of Theorem~\ref{thm:MISO}. 
The proof for this case with fourth moment  input constraint  is  slightly different from that for the cases with second moment  input constraint (see Section~\ref{sec:converse2g}). In this case, the genie-aided channel enhancement, used in the previous case, is not used here. For this case, we will use a Cauchy-Schwarz inequality and Lemmas~\ref{lm:densityh}, \ref{lm:estimate1} and \ref{lm:ExpCht} (see Section~\ref{sec:converse2g}).

Beginning with Fano's inequality, we bound the rate of this setting as follows:
\begin{align}
nR &\leq \Imu(\Wme; \yv^{n})  +  n \epsilon_n  \non \\ 
& =  \sum_{t=1}^n  \bigl(  \hen( \yt \big| \yv^{t-1}) - \hen( \yt \big| \Wme, \yv^{t-1})   \bigr)  +  n \epsilon_n \non \\
& \leq  \sum_{t=1}^n  \bigl(  \hen( \yt ) - \hen( \yt \big| \Wme, \yv^{t-1},\hvut, \xvut)   \bigr)   +  n \epsilon_n  \label{eq:4cond2566} \\
& =  \sum_{t=1}^n     \hen ( \yt  )    - n\log (\pi e )   +  n \epsilon_n \label{eq:bound1}  
\end{align}
where 
\eqref{eq:4cond2566} uses  the fact that conditioning reduces differential entropy;
\eqref{eq:bound1} results from  the fact that $\hen( \yt \big| \Wme, \yv^{t-1}, \hvut, \xvut) =   \hen( \zt) = \log (\pi e )$.
We proceed to  upper bound the  differential entropy $ \hen \bigl( \yt  \bigr) $ in \eqref{eq:bound1}. 
Note that the average power of $\yt$ is 
\[  \E \bigl[ | \yt |^2  \bigr]   =   1 + \E \bigl[ |\hvut^\T \xvut |^2  \bigr] . \]
Again, by using the fact that differential entropy is maximized by a circularly symmetric complex Gaussian distribution with the same average power, we have 
\begin{align}
   \hen\bigl( \yt  \bigr)   \leq  \log\Bigl(\pi e \bigl(   1 + \E \bigl[ |\hvut^\T \xvut |^2  \bigr]  \bigr)\Bigr).  \label{eq:guassian8143}  
\end{align}
Then, by combining \eqref{eq:bound1}  and   \eqref{eq:guassian8143} it yields the following  bound on the rate: 
\begin{align}
 nR - n \epsilon_n    & \leq   \sum_{t=1}^n    \log\Bigl(\pi e \bigl(  1 + \E \bigl[ |\hvut^\T \xvut |^2  \bigr] \bigr)\Bigr)  - n\log (\pi e )    \nonumber \\
 & =   \sum_{t=1}^n     \log\Bigl(   1 + \E \bigl[ |\hvut^\T \xvut |^2  \bigr] \Bigr).   \label{eq:bound987} 
\end{align}
Let us now focus on the term $\E \bigl[ |\hvut^\T \xvut |^2  \bigr] $ in \eqref{eq:bound987}.
Similarly to the previous cases,   computing the value of $\E \bigl[ |\hvut^\T \xvut |^2  \bigr] $  could be challenging in general,  since $\xvut$ and $\hvut$ are correlated.  
 By following the similar steps in \eqref{eq:3exex}-\eqref{eq:3trmax366}, we bound the value of $\E \bigl[ |\hvut^\T \xvut |^2  \bigr]$ as 
 \begin{align}
 \E \Bigl[ |\hvut^\T \xvut |^2 \Bigr] \leq   \E \Bigl[    \|\hat{\hvut}\|^2 \cdot \| \xvut \|^2   +  \| \xvut \|^2      \Bigr] \label{eq:trmax366} 
\end{align}
where  $\hat{\hvut}$ is defined in \eqref{eq:mmse424}. 
Similarly to the steps in \eqref{eq:3exex}-\eqref{eq:3trmax366}, \eqref{eq:trmax366} uses the facts that $ \tilde{\hvut}    \  \big| \  (\yv^{t-1}, \Wme)  \sim  \Cc\Nc ( \zerou ,   \Omega_{t})$ (see~Lemma~\ref{lm:densityh} in Section~\ref{sec:converse2g}) and that $\lambda_{\max}( \Omega_{t}) \leq 1$  (see Lemma~\ref{lm:estimate1} in Section~\ref{sec:converse2g}), where   $\Omega_{t}$ and $\tilde{\hvut}$ are defined in \eqref{eq:mmse9832} and \eqref{eq:mmse924}.
At this point, by combining  \eqref{eq:trmax366} and \eqref{eq:bound987} we bound the rate as
\begin{align}
 nR - n \epsilon_n    &\leq   \sum_{t=1}^n    \log\Bigl(   1 +  \E \bigl[   \|\hat{\hvut}\|^2 \cdot \| \xvut \|^2 \bigr] +   \E \bigl[  \| \xvut \|^2\bigr]      \Bigr)  \nonumber\\
  & =    \sum_{t=1}^n    \log\Bigl(   1 +  \E \bigl[  ( \|\hat{\hvut}\|^2 +1) \cdot  \| \xvut \|^2 \bigr]      \Bigr).     \label{eq:bound0255}   
\end{align}
In order to bound $\E \bigl[  ( \|\hat{\hvut}\|^2 +1) \cdot  \| \xvut \|^2 \bigr] $ in  \eqref{eq:bound0255},  we use Cauchy-Schwarz inequality, that is, $\E[\boldsymbol{a}  \boldsymbol{b} ] \leq \sqrt{\E[ |\boldsymbol{a}|^2 ]}\cdot \sqrt{\E[ |\boldsymbol{b}|^2 ]}$ for any two random variables $\boldsymbol{a}$ and $\boldsymbol{b}$. With this inequality we have 
\[ \E \bigl[  ( \|\hat{\hvut}\|^2 +1)  \cdot \| \xvut \|^2  \bigr]  \leq  \sqrt{ \E \bigl[  ( \|\hat{\hvut}\|^2 +1)^2\bigr]}  \cdot \sqrt{ \E \bigl[   \|\xvut\|^4\bigr]}   \]
which, together with \eqref{eq:bound0255}, gives the following bound on the rate  
\begin{align}
 & nR - n \epsilon_n    \non \\
 &\leq   \sum_{t=1}^n    \log\Bigl(   1 +  \sqrt{ \E \bigl[  ( \|\hat{\hvut}\|^2 +1)^2\bigr]}  \cdot \sqrt{ \E \bigl[   \|\xvut\|^4\bigr]}      \Bigr)    \label{eq:bound443}   \\ 
 &\leq  \sum_{t=1}^n    \log\Bigl(   1 + \sqrt{ \min\bigl\{M^2 \!+\! 4M \!+\! 1 ,  \    2 [ (t-1) \  \text{mod} \  T_c]^2 + 7 [(t -1) \ \text{mod} \  T_c] +1    \bigr\}  } \cdot  \sqrt{ \E \bigl[   \|\xvut\|^4\bigr]}    \Bigr)  \label{eq:lmchannel}   \\ 
    & \leq  \sum_{t=1}^n    \log\Bigl(   1 + \min\bigl\{M +2 ,  \      \sqrt{2}( T_c +1)   \bigr\}   \cdot  \sqrt{ \E \bigl[   \|\xvut\|^4\bigr]}    \Bigr)  \label{eq:bound4266}   \\ 
    & \leq  \max_{   \sum_{\ell =1}^n \E[ \|\xvu_{\ell}\|^4 ] \leq n  \kappa^2  P^2  } \sum_{t=1}^n    \log\Bigl(   1 + \min\bigl\{M +2 ,  \      \sqrt{2}( T_c +1)   \bigr\}   \cdot  \sqrt{ \E \bigl[   \|\xvut\|^4\bigr]}    \Bigr)     \label{eq:bound7662}   \\ 
        & =   n \log\bigl(   1 + \min\bigl\{M +2 ,  \      \sqrt{2}( T_c +1)   \bigr\}   \cdot  \kappa P \bigr)  \label{eq:power2466}  
\end{align}
where \eqref{eq:bound443} results from \eqref{eq:bound0255} and Cauchy-Schwarz inequality;
  \eqref{eq:lmchannel} follows  from  \eqref{eq:chbound5} in  Lemma~\ref{lm:ExpCht} (see Section~\ref{sec:converse2g}); 
  \eqref{eq:bound4266}  stems from that  $M^2 + 4M +1 < (M +2)^2$ and  that $ 2 [ (t-1) \  \text{mod} \  T_c]^2 + 7 [(t -1) \ \text{mod} \  T_c] +1$  $ \leq 2  (T_c-1)^2 +  7 (T_c -1) +1 $  $ < 2(T_c +1)^2 $;
\eqref{eq:bound7662} results from   maximizing  the RHS of \eqref{eq:bound4266} under a fourth moment constraint  (cf.~\eqref{eq:power4m}); 
\eqref{eq:power2466} follows from Lemma~\ref{lm:maxsqrt} (see below). 
At this point,  as $n\to \infty$, we have the bound  $R  \leq     \log\bigl(   1 + \min\bigl\{M +2 ,  \      \sqrt{2}( T_c +1)   \bigr\}   \cdot  \kappa P \bigr) $ and complete the proof. 
The following lemma was used in  our proof. 
 \vspace{5pt}
\begin{lemma}   \label{lm:maxsqrt}
The solution for the following maximization problem 
\begin{align}
\text{maximize} \quad       &\sum_{t=1}^n    \log(1 +    c \sqrt{s_t} )  \non\\
 \text{subject to} \quad  &  \sum_{t=1}^n  s_t \leq  m  \non \\
       &   s_t \geq 0 ,  \quad t =1,2,\cdots,n  \non
\end{align}
is  $ s_1^{\star} =  s_2^{\star} = \cdots =  s_n^{\star} =  m/n$, for  constants $m>0$ and $c>0$.
\end{lemma}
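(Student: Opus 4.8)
The plan is to mirror the argument used for Lemma~\ref{lm:max}: I would first establish that the per-coordinate objective $g(s) \defeq \log(1 + c\sqrt{s})$ is concave on $[0,\infty)$, and then invoke Jensen's inequality together with a monotonicity observation to conclude that the equal allocation $s_t^\star = m/n$ is optimal.

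The one point requiring care is the concavity of $g$. Unlike the function $\log(1+cx)$ appearing in Lemma~\ref{lm:max}, here the argument is $\sqrt{s}$ rather than $s$, so concavity is not immediate. The cleanest route is the composition rule: writing $g(s) = \phi(\psi(s))$ with $\phi(u) \defeq \log(1 + cu)$ and $\psi(s) \defeq \sqrt{s}$, one checks that $\phi$ is concave and nondecreasing on $[0,\infty)$ (since $\phi'(u) = \frac{c}{(1+cu)\ln 2} > 0$ and $\phi''(u) < 0$) while $\psi$ is concave, so their composition $g$ is concave. Alternatively, a direct computation gives $g''(s) = -\frac{c}{2\ln 2}\cdot\frac{\frac{1}{2\sqrt{s}} + c}{(\sqrt{s}+cs)^2} < 0$ for $s>0$, which confirms strict concavity.

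With concavity in hand, Jensen's inequality yields
\[ \frac{1}{n}\sum_{t=1}^n \log(1 + c\sqrt{s_t}) \leq \log\Bigl(1 + c\sqrt{\tfrac{1}{n}\textstyle\sum_{t=1}^n s_t}\Bigr). \]
The right-hand side is nondecreasing in $\sum_{t=1}^n s_t$, so under the constraint $\sum_{t=1}^n s_t \leq m$ it is maximized at $\sum_{t=1}^n s_t = m$, giving the bound $\sum_{t=1}^n \log(1 + c\sqrt{s_t}) \leq n\log(1 + c\sqrt{m/n})$. I would then check that this bound is attained by $s_1^\star = \cdots = s_n^\star = m/n$: Jensen holds with equality precisely when all the $s_t$ coincide, and this choice also saturates the budget constraint $\sum_{t=1}^n s_t = m$, so it is indeed optimal.

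I do not anticipate any genuine obstacle. The only substantive step is confirming concavity of $g$ in the presence of the square root; once that is settled, the argument is identical in structure to the proof of Lemma~\ref{lm:max}.
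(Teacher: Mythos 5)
Your proposal is correct and follows essentially the same route as the paper: Jensen's inequality applied to the concave function $\log(1+c\sqrt{s})$, followed by monotonicity to saturate the budget and verification that equal allocation attains the bound. The only difference is that you explicitly verify concavity (via the composition rule or the second derivative), a step the paper's proof simply asserts.
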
 
 \begin{proof}
This lemma  follows directly from Jensen's inequality.  By applying Jensen's inequality to the  concave function $ f(x) = \log(1 + c  \sqrt{x} )$, we have 
\[ \frac{1}{n} \sum_{t=1}^n    \log(1 +     c \sqrt{s_t} )   \leq     \log \Bigl(1 +   c  \sqrt{  \frac{1}{n} \sum_{t=1}^n  s_t  }\Bigr)     \]
 which, together with the constraint of $ \sum_{t=1}^n  s_t \leq  m $, gives the bound $ \sum_{t=1}^n    \log(1 +   c\sqrt{  s_t } )   \leq  n \log (1 +     c\sqrt{ \frac{m}{n} } ) $. The equality holds when $s^{\star}_1 =  s^{\star}_2 = \cdots =  s^{\star}_n = m/n$.
\end{proof}

\section{Proof  of Proposition~\ref{pro:rate}  \label{sec:prorate} }

In this section we provide the proof of Proposition~\ref{pro:rate}.
Note that our rate analysis is closely inspired by \cite{LS:02} and \cite{Caire+:10m}.
For the proposed scheme with Gaussian input, training and feedback described in Sections~\ref{sec:training} and \ref{sec:data}, the scheme achieves the following ergodic rate  \[ R= \frac{1}{T_c} \Imu(\svu_d; \    \yvu_{\tau}, \yvu_d)  \] by encoding the message over sufficiently large number of channel blocks, where the relationship between $\svu_d$,  $\yvu_{\tau}$ and $ \yvu_d$ are given in  \eqref{eq:ph1y} and \eqref{eq:datayv}.
The achievable rate can be lower bounded as: 
\begin{align}
T_c R  
 =& \Imu(\svu_d ; \    \yvu_{\tau}, \  \yvu_d )   \non \\
= &\Imu(\svu_d ; \  \hat{\hvu}_{\tau} ,   \  \yvu_{\tau}, \  \yvu_d)   \label{eq:dataine4367} \\
\geq &\Imu(\svu_d ; \  \hat{\hvu}_{\tau}, \yvu_d)     \label{eq:ratefdatai} \\
 = &  \Imu(\svu_d ; \  \hat{\hvu}_{\tau})     +   \Imu(\svu_d ; \  \yvu_d \  \big| \  \hat{\hvu}_{\tau})   \non \\
= &   \Imu(\svu_d ; \  \yvu_d \  \big| \ \hat{\hvu}_{\tau})   \label{eq:ratef9527}\\
= &   \Imu(\svu_d, \  \sqrt{P} \|\hat{\hvu}_{\tau}\| \svu_d   \ ; \  \yvu_d \  \big| \ \hat{\hvu}_{\tau})   \label{eq:ratef2962}\\
\geq  &   \Imu( \sqrt{P} \|\hat{\hvu}_{\tau}\| \svu_d ; \  \yvu_d \  \big| \  \hat{\hvu}_{\tau})    \label{eq:ratefdatai11} \\
 = & \hen( \sqrt{P} \|\hat{\hvu}_{\tau}\|\svu_d  \ \big| \hat{\hvu}_{\tau}  )    -  \hen ( \sqrt{P} \|\hat{\hvu}_{\tau}\|\svu_d   \   \big| \yvu_d , \hat{\hvu}_{\tau})     \non \\
  = &  \Td \cdot  \E [ \log (\pi e P  \|\hat{\hvu}_{\tau}\|^2) ]    -   \hen ( \sqrt{P} \|\hat{\hvu}_{\tau}\|\svu_d   \   \big| \yvu_d , \hat{\hvu}_{\tau})     \label{eq:diffb8424} 
  \end{align}
where \eqref{eq:dataine4367} results from the fact that $\|\hat{\hvu}_{\tau}\|$ is a deterministic function of $\yvu_{\tau}$;
 \eqref{eq:ratefdatai} and \eqref{eq:ratefdatai11} are from the fact that adding more information will not reduce the mutual information;
 \eqref{eq:ratef9527} is from our input assumption that $\svu_d $ and $\|\hat{\hvu}_{\tau}\|$ are independent;
 \eqref{eq:ratef2962} uses the fact that $\sqrt{P} \|\hat{\hvu}_{\tau}\| \svu_d $ is a deterministic function of $\svu_d$ and $ \hat{\hvu}_{\tau}$;
 \eqref{eq:diffb8424}  follows from the fact that  $\svu_d  \sim \Cc\Nc (\zerou, I_{\Td})$, where  $\Td =T_c  - \Tt$ (cf.~\eqref{eq:TtTd}).
 Let us focus on the second term in \eqref{eq:diffb8424}, which can be upper bounded as:
 \begin{align}
\hen ( \sqrt{P} \|\hat{\hvu}_{\tau}\|\svu_d   \   \big| \yvu_d , \hat{\hvu}_{\tau})     
    \leq  &  \sum_{t = \Tt +1 }^{T_c}  \hen ( \sqrt{P} \|\hat{\hvu}_{\tau}\| \sv_t  \   \big| \yv_t ,   \hat{\hvu}_{\tau} )     \label{eq:diffb35787} \\
= &    \sum_{t = \Tt +1 }^{T_c}   \hen ( \sqrt{P} \|\hat{\hvu}_{\tau}\| \sv_t -  \beta_t \yv_t   \   \big| \yv_t ,   \hat{\hvu}_{\tau} )   \label{eq:cond4256}\\
\leq  &    \sum_{t = \Tt +1 }^{T_c}   \hen ( \sqrt{P} \|\hat{\hvu}_{\tau}\| \sv_t -  \beta_t \yv_t   \   \big|  \hat{\hvu}_{\tau} )   \label{eq:cond7246}\\
 \leq &    \sum_{t = \Tt +1 }^{T_c}   \E \Bigl[ \log \Bigl(\pi e \cdot  \E \Bigl[ \  \bigl|   \sqrt{P} \|\hat{\hvu}_{\tau}\| \sv_t     - \beta_t \yv_t   \bigr|^2  \ \Big|   \ \hat{\hvu}_{\tau} \Bigr ] \Bigr)   \Bigr]  \label{eq:Gau6245}
\end{align}
where \eqref{eq:diffb35787} is from chain rule and the fact that conditioning reduces differential entropy, where $\yvu_{d} \defeq [ \yt[ \Tt+1], \yt[ \Tt+2], \cdots, \yt[ T_c] ]^\T$, $\svu_d \defeq [  \sv_{\Tt+1} , \sv_{\Tt+2} \cdots,\sv_{T_c} ]^\T$ and 
 \begin{align}
 \yv_t  = \sqrt{P} \|\hat{\hvu}_{\tau}\| \sv_t  +  \sqrt{P} \tilde{\hvu}_{\tau}^\T  \frac{\hat{\hvu}_{\tau}^{*}}{\|\hat{\hvu}_{\tau}\|} \sv_t   + \zv_t   , \quad  t =\Tt+1, \Tt+2, \cdots, T_c    \label{eq:datay22}
\end{align}
(cf.~\eqref{eq:datay});
 \eqref{eq:cond4256} results from that  $  \hen ( \sqrt{P} \|\hat{\hvu}_{\tau}\| \sv_t  \   \big| \yv_t ,   \hat{\hvu}_{\tau} )  =   \hen ( \sqrt{P} \|\hat{\hvu}_{\tau}\| \sv_t -  \beta_t \yv_t   \   \big| \yv_t ,   \hat{\hvu}_{\tau} )    $ for any deterministic function $\beta_t$ of $\yv_t$ and $\hat{\hvu}_{\tau}$;
 \eqref{eq:cond7246} is from the fact that conditioning reduces differential entropy;
 \eqref{eq:Gau6245}  uses the fact that Gaussian distribution is the differential entropy maximizer given the same second moment of $ \E \Bigl[ \  \bigl|   \sqrt{P} \|\hat{\hvu}_{\tau}\| \sv_t     - \beta_t \yv_t   \bigr|^2  \ \Big|   \ \hat{\hvu}_{\tau} \Bigr ]$.  

  In the next step we will focus on  a single term inside the summation in \eqref{eq:Gau6245}. 
Specifically, we will choose a proper $\beta_t$ to  minimize $ \E \bigl[ |\sqrt{P} \|\hat{\hvu}_{\tau}\|\sv_t   - \beta_t \yv_t |^2 \big| \hat{\hvu}_{\tau}  \bigr ]$, which will in turn  tighten the bound in \eqref{eq:Gau6245}, where $\yv_t$ is expressed in \eqref{eq:datay22}.  
This is equivalent to the MMSE estimation problem.
For the MMSE estimation problem, the optimal $c$ to minimize $ \E \bigl[ | \uv   - c \vv|^2 ]$ is $c_{\star} = \frac{ \E[\uv \vv^{*}] }{\E[ | \vv|^2 ]}$ and in this case $ \E \bigl[ | \uv   -c_{\star} \vv |^2  ] = \E[|\uv|^2 ]  - \frac{ | \E[\uv \vv^{*}]  |^2}{\E[ | \vv|^2 ]} $, for two random variables  $\uv$ and $\vv$ with \emph{zero} means. Therefore,  the optimal $\beta_t$ can be chosen as 
\begin{align}
\beta_t = \frac{ \E \bigl[\sqrt{P} \|\hat{\hvu}_{\tau}\| \sv_t  \yv^{*}_t \  \big| \  \hat{\hvu}_{\tau} \bigr] }{ \E \bigl[ \ |\yv_t|^2  \  \big|   \  \hat{\hvu}_{\tau}  \bigr] } =     \frac{P \|\hat{\hvu}_{\tau}\|^2 }{P \|\hat{\hvu}_{\tau}\|^2 + P \sigma^2 +1 }        \label{eq:betadesign}
\end{align}
where  \[ \sigma^2 \defeq \frac{1}{P+1}\] corresponding to the variance of $\tilde{\hvu}_{\tau}^\T  \frac{\hat{\hvu}_{\tau}^{*}}{\|\hat{\hvu}_{\tau}\|}$  given $\hat{\hvu}_{\tau}$.  Remind that  $\hat{\hvu}_{\tau}$ and $\tilde{\hvu}_{\tau}$ are \emph{independent} with each other,    $\hat{\hvu}_{\tau} \sim \Cc\Nc(\zerou,  \frac{P}{P+1} I)$ and $\tilde{\hvu}_{\tau} \sim \Cc\Nc(\zerou, \frac{1}{P+1} I )$.
By setting  $\beta_t$ as in \eqref{eq:betadesign},  we have 
\begin{align}
  \E \bigl[ |\sqrt{P} \|\hat{\hvu}_{\tau}\|\sv_t   - \beta_t \yv_t |^2 \big|  \hat{\hvu}_{\tau}   \bigr ]     
  & =  \E \bigl[\  |\sqrt{P} \|  \hat{\hvu}_{\tau} \| \sv_t  |^2 \ \big|  \hat{\hvu}_{\tau} \bigr ]       -  \frac{  \bigl|\ \E \bigl[  \sqrt{P} \|\hat{\hvu}_{\tau}\| \sv_t   \yv^{*}_t \ \big|  \ \hat{\hvu}_{\tau} \bigr] \  \bigr|^2    }{\E [|\yv_t|^2 \ \big|  \  \hat{\hvu}_{\tau}  ]}    \non\\
    &=   P \|\hat{\hvu}_{\tau}\|^2   -    \frac{ \bigl( P \|\hat{\hvu}_{\tau}\|^2 \bigr)^2    }{ P \|\hat{\hvu}_{\tau}\|^2 + P \sigma^2 +1 }      \non\\
        &= \frac{  P \|\hat{\hvu}_{\tau}\|^2 \cdot  ( P \sigma^2 +1 )   }{ P \|\hat{\hvu}_{\tau}\|^2   + P \sigma^2 +1 } .     \label{eq:betadesign21355} 
 \end{align}
By plugging \eqref{eq:Gau6245}  and \eqref{eq:betadesign21355} into  \eqref{eq:diffb8424}, we have: 
\begin{align}
T_c R& \geq \Td  \cdot \E \bigl[  \log \bigl(\pi e P  \|\hat{\hvu}_{\tau}\|^2\bigr)  \bigr]   -   \Td  \cdot \E \Bigl[   \log \Bigl(\pi e \cdot    \frac{  P \|\hat{\hvu}_{\tau}\|^2 \cdot  ( P \sigma^2 +1 )   }{ P \|\hat{\hvu}_{\tau}\|^2   + P \sigma^2 +1 }  \Bigr)   \Bigr]    \non  \\   
&=  \Td \cdot \E  \Bigl[  \log \bigl(1 +  \frac{ P  \|\hat{\hvu}_{\tau}\|^2}{P \sigma^2 +1   } \bigr) \Bigr] .   \label{eq:bound98244}  
\end{align}
Note that   $\hat{\hvu}_{\tau} \sim \Cc\Nc(\zerou,  \frac{P}{P+1} I )$ and $\delta \hat{\hvu}_{\tau} \sim \Cc\Nc(\zerou,  2 I )$, for \[ \delta \defeq \sqrt{\frac{2(P+1)}{P}}.\]  It then implies that  $\| \delta \hat{\hvu}_{\tau}\|^2 $ is  chi-squared distributed with $2\Tt$ degrees of freedom,  that is, $\| \delta\hat{\hvu}_{\tau}\|^2 \sim \Xc^2(2\Tt)$.
If  $\uv$ is  a chi-square random variable with $k \geq 2$ degrees of freedom, its probability density function is given by  \eqref{eq:chisquared11} and its probability density function is zero  when $\uv \leq  0$.
Therefore, without loss of generality we consider $\| \delta \hat{\hvu}_{\tau}\|^2 $ as  a \emph{positive}  chi-squared random variable  with $2\Tt$ degrees of freedom. Then, from \eqref{eq:bound98244} we further have
\begin{align}
T_c R&\geq  \Td \cdot \E  \Bigl[  \log \bigl(1 +  \frac{ \frac{P}{\delta^2}  \| \delta \hat{\hvu}_{\tau}\|^2}{P \sigma^2 +1   } \bigr) \Bigr]  \non\\
& =  \Td \cdot  \E  \bigl[   \log  ( \| \delta \hat{\hvu}_{\tau}\|^2 )  \bigr]    + \Td \cdot   \E  \bigl[    \log  \bigl( \frac{1}{\|\delta \hat{\hvu}_{\tau}\|^2 } + \frac{ P/\delta^2  }{  P \sigma^2 +1    } \bigr) \bigr]     \non\\ 
&\geq  \Td \cdot  \E  \bigl[   \log  ( \|\delta \hat{\hvu}_{\tau}\|^2 )  \bigr]   + \Td \cdot  \log \Bigl( \frac{1}{\E  \|\delta \hat{\hvu}_{\tau}\|^2}  + \frac{  P/\delta^2  }{  P \sigma^2 +1    }  \Bigr)    \label{eq:bound9385}  \\ 
& =  \Td \cdot  \E  \bigl[   \log  ( \| \delta \hat{\hvu}_{\tau}\|^2 )  \bigr]  + \Td \cdot  \log \Bigl( \frac{1}{  2 \Tt   }  + \frac{  P/\delta^2 }{  P \sigma^2 +1    }  \Bigr)    \label{eq:bound2386}  
\end{align}
where \eqref{eq:bound9385} follows from the fact that $g(x) = \log (\frac{1}{x} + c )$ is a convex function since $\frac{\partial^2 g(x)}{\partial x^2} \geq 0$ for any $x>0$,  where $c> $ is a constant;
\eqref{eq:bound2386} results from that $\E  \| \delta  \hat{\hvu}_{\tau}\|^2 =  2 \Tt $,   since $\| \delta\hat{\hvu}_{\tau}\|^2 \sim \Xc^2(2\Tt)$.
Let us now  focus on the first term in  \eqref{eq:bound2386}. 
From Lemma~\ref{lm:chi} described in Appendix~\ref{sec:idealc}, we note that if  $\uv \sim \Xc^2(k) $ is  a chi-square random variable with $k \geq 2$ degrees of freedom,  $k$ is an even number, then 
\begin{align}
\E   [  \log  \uv ]  & \geq    \log \max\{ k-2, 1 \}  \non
\end{align}
which, together with the fact that $\| \delta\hat{\hvu}_{\tau}\|^2 \sim \Xc^2(2\Tt)$, implies that 
\begin{align}
\E   [  \log  ( \| \delta \hat{\hvu}_{\tau}\|^2 ) ]   & \geq    \log (\max\{ 2(\Tt -1), 1 \}).  \label{eq:bound3677}   
\end{align}

Finally, by plugging \eqref{eq:bound3677} into  \eqref{eq:bound2386} we have: 
\begin{align}
T_c R&\geq   \Td \cdot  \log \bigl(\max\{2(\Tt -1) , \  1 \}\bigr)     + \Td \cdot  \log \Bigl( \frac{1}{  2 \Tt   }  + \frac{  P/\delta^2 }{  P \sigma^2 +1    }  \Bigr)          \non   \\ 
& =    \Td \cdot  \log \Bigl( \frac{ \max\{2(\Tt -1) , \  1 \} }{  2 \Tt   }  + \frac{ \frac{ P}{\delta^2 } \cdot \max\{2(\Tt -1) , \  1 \} }{  P \sigma^2 +1    }  \Bigr)              \non \\ 
& =    \Td \cdot  \log \Bigl(   1 -    \frac{1 }{   \max\{\Tt , 2 \}  }  + \frac{ P  \cdot\max\{2(\Tt -1) , \  1 \} }{  P \sigma^2\delta^2  +\delta^2    }  \Bigr)              \non \\ 
& =    ( T_c - \Tt)  \cdot  \log \Bigl(   1 -    \frac{1 }{   \max\{ \Tt , 2  \}  }  + \frac{ P  \cdot \max\{ (\Tt -1) , 1/2 \} }{  2  + \frac{1}{P}    }  \Bigr)               \label{eq:ratelbfinal6288}  
\end{align}
where $\delta^2 \defeq \frac{2(P+1)}{P}$,  $\sigma^2 \defeq \frac{1}{P+1}$ and  $\Td \defeq  ( T_c - \Tt) $. By dividing the two sides of \eqref{eq:ratelbfinal6288} with $T_c$, it gives the final lower bound on the achievable rate of the proposed scheme. At this point we complete the proof.

\section{Proof of Lemma~\ref{lm:estimate1}  \label{sec:estimate1}}

In this section we will prove  Lemma~\ref{lm:estimate1}.
Specifically, we will prove that, for any  vector $\evu_i \in \Cc^{M \times 1 }$ for $i \in \Zc $, and for
\begin{align}
  \Km_{t}  &  \defeq I_M -  \sum_{ i =  1 }^{t-1}   \frac{\Km_{i} \evu_i^{*}  \evu^\T_i  \Km_{i}}{\evu^\T_i \Km_{i} \evu_i^{*}  +1 },    \quad    t =2, 3, 4, \cdots      \label{eq:1estm3255}
  \end{align}
and $\Km_{1}    \defeq I_M $, then 
\[ \mathbf{0} \preceq \Km_{t}  \preceq  I_M,  \quad \forall t \in \{2,3, \cdots \}.\]

From the definition in \eqref{eq:1estm3255}, we have 
 \begin{align}
\Km_{t+1}  =   \Km_{t}  -  \frac{\Km_{t} \evu_t^{*}  \evu^\T_t  \Km_{t}}{\evu^\T_t \Km_{t} \evu_t^{*}  +1} ,  \quad   t \in \{1,2,3, \cdots \}. \label{eq:inequality1122}
  \end{align}
  One can easily check from \eqref{eq:inequality1122} that,   if $\Km_{t}$ is a Hermitian  matrix, then $\Km_{t+1}$ is also a Hermitian   matrix for  $t \in \{1,2,3, \cdots \}$.  Since  $\Km_{1}    \defeq I_M $ is a Hermitian  matrix, then from the above recursive argument  it is  true that  $\Km_{t}$ is a Hermitian   matrix for  $t \in \{1,2,3, \cdots \}$.

In the second step, we will prove that if the Hermitian matrix $\Km_{t}$ is  positive semidefinite, then the Hermitian matrix $\Km_{t+1}$ is also  positive semidefinite  for  $t \in \{1,2,3, \cdots \}$.  Specifically, if the Hermitian matrix $\Km_{t}$ is  positive semidefinite, $t \in \{1,2,3, \cdots \}$, then  for  any  vector $\xvu \in \Cc^{M \times 1 }$ we have 
  \begin{align}
  \xvu^\H \Km_{t+1} \xvu  
   &=  \xvu^\H\bigl( \Km_{t} -   \frac{\Km_{t} \evu_t^{*}  \evu^\T_t  \Km_{t}}{ \evu^\T_t \Km_{t} \evu_t^{*}  +1}   \bigr)\xvu   \label{eq:inequality2567}  \\
 & =  \xvu^\H\Km_{t} \xvu  -  \frac{ |\xvu^\H\Km_{t} \evu_t^{*} |^2}{ \evu^\T_t \Km_{t} \evu_t^{*}  +1}     \non\\
 &=  \bvu^\H \bvu -  \frac{ |\bvu^\H \cvu |^2}{ \cvu^\H\cvu +1}      \non\\
 &=   \frac{ \|\bvu\|^2   +  \|\bvu\|^2\|\cvu\|^2  - |\bvu^\H \cvu |^2  }{ \|\cvu\|^2 +1}      \non\\
 &\geq    \frac{\|\bvu\|^2  +  \|\bvu\|^2\|\cvu\|^2 -  \|\bvu\|^2\|\cvu\|^2   }{ \|\cvu\|^2 +1}    \label{eq:inequality134} \\
 &\geq  0   
  \end{align}
where \eqref{eq:inequality2567} is from the definition in \eqref{eq:inequality1122};
  the Hermitian positive semidefinite matrix $\Km_{t}$ is  decomposed  as $\Km_{t} \defeq  \Um \Lambda \Um^\H =  \Um \Lambda^{1/2} \Um^\H\Um \Lambda^{1/2} \Um^\H$ using singular value decomposition method, where $\Um$ and $\Lambda$ are the unitary matrix and diagonal matrix respectively,   $\bvu \defeq   \Um \Lambda^{1/2} \Um^\H  \xvu$ and $\cvu \defeq   \Um \Lambda^{1/2} \Um^\H \evu^{*}$;  \eqref{eq:inequality134} results from Cauchy-Schwarz inequality, i.e., $|\bvu^\H \cvu |^2  \leq  \|\bvu\|^2\|\cvu\|^2$.   
Since the Hermitian matrix $\Km_{1}$ is  positive semidefinite, then  from the above recursive argument  it is  true that   the  Hermitian   matrix $\Km_{t}$ is positive semidefinite, $t \in \{1, 2,3, \cdots \}$.

 From the above steps we have proved that  the  matrix $\Km_{t}$ is Hermitian  positive semidefinite, $t \in \{1, 2,3, \cdots \}$, which means that 
  \[\mathbf{0} \preceq \Km_{t} ,  \quad \forall t \in \{1, 2,3, \cdots \}. \]
 In the next step we will prove that 
   \[ \Km_{t}  \preceq  I_M  ,  \quad \forall t \in \{1, 2,3, \cdots \}. \]
 From the definition in  \eqref{eq:1estm3255}, we have 
 \begin{align}
 I_M - \Km_{t}  \defeq   \sum_{ i =  1 }^{t-1}   \frac{\Km_{i} \evu_i^{*}  \evu^\T_i  \Km_{i}}{\evu^\T_i \Km_{i} \evu_i^{*}  +1 } ,  \quad    t =2, 3, 4, \cdots .      \label{eq:inequality2466} 
  \end{align}
 Since matrix $\Km_{t}$ is Hermitian  positive semidefinite, $t \in \{1, 2,3, \cdots \}$,  it holds true that 
  \begin{align}
 \frac{\Km_{t} \evu_t^{*}  \evu^\T_t  \Km_{t}}{\evu^\T_t \Km_{t} \evu_t^{*}  +1 }  \succeq  \mathbf{0} , \quad    t =1, 2, 3, \cdots   \label{eq:inequality4788} 
  \end{align}
because  for  any  vector $\xvu \in \Cc^{M \times 1 }$ we have 
$   \xvu^\H\bigl(   \frac{\Km_{t} \evu_t^{*}  \evu^\T_t  \Km_{t}}{ \evu^\T_t \Km_{t} \evu_t^{*}  +1}   \bigr)\xvu   =  \frac{ |\xvu^\H\Km_{t} \evu_t^{*} |^2}{ \evu^\T_t \Km_{t} \evu_t^{*}  +1}  \geq 0 $.
Then combining \eqref{eq:inequality4788}  and  \eqref{eq:inequality2466} it gives  
 \begin{align}
 I_M - \Km_{t}  \defeq   \sum_{ i =  1 }^{t-1}   \frac{\Km_{i} \evu_i^{*}  \evu^\T_i  \Km_{i}}{\evu^\T_i \Km_{i} \evu_i^{*}  +1 } \succeq  \mathbf{0} ,  \quad    t =2, 3, 4, \cdots     \non 
  \end{align}
which implies that   \[ I_M   \succeq    \Km_{t} , \quad    t =2, 3, 4, \cdots .\] 
At this point, we complete the proof.

\section{Proof of Lemma~\ref{lm:densityh}  \label{sec:lmdensityh}}

This section provides the proof of Lemma~\ref{lm:densityh} (see Section~\ref{sec:converse2g}). 
For the ease of description, we rewrite $\hat{\uvu}_{t}$ and $\Omega_{t}$ (see \eqref{eq:mmseuv11} and \eqref{eq:mmseuv22}) with the following forms:
 \begin{align}
\hat{\uvu}_{t} \defeq \hat{\uvu}_{t-1}  +   \hat{\uvu}_{t-1,t}, \quad \quad \Omega_t \defeq\Omega_{t-1} - \Omega_{t-1,t} \label{eq:mmseuv12} 
\end{align}
where
\begin{align*}
  \hat{\uvu}_{t-1,t} &\defeq \Omega_{t-1} \Am^\H_{t-1}( \Am_{t-1}\Omega_{t-1}\Am^\H_{t-1}  + I_N )^{-1} (\yvut[t-1] - \Am_{t-1} \hat{\uvu}_{t-1} )  \\
\Omega_{t-1,t}  &\defeq \Omega_{t-1} \Am^\H_{t-1} ( \Am_{t-1}\Omega_{t-1}\Am^\H_{t-1}  + I_N )^{-1}  \Am_{t-1}\Omega_{t-1}.
\end{align*}
 for $t = 2,3, \cdots, T$.  
Lemma~\ref{lm:densityh} is the extension of the well-known result of MMSE estimator (see, for example, \cite[Chapter 15.8]{Kay:93}) that is expressed in the following Lemma~\ref{lm:mmse}.

\vspace{5pt}
\begin{lemma} \cite[Chapter 15.8]{Kay:93}  \label{lm:mmse}
Consider two independent random vectors $\uvu\in \Cc^{M\times 1} \sim \Cc\Nc (\hat{\uvu}_1,   \Omega_{1})$ and $\zvu \in \Cc^{N\times 1}\sim \Cc\Nc ( \zerou ,  I_N)$,  for some fixed $\hat{\uvu}_1$ and Hermitian positive semidefinite $ \Omega_{1}$.   Let \[\yvu=A\uvu+ \zvu\] where   $A\in \Cc^{N\times M}$  is  a fixed matrix.   Then, the conditional density of $\uvu$ given $\yvu$ is  \[ \uvu  | \yvu  \sim \Cc\Nc ( \hat{\uvu},   \Omega)\] where 
\begin{align*}
\hat{\uvu} &=\hat{\uvu}_1  + \Omega_{1} A^\H ( A \Omega_{1}A^\H  + I_N )^{-1} (\yvu - A \hat{\uvu}_1) \\
\Omega &=\Omega_{1} - \Omega_{1}A^\H ( A\Omega_{1}A^\H  + I_N )^{-1}  A\Omega_{1} .
\end{align*}
Furthermore,  the two random vectors $\hat{\uvu}$ and $\vvu \defeq \uvu - \hat{\uvu}$ are independent, and we have  \[  \vvu  | \yvu  \sim \Cc\Nc ( \zerou,   \Omega).\] 
\end{lemma}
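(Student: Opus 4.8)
The plan is to prove the claim by exploiting joint proper complex Gaussianity together with an orthogonality (estimation-error) decomposition, rather than by manipulating conditional densities directly. This route has the advantage that it never requires inverting $\Omega_1$, which is only assumed Hermitian positive semidefinite and may well be singular (so $\uvu$ need not even possess a density on $\Cc^M$). The only matrix inverse appearing in the statement, namely $(A\Omega_1 A^\H + I_N)^{-1}$, always exists, since $A\Omega_1 A^\H \succeq \mathbf{0}$ forces $A\Omega_1 A^\H + I_N \succeq I_N \succ \mathbf{0}$; I would record this invertibility at the outset.

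First I would observe that $(\uvu,\zvu)$ is jointly proper complex Gaussian, being a stacking of two independent proper complex Gaussian vectors, so that its affine image $(\uvu,\yvu)$ with $\yvu = A\uvu + \zvu$ is again jointly proper complex Gaussian. Writing $\tilde{\uvu}\defeq \uvu-\hat{\uvu}_1$ and $\tilde{\yvu}\defeq \yvu - A\hat{\uvu}_1 = A\tilde{\uvu}+\zvu$, a direct computation that uses only the independence of $\tilde{\uvu}$ and $\zvu$ yields the second-order statistics
\[
\E[\tilde{\uvu}\tilde{\uvu}^\H]=\Omega_1,\quad \E[\tilde{\uvu}\tilde{\yvu}^\H]=\Omega_1 A^\H,\quad \E[\tilde{\yvu}\tilde{\yvu}^\H]=A\Omega_1 A^\H + I_N.
\]
I would then define the gain $\Gm\defeq \Omega_1 A^\H(A\Omega_1 A^\H+I_N)^{-1}$, set $\hat{\uvu}\defeq \hat{\uvu}_1+\Gm\tilde{\yvu}$ (which is exactly the estimator in the statement), and $\vvu\defeq \uvu-\hat{\uvu}=\tilde{\uvu}-\Gm\tilde{\yvu}$.

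The crucial step is the orthogonality relation $\E[\vvu\tilde{\yvu}^\H]=\Omega_1 A^\H-\Gm(A\Omega_1 A^\H+I_N)=\mathbf{0}$, which follows immediately from the statistics above and the definition of $\Gm$. Using this, I would compute the error covariance $\E[\vvu\vvu^\H]=\E[\vvu\tilde{\uvu}^\H]=\Omega_1-\Gm A\Omega_1=\Omega$, identifying it with the $\Omega$ in the statement. Since $(\vvu,\tilde{\yvu})$ is jointly proper complex Gaussian (a linear image of $(\uvu,\yvu)$) and has vanishing cross-covariance, properness upgrades uncorrelatedness to genuine independence, so $\vvu$ is independent of $\yvu$ and $\vvu\sim\Cc\Nc(\zerou,\Omega)$. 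The conclusions then drop out: because $\hat{\uvu}$ is a deterministic affine function of $\yvu$ while $\vvu$ is independent of $\yvu$, conditioning on $\{\yvu=y\}$ freezes $\hat{\uvu}$ at its value $\hat{\uvu}(y)$ and leaves $\vvu$ with its unconditional law, whence $\uvu\mid\yvu=y\sim\Cc\Nc(\hat{\uvu}(y),\Omega)$ and $\vvu\mid\yvu\sim\Cc\Nc(\zerou,\Omega)$; in particular $\E[\uvu\mid\yvu]=\hat{\uvu}$ recovers the MMSE-estimator formula, and the independence of $\hat{\uvu}$ and $\vvu$ follows from $\vvu\perp\yvu$ together with $\hat{\uvu}$ being a function of $\yvu$.

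The step I expect to be the main obstacle is the careful justification that zero cross-covariance implies independence in the complex setting: this genuinely relies on properness, i.e. on the vanishing of the pseudo-covariances $\E[\vvu\tilde{\yvu}^\T]$ as well as of $\E[\vvu\tilde{\yvu}^\H]$. I would therefore track the properness of every vector through each linear transformation, since for complex Gaussian vectors that are not proper, uncorrelatedness does \emph{not} imply independence, and the clean real-form conditioning identities would fail.
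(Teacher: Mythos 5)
Your proof is correct. One contextual point: the paper does not actually prove this lemma --- it is quoted verbatim from \cite[Chapter 15.8]{Kay:93}, with the paper's own effort going into the dynamic extension (Lemma~\ref{lm:densityh}). That said, your argument is essentially the same machinery the paper itself deploys in Appendix~E to prove that extension: there, in the $t=3$ step, the author forms the decorrelated vector $\uvu - \Km_{2,1}\Km_{1,1}^{-1}\yvu_2$, verifies that its (conditional) cross-covariance with $\yvu_2$ vanishes, and invokes joint properness (citing \cite{NM:93}) to upgrade uncorrelatedness to independence --- precisely your orthogonality-plus-properness route, applied conditionally. Your self-contained version buys two things over the bare textbook citation: (i) it never inverts $\Omega_1$, so it covers the singular case in which $\uvu$ has no density on $\Cc^{M}$ and the conditional ``density'' must be read as a possibly degenerate Gaussian law --- which is in fact how the lemma gets used downstream, since the matrices $\Omega_t$ in the paper become rank-deficient after rank-one updates; (ii) it makes the complex-proper subtlety explicit, whereas Kay's Chapter~15.8 treats the real Bayesian linear model; your flagged caveat, that vanishing Hermitian cross-covariance alone is not enough and the pseudo-covariance $\E[\vvu\tilde{\yvu}^\T]$ must also vanish, is exactly the right point, and it holds automatically here because $(\vvu,\tilde{\yvu})$ is an affine image of the jointly proper pair $(\uvu,\zvu)$. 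All intermediate computations check out: the gain $\Gm$ is well defined since $A\Omega_1A^\H + I_N \succeq I_N$ is invertible, the orthogonality relation $\E[\vvu\tilde{\yvu}^\H]=\mathbf{0}$ is immediate from the stated second-order statistics, $\E[\vvu\vvu^\H]=\Omega_1-\Gm A\Omega_1=\Omega$ follows, and the final conditioning step (freezing $\hat{\uvu}$, which is a measurable function of $\yvu$, while $\vvu$ retains its unconditional law) is sound and also yields the claimed independence of $\hat{\uvu}$ and $\vvu$.
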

\vspace{5pt}

Note that in Lemma~\ref{lm:mmse},  $\hat{\uvu}$ and $\vvu$ are two jointly proper complex Gaussian vectors and the covariance matrix of those two vectors vanishes, which implies that   $\hat{\uvu}$ and $\vvu$ are independent. The lack of correlation implies independence for two jointly proper Gaussian vectors (see, e.g., \cite{NM:93}).  
The proof of Lemma~\ref{lm:densityh} are described as follows. 

\subsection{Proof for the case with $t=2$}

We first consider the simple case with $t=2$. 
From Lemma~\ref{lm:mmse} we  conclude that the conditional density of $\uvu$ given $(\yvut[1], w)$ is
\begin{align}
 \uvu |(\yvut[1], w)  \sim  \Cc\Nc ( \hat{\uvu}_{2},   \Omega_{2})  \label{eq:mmseb9275}
\end{align}
where 
\begin{align}
\hat{\uvu}_{2} &=  \hat{\uvu}_{1}  + \Omega_{1}\Am_1^\H ( \Am_1\Omega_{1}\Am_1^\H  + I_N )^{-1} (\yvut [1] - \Am_1 \hat{\uvu}_{1} )          \label{eq:mmseb3924}
\end{align}
and
\begin{align}
\Omega_{2} &=  \Omega_{1} - \Omega_{1} \Am_1^\H ( \Am_1\Omega_{1}\Am_1^\H  + I_N )^{-1}  \Am_1\Omega_{1}        \label{eq:mmseb0221}
\end{align}
where $\Am_1$ is a deterministic function of $w$ by definition.
It follows from Lemma~\ref{lm:mmse} that  $\hat{\uvu}_{2}$ and $\vvu_2 \defeq \uvu  - \hat{\uvu}_{2}$ are independent;  the conditional density of $\vvu_2$ given $(\yvut [1], w)$ is \[ \vvu_2 |(\yvut [1], w) \sim \Cc\Nc ( \zerou,   \Omega_{2}).\] 

\subsection{Proof for the case with $t=3$}

We then consider the case with $t=3$ ($T \geq 3$).  
By using the result in \eqref{eq:mmseb9275}, that is,  $\uvu | ( \yvut[1], w) \sim  \Cc\Nc ( \hat{\uvu}_{2},   \Omega_{2})$, it yields the following conclusion:
\begin{align}
 \Bmatrix{ \zvu_2   \\  \uvu  } \Big| (\yvu_1,  w ) \  \sim \  \Cc\Nc \Bigl( \Bmatrix{  \zerou  \\  \hat{\uvu}_{2}  },    \Bmatrix{   I_N   &   \mathbf{0}_{N \times M}   \\   \mathbf{0}_{M\times N}  &   \Omega_{2}   } \Bigr) .     \label{eq:cg29563}
 \end{align} 
Let us now look at the following vector  
\begin{align}
 \Bmatrix{\yvu_2 \\ \uvu }  = \Bmatrix{ I_N     &  \Am_2   \\    \mathbf{0}_{M\times N}    &      I_M  } \Bmatrix{  \zvu_2 \\  \uvu }     \label{eq:jointG8823}
 \end{align}
 where $\Am_2$ is a deterministic function of $(\yvu_1, w)$.  It is well known that the affine transformation of a complex proper Gaussian  vector  also yields a complex proper Gaussian  vector, that is, if $\evu \in \Cc^{q\times 1}  \sim  \Cc\Nc( \bunderline{\mu}, Q)$, then it holds true that $B\evu  \sim  \Cc\Nc(B \bunderline{\mu}, BQB^{\H})$ for  fixed $\bunderline{\mu} \in  \Cc^{q\times 1} $,  $B\in \Cc^{p\times q}$ and $Q\in \Cc^{q\times q}$ (see, e.g., \cite{Tel:99,NM:93}).  
 Therefore, by combining \eqref{eq:cg29563} and \eqref{eq:jointG8823} it gives 
\begin{align}
 \Bmatrix{ \yvu_2 \\ \uvu } \Big| (\yvu_1,  w )  \sim  \Cc\Nc \Bigl( \Bmatrix{  \Am_2\hat{\uvu}_{2} \\ \hat{\uvu}_{2}  },     \underbrace{ \Bmatrix{\Km_{1,1}  &  \Km_{1,2}  \\  \Km_{2,1}  &     \Km_{2,2}   }}_{\defeq \Km}\Bigr)     \label{eq:condproy1y7234}
 \end{align} 
where  
\begin{align}
 \Km_{2,2}  =  \Omega_{2},   \quad 
    \Km_{2,1} =  \Omega_{2} \Am_2^\H ,   \quad
        \Km_{1,2}  =  \Am_2\Omega_{2}   ,  \quad 
        \Km_{1,1}   = \Am_2\Omega_{2} \Am_2^\H +  I_N.   \label{eq:K22}
\end{align} 
Let us consider a new vector obtained from the following affine transformation:
\begin{align*}
  \Bmatrix{ \yvu_2 \\  \uvu  \!- \! \Km_{2,1}  \Km_{1,1}^{-1}  \yvu_2 }  = B \Bmatrix{ \yvu_2 \\ \uvu }  \quad \text{where}  \quad B=\Bmatrix{ I_N       &   \mathbf{0}_{N\times M}    \\     -\Km_{2,1}  \Km_{1,1}^{-1}  & I_M  }.
 \end{align*} 
As mentioned, affine transformation of a complex proper Gaussian  vector  also yields a complex proper Gaussian  vector. Therefore, we have
 \begin{align}
   \Bmatrix{ \yvu_2 \\  \uvu  \!- \! \Km_{2,1}  \Km_{1,1}^{-1}  \yvu_2 }   \Big| (\yvu_1,  w )  \sim  \Cc\Nc \Bigl( \Bmatrix{  \Am_2\hat{\uvu}_{2} \\ \hat{\uvu}_{2}  - \Km_{2,1}  \Km_{1,1}^{-1}  \Am_2\hat{\uvu}_{2}  },   \Bmatrix{\Km_{1,1}  &  \mathbf{0}_{N\times M}   \\   \mathbf{0}_{M\times N}   &     \Km_{2,2}    \!- \! \Km_{2,1}  \Km_{1,1}^{-1}\Km_{1,2}  }\Bigr).     \label{eq:congau3884}
 \end{align}  
 The  result in \eqref{eq:congau3884} implies that
  \begin{align}
    \uvu  \!- \! \Km_{2,1}  \Km_{1,1}^{-1}  \yvu_2 \  \big| ( \yvu_1,  w )  \sim  \Cc\Nc \Bigl(  \hat{\uvu}_{2}  - \Km_{2,1}  \Km_{1,1}^{-1}  \Am_2\hat{\uvu}_{2}  ,    \   \Km_{2,2}    - \Km_{2,1}  \Km_{1,1}^{-1}\Km_{1,2}  \Bigr).     \label{eq:congau5799}
 \end{align}   
 The  result in \eqref{eq:congau3884} also implies that  the two vectors $ \yvu_2$ and $ \uvu  -  \Km_{2,1}  \Km_{1,1}^{-1}  \yvu_2$ are conditionally independent given $ (\yvu_1,  w )$  because their conditional cross-covariance   vanishes.  Based on this independence and \eqref{eq:congau5799},  it gives 
 \begin{align}
    \uvu  -  \Km_{2,1}  \Km_{1,1}^{-1}  \yvu_2 \  \big| (\yvu_2, \yvu_1,  w )  \sim  \Cc\Nc \Bigl(  \hat{\uvu}_{2}  - \Km_{2,1}  \Km_{1,1}^{-1}  \Am_2\hat{\uvu}_{2}  ,   \    \Km_{2,2}    - \Km_{2,1}  \Km_{1,1}^{-1}\Km_{1,2}  \Bigr)    \label{eq:congau3578}
 \end{align}  
and 
 \begin{align}
    \underbrace{  \uvu \! -\! \Km_{2,1}  \Km_{1,1}^{-1}  \yvu_2    \!+\!   \Km_{2,1}  \Km_{1,1}^{-1}  \yvu_2}_{\uvu}  \  \big| (\yvu_2, \yvu_1,  w )  \sim  \Cc\Nc \bigl(  \underbrace{ \hat{\uvu}_{2} \! +\!  \Km_{2,1}  \Km_{1,1}^{-1} (\yvu_2 \! -\! \Am_2\hat{\uvu}_{2}) }_{\hat{\uvu}_{3}} ,   \   \underbrace{ \Km_{2,2}   \! -\! \Km_{2,1}  \Km_{1,1}^{-1}\Km_{1,2}}_{   \Omega_{3}}  \bigr).     \label{eq:congau9925}
 \end{align}  
Finally, plugging \eqref{eq:K22} into  \eqref{eq:congau9925} leads to the following conclusion:   
\[  \uvu | (\yvu_2, \yvu_1, w) \sim \Cc\Nc ( \hat{\uvu}_{3},   \Omega_{3}) \]
where $\hat{\uvu}_{3} =  \hat{\uvu}_{2}  +   \Omega_{2} \Am^\H_{2}( \Am_{2}\Omega_{2}\Am^\H_{2}  + I_N )^{-1} (\yvut[2] - \Am_{2} \hat{\uvu}_{2} )$ and $\Omega_{3}= 
\Omega_{2}- \Omega_{2} \Am^\H_{2} ( \Am_{2}\Omega_{2}\Am^\H_{2}  + I_N )^{-1}  \Am_{2}\Omega_{2}$, as defined in \eqref{eq:mmseuv12}.
Let $\vvu_3 \defeq \uvu - \hat{\uvu}_{3}$. Then, the conditional density of $\vvu_3 $ given $(\yvu_2,\yvu_1 w) $ is 
\[ \vvu_3 | (\yvu_2, \yvu_1, w) \sim  \Cc\Nc ( \zerou,   \Omega_{3}).\] 
Note that  $\hat{\uvu}_{3}$ is conditionally independent of  $ \vvu_3$ given $(\yvu_1 , w)$,
 since 
 \begin{align}
     \text{Cov} (\hat{\uvu}_{3}, \vvu_3  \big| \yvu_1 , w)    \defeq \E\Bigl[  \bigl(\hat{\uvu}_{3} -  \E[ \hat{\uvu}_{3} | \yvu_1, w] \bigr) \bigl( \vvu_3  -  \E[ \vvu_3 | \yvu_1, w] \bigr)^\H  \big| \yvu_1 , w \Bigr]    =\mathbf{0} \non
 \end{align} 
 and  the  vectors  $\hat{\uvu}_{3}$ and   $ \vvu_3$ are two jointly proper Gaussian vectors given $(\yvu_1 , w)$.
The lack of correlation implies independence for two jointly proper Gaussian vectors.

\subsection{Proof for the general case when $t=4, 5,\cdots, T$}

For the general case when $t=4, 5,\cdots, T$, the proof is similar to the previous case.
At this point we complete the proof.

\section{Proof of Lemma~\ref{lm:ExpCht}  \label{sec:lmExpCht}}

For $\hat{\hvu}_t$ defined in  \eqref{eq:mmse424} and  \eqref{eq:mmse9832}, we will prove the following bounds
\begin{align}
\E   \bigl[  \|\hat{\hvu}_t\|^2  \bigr]  & \leq     [ (t -1) \  \text{mod} \  T_c]    \label{eq:chbound11}\\ 
\E   \bigl[  \|\hat{\hvu}_t\|^2  \bigr]   &\leq    M     \label{eq:chbound12}  \\ 
\E   \bigl[  \|\hat{\hvu}_t\|^4  \bigr]   &\leq      2 [ (t -1) \ \text{mod} \  T_c]^2 + 5 [ (t -1) \  \text{mod} \  T_c]      \label{eq:chbound13}    \\ 
\E   \bigl[  \|\hat{\hvu}_t\|^4  \bigr]   &\leq     M^2 +2M    \label{eq:chbound14}     \\ 
 \E \bigl[  ( \|\hat{\hvut}\|^2 +1)^2\bigr]  & \leq   \min\bigl\{M^2 + 4M +1 ,  \     2 [ (t-1) \  \text{mod} \  T_c]^2 + 7 [ (t -1) \ \text{mod} \  T_c] +1    \bigr\}  \label{eq:chbound15}  
\end{align}
for $t=1,2, \cdots, n$.
\vspace{5pt}

Let us provide some lemmas that will be used in our proof. At first we rewrite the definitions of $\hat{\hvut}$ and $\Omega_{t}$  in \eqref{eq:mmse424} and \eqref{eq:mmse9832}  as
\begin{align}
\hat{\hvu}_{t+1} & \defeq \hat{\hvu}_{t}   +  \hat{\hvu}_{t, t+1} , \quad \quad      \hat{\hvu}_{t,t+1}  \defeq   \frac{\Omega_{t} \xvu^{*}_{t} (\yt[t]  - \xvu^\T_{t} \hat{\hvu}_{t} )  }{  \xvu^\T_{t} \Omega_{t} \xvu^{*}_{t}  +1 }   \quad \text{for}\quad t +1 \neq  \ell T_c +1   \label{eq:mmse424a}  \\
  \Omega_{t+1}  & \defeq \Omega_{t} -   \frac{\Omega_{t} \xvu^{*}_{t}  \xvu^\T_{t}  \Omega_{t}^\H}{\xvu^\T_{t} \Omega_{t} \xvu^{*}_{t}  +1 }  \  \quad\quad\quad\quad\quad  \quad\quad\quad\quad\quad   \quad \text{for}\quad t +1 \neq  \ell T_c +1 \label{eq:mmse9832a}
  \end{align}
 and   $\hat{\hvu}_{\ell T_c +1} = \zerou$, \ $ \Omega_{\ell T_c +1} = I_M $,  $ \forall \ell  \in \{0, 1, \cdots, L-1\}$.

\vspace{5pt}
\begin{lemma}   \label{lm:independent1}
For   $\hat{\hvu}_{t+1}$ and $\hat{\hvu}_{t,t+1}$  defined as  in \eqref{eq:mmse424a} and \eqref{eq:mmse9832a},  $t \in \{ 1, 2,  \cdots, T_c -1\}$, we have 
\begin{align*}
\E  \bigl[ || \hat{\hvu}_{1} + \hat{\hvu}_{1,2} +\hat{\hvu}_{2,3} + \cdots +  \hat{\hvu}_{t,t+1} ||^2 \bigr]   =\E \bigl[  ||\hat{\hvu}_{1}||^2 +  || \hat{\hvu}_{1, 2} ||^2  + || \hat{\hvu}_{2, 3} ||^2 + \cdots  + || \hat{\hvu}_{t, t+1} ||^2\bigr] .  
\end{align*}
\end{lemma}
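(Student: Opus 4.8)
The plan is to prove Lemma~\ref{lm:independent1} by showing that the increments $\hat{\hvu}_{t,t+1}$ form a sequence of orthogonal innovations in the mean-square sense, so that all cross terms in the expansion of the squared norm vanish in expectation. The starting point is the telescoping identity $\hat{\hvu}_{t+1} = \hat{\hvu}_{1} + \sum_{k=1}^{t} \hat{\hvu}_{k,k+1}$, which follows by iterating \eqref{eq:mmse424a}. Expanding the Hermitian squared norm gives $\|\hat{\hvu}_{1} + \sum_{k} \hat{\hvu}_{k,k+1}\|^2 = \sum_{k} \|\hat{\hvu}_{k,k+1}\|^2 + \|\hat{\hvu}_1\|^2 + \sum_{i \neq j} \hat{\hvu}_{i,i+1}^\H \hat{\hvu}_{j,j+1}$, so the claim reduces to establishing that $\E[\hat{\hvu}_{i,i+1}^\H \hat{\hvu}_{j,j+1}] = 0$ whenever $i \neq j$ (and that the terms involving $\hat{\hvu}_1 = \zerou$ drop out trivially).

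First I would record the crucial \emph{zero-conditional-mean} property of each innovation, namely $\E[\hat{\hvu}_{t,t+1} \mid \yv^{t-1}, \Wme] = \zerou$. This holds because, conditioned on $(\yv^{t-1}, \Wme)$, the quantities $\Omega_t$, $\xvu_t$ and $\hat{\hvu}_t$ are all deterministic: $\xvu_t$ is a function of $(\yv^{t-1}, \Wme)$ by the encoding maps \eqref{eq:mapx}, and $\Omega_t, \hat{\hvu}_t$ are functions of the same data by \eqref{eq:mmse424a}--\eqref{eq:mmse9832a}. Hence only $\yv_t = \xvu_t^\T \hvu_t + \zv_t$ carries randomness. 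By Lemma~\ref{lm:densityh}, $\hat{\hvu}_t$ is the conditional mean of $\hvu_t$ given $(\yv^{t-1}, \Wme)$, so $\E[\yv_t - \xvu_t^\T \hat{\hvu}_t \mid \yv^{t-1}, \Wme] = \xvu_t^\T(\E[\hvu_t \mid \yv^{t-1}, \Wme] - \hat{\hvu}_t) = 0$, using that $\zv_t$ is independent and zero-mean. Substituting this into the definition of $\hat{\hvu}_{t,t+1}$ in \eqref{eq:mmse424a}, where the scalar prefactor $\Omega_t \xvu_t^{*}/(\xvu_t^\T \Omega_t \xvu_t^{*} + 1)$ is deterministic given the conditioning, yields the claim.

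Then I would handle the cross terms. For $i < j$, the increment $\hat{\hvu}_{i,i+1}$ is a deterministic function of $(\yv^{i}, \Wme)$, and since $i \leq j-1$ it is measurable with respect to $(\yv^{j-1}, \Wme)$. Applying the tower property and pulling $\hat{\hvu}_{i,i+1}$ out of the inner conditional expectation gives $\E[\hat{\hvu}_{i,i+1}^\H \hat{\hvu}_{j,j+1}] = \E[\hat{\hvu}_{i,i+1}^\H \, \E[\hat{\hvu}_{j,j+1} \mid \yv^{j-1}, \Wme]] = 0$ by the zero-conditional-mean property. The symmetric term obtained by interchanging $i$ and $j$ is its conjugate, so the total cross contribution $2\Re(\E[\hat{\hvu}_{i,i+1}^\H \hat{\hvu}_{j,j+1}])$ vanishes as well, and the cross terms against $\hat{\hvu}_1 = \zerou$ are zero outright. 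Taking expectations in the expanded squared norm then collapses everything to $\sum_{k} \E[\|\hat{\hvu}_{k,k+1}\|^2] + \E[\|\hat{\hvu}_1\|^2]$, which is the asserted identity.

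I expect the main obstacle to be bookkeeping rather than conceptual: correctly identifying which variables are deterministic given the conditioning information, and being careful with the Hermitian inner product so that the tower property extracts the vanishing conditional mean. The one genuinely load-bearing input is Lemma~\ref{lm:densityh}, which guarantees that $\hat{\hvu}_t$ really is the conditional mean $\E[\hvu_t \mid \yv^{t-1}, \Wme]$; once that is invoked, the orthogonality of the increments is the standard martingale argument, the sequence $\hat{\hvu}_t$ being a martingale with respect to the filtration generated by $(\yv^{t-1}, \Wme)$.
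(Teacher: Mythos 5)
Your proof is correct and is essentially the paper's own argument: both rest on the zero-conditional-mean property $\E[\hat{\hvu}_{t,t+1}\mid \Wme,\yv^{t-1}]=\zerou$ supplied by Lemma~\ref{lm:densityh}, together with the tower property and the fact that the earlier quantities ($\hat{\hvu}_t$, $\xvu_t$, $\Omega_t$ and the previous increments) are deterministic given that conditioning. The only difference is organizational---the paper peels off one increment at a time by backward recursion, whereas you expand all pairwise cross terms at once and kill them by martingale-difference orthogonality---but the underlying mechanism is identical.
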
   

\begin{proof} 
See Appendix~\ref{sec:independent1}.
\end{proof}

\vspace{5pt}

\begin{lemma}     \label{lm:boundh1}
For $\hat{\hvu}_{t,t+1}$  defined as  in \eqref{eq:mmse424a} and  \eqref{eq:mmse9832a}, 
$t \in \{ 1, 2,  \cdots, T_c -1\}$, the following bounds hold   
\begin{align}
 \E \bigl[ ||\hat{\hvu}_{t, t+1} ||^2  \big|   \Wme ,  \yv^{t-1}  \bigr]   & \leq  1,   \label{eq:2boundh1} \\
  \E [ ||\hat{\hvu}_{t, t+1} ||^2 ] & \leq  1  .   \label{eq:2boundh2}
\end{align}
 \end{lemma}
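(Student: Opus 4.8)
The plan is to compute the conditional expectation $\E\bigl[\|\hat{\hvu}_{t,t+1}\|^2 \bigm| \Wme, \yv^{t-1}\bigr]$ in closed form and show that it never exceeds $1$; the unconditional bound \eqref{eq:2boundh2} then follows immediately from the tower property, $\E[\|\hat{\hvu}_{t,t+1}\|^2] = \E\bigl[\E[\|\hat{\hvu}_{t,t+1}\|^2 \mid \Wme, \yv^{t-1}]\bigr] \le 1$, so the whole lemma reduces to establishing \eqref{eq:2boundh1}.

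First I would isolate the scalar innovation $\nu_t \defeq \yv_t - \xvu_t^\T \hat{\hvu}_t$ that appears in \eqref{eq:mmse424a}. Since $\yv_t = \xvu_t^\T \hvu_t + \zv_t$ (cf.~\eqref{eq:misoy}) and $\hvu_t = \hat{\hvu}_t + \tilde{\hvu}_t$, we have $\nu_t = \xvu_t^\T \tilde{\hvu}_t + \zv_t$. Conditioned on $(\Wme, \yv^{t-1})$, the quantities $\xvu_t$, $\hat{\hvu}_t$ and $\Omega_t$ are \emph{deterministic} — this is where I would invoke the encoding maps \eqref{eq:mapx} together with the recursions \eqref{eq:mmse424a}--\eqref{eq:mmse9832a} — while Lemma~\ref{lm:densityh} gives $\tilde{\hvu}_t \mid (\Wme, \yv^{t-1}) \sim \Cc\Nc(\zerou, \Omega_t)$, independent of the fresh noise $\zv_t \sim \Cc\Nc(0,1)$. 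Hence $\nu_t \mid (\Wme, \yv^{t-1}) \sim \Cc\Nc\bigl(0,\ \xvu_t^\T \Omega_t \xvu_t^* + 1\bigr)$, so that $\E[|\nu_t|^2 \mid \Wme, \yv^{t-1}] = \xvu_t^\T \Omega_t \xvu_t^* + 1$.

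Next, writing $\hat{\hvu}_{t,t+1} = \Omega_t \xvu_t^* \nu_t / (\xvu_t^\T \Omega_t \xvu_t^* + 1)$ and pulling the deterministic factors out of the conditional expectation, the innovation-variance factor cancels one power of the denominator:
\[ \E\bigl[\|\hat{\hvu}_{t,t+1}\|^2 \bigm| \Wme, \yv^{t-1}\bigr] = \frac{\|\Omega_t \xvu_t^*\|^2 \cdot \E[|\nu_t|^2 \mid \Wme, \yv^{t-1}]}{(\xvu_t^\T \Omega_t \xvu_t^* + 1)^2} = \frac{\|\Omega_t \xvu_t^*\|^2}{\xvu_t^\T \Omega_t \xvu_t^* + 1}. \]
The key structural step is then the matrix inequality. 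Applying Lemma~\ref{lm:estimate1} with $\evu_i \leftarrow \xvu_i$ (so that the recursion for $\Km_t$ coincides with that for $\Omega_t$), the matrix $\Omega_t$ is Hermitian with $\mathbf{0} \preceq \Omega_t \preceq I_M$. Hermiticity gives $\|\Omega_t \xvu_t^*\|^2 = \xvu_t^\T \Omega_t^\H \Omega_t \xvu_t^* = \xvu_t^\T \Omega_t^2 \xvu_t^*$, and since every eigenvalue $\lambda$ of $\Omega_t$ lies in $[0,1]$ we have $\lambda^2 \le \lambda$, i.e.\ $\Omega_t^2 \preceq \Omega_t$. Therefore $\xvu_t^\T \Omega_t^2 \xvu_t^* \le \xvu_t^\T \Omega_t \xvu_t^*$, whence the displayed ratio is bounded by $\frac{\xvu_t^\T \Omega_t \xvu_t^*}{\xvu_t^\T \Omega_t \xvu_t^* + 1} \le 1$, which is exactly \eqref{eq:2boundh1}.

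I expect the main obstacle to be bookkeeping rather than substance: one must carefully justify the measurability of $\xvu_t$, $\hat{\hvu}_t$ and $\Omega_t$ with respect to $(\Wme, \yv^{t-1})$ so that they may legitimately be treated as constants under the conditioning, and one must handle the complex proper-Gaussian innovation $\nu_t$ cleanly (its conditional second moment being $\xvu_t^\T \Omega_t \xvu_t^* + 1$). Once that setup is in place, the substantive inequality $\Omega_t^2 \preceq \Omega_t$ is a one-line consequence of Lemma~\ref{lm:estimate1}, and the conditional-to-unconditional passage is a single application of the tower property.
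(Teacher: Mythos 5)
Your proposal is correct and follows essentially the same route as the paper's proof: both substitute the innovation $\yv_t - \xvu_t^\T\hat{\hvu}_t = \xvu_t^\T\tilde{\hvu}_t + \zv_t$, use Lemma~\ref{lm:densityh} (conditional Gaussianity of $\tilde{\hvu}_t$ with covariance $\Omega_t$, independent of $\zv_t$) to evaluate the conditional second moment in closed form as $\xvu_t^\T\Omega_t\Omega_t\xvu_t^*/(\xvu_t^\T\Omega_t\xvu_t^*+1)$, and then invoke $\mathbf{0}\preceq\Omega_t\preceq I_M$ from Lemma~\ref{lm:estimate1} to get $\Omega_t^2\preceq\Omega_t$ and hence the bound of $1$, with the unconditional bound following from the tower property. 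Your eigenvalue argument for $\Omega_t^2\preceq\Omega_t$ is the same fact the paper establishes via the decomposition $\Omega_t=\Um\Lambda\Um^\H$ with $\Lambda-\Lambda^2\succeq\mathbf{0}$.
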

\vspace{2pt}

\begin{proof} 
See Appendix~\ref{sec:boundh1}.
\end{proof}

\vspace{5pt}

\begin{lemma}     \label{lm:boundh41} 
For $\hat{\hvu}_{t,t+1}$  defined as  in \eqref{eq:mmse424a} and  \eqref{eq:mmse9832a}, 
$t \in \{ 1, 2,  \cdots, T_c -1\}$, the following inequalities hold  
\begin{align}
 \E \bigl[ ||\hat{\hvu}_{t, t+1} ||^4  \big|   \Wme ,  \yv^{t-1}  \bigr]   & \leq  3,   \label{eq:4boundh11} \\
  \E [ ||\hat{\hvu}_{t, t+1} ||^4 ]  & \leq  3 .   \label{eq:4boundh12}
\end{align}
 \end{lemma}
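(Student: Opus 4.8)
The plan is to reduce the whole quantity to a single conditionally Gaussian scalar — the MMSE innovation — and then evaluate its fourth moment in closed form, exactly paralleling the second-moment argument behind Lemma~\ref{lm:boundh1}. Define the innovation at time $t$ by $\nu_t \defeq \yv_t - \xvut^\T \hat{\hvu}_t$. Since $\yv_t = \hvut^\T \xvut + \zv_t$ and both $\xvut$ and $\hat{\hvu}_t$ are deterministic functions of $(\Wme,\yv^{t-1})$ (cf.~the encoding maps in \eqref{eq:mapx} and the recursion \eqref{eq:mmse424a}), I would write $\nu_t = \xvut^\T \tilde{\hvut} + \zv_t$ with $\tilde{\hvut} \defeq \hvut - \hat{\hvu}_t$. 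By Lemma~\ref{lm:densityh} we have $\tilde{\hvut} \mid (\yv^{t-1},\Wme) \sim \Cc\Nc(\zerou,\Omega_t)$, and $\zv_t \sim \Cc\Nc(0,1)$ is independent; hence, conditioned on $(\yv^{t-1},\Wme)$, the innovation $\nu_t$ is a zero-mean circularly symmetric complex Gaussian scalar with variance $s_t + 1$, where I set $s_t \defeq \xvut^\T \Omega_t \xvut^*$.

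With this established, the increment in \eqref{eq:mmse424a} reads $\hat{\hvu}_{t,t+1} = \Omega_t \xvut^* \nu_t/(s_t+1)$, so that $\|\hat{\hvu}_{t,t+1}\|^2 = a_t |\nu_t|^2/(s_t+1)^2$, where $a_t \defeq \|\Omega_t \xvut^*\|^2 = \xvut^\T \Omega_t^2 \xvut^*$ (using that $\Omega_t$ is Hermitian). Because $s_t$, $a_t$, $\Omega_t$, and $\xvut$ are all deterministic given $(\yv^{t-1},\Wme)$, I would take the conditional expectation and invoke the fourth-moment identity $\E[|\nu|^4] = 2\sigma^4$ for a $\Cc\Nc(0,\sigma^2)$ scalar, obtaining
\[
\E\bigl[\|\hat{\hvu}_{t,t+1}\|^4 \,\big|\, \Wme, \yv^{t-1}\bigr] = \frac{a_t^2}{(s_t+1)^4}\,\E\bigl[|\nu_t|^4 \,\big|\, \Wme, \yv^{t-1}\bigr] = \frac{2 a_t^2}{(s_t+1)^2}.
\]

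The final step is an operator-inequality bound on $a_t$. By Lemma~\ref{lm:estimate1} we have $\mathbf{0} \preceq \Omega_t \preceq I_M$, and since a Hermitian matrix with eigenvalues in $[0,1]$ satisfies $\Omega_t^2 \preceq \Omega_t$, I get $a_t = \xvut^\T \Omega_t^2 \xvut^* \leq \xvut^\T \Omega_t \xvut^* = s_t$. Therefore $a_t^2/(s_t+1)^2 \leq s_t^2/(s_t+1)^2 < 1$, which yields the conditional bound $\E[\|\hat{\hvu}_{t,t+1}\|^4 \mid \Wme, \yv^{t-1}] \leq 2 \leq 3$, i.e.~\eqref{eq:4boundh11}; taking the outer expectation over $(\Wme,\yv^{t-1})$ by the tower property then gives \eqref{eq:4boundh12}. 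This mirrors the second-moment proof of Lemma~\ref{lm:boundh1}, with $\E[|\nu|^2]=\sigma^2$ replaced by $\E[|\nu|^4]=2\sigma^4$ and $a_t \leq s_t$ replaced by $a_t^2 \leq s_t^2$. There is no genuine obstacle here; the only points demanding care — both routine — are the conditional Gaussianity of $\nu_t$ (so that the exact fourth-moment constant $2$ is available rather than a cruder bound) and the passage from $\mathbf{0} \preceq \Omega_t \preceq I_M$ to $\Omega_t^2 \preceq \Omega_t$. Notably the computed constant $2$ even leaves slack against the claimed bound $3$.
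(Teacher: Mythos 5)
Your proof is correct, and structurally it parallels the paper's: both factor the increment as $\hat{\hvu}_{t,t+1}=\Omega_t\xvut^{*}\nu_t/(s_t+1)$ with innovation $\nu_t=\xvut^\T\tilde{\hvut}+\zv_t$, reduce the claim to the conditional fourth moment of $\nu_t$ weighted by $a_t^2/(s_t+1)^4$ where $a_t=\xvut^\T\Omega_t^2\xvut^{*}$, and close with the operator fact $\mathbf{0}\preceq\Omega_t\preceq I_M$ (hence $\Omega_t^2\preceq\Omega_t$, i.e.\ $a_t\le s_t$, which the paper also proves and then relaxes to $a_t\le s_t+1$). Where you genuinely diverge is in evaluating $\E\bigl[|\nu_t|^4\,\big|\,\Wme,\yv^{t-1}\bigr]$: the paper expands $|\xvut^\T\tilde{\hvut}+\zv_t|^4$ into five terms, kills the odd cross terms by propriety, and evaluates the quartic terms via the multivariate moment identity of Lemma~\ref{lm:fourthmoment}, arriving at $3s_t^2+4s_t+3\le 3(s_t+1)^2$ and the constant $3$; you instead note that conditionally $\nu_t$ is a single circularly symmetric complex Gaussian scalar of variance $s_t+1$ (sum of independent conditional Gaussians), so its fourth moment is $2(s_t+1)^2$ in closed form, bypassing Lemma~\ref{lm:fourthmoment} and all cross-term bookkeeping and yielding the sharper constant $2$. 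The numerical discrepancy is not an error on your side: the paper's quartic-moment lemma carries the real-Gaussian factor of $2$ on the trace term, whereas under the paper's own proper-complex convention (real and imaginary parts of variance $1/2$ each, so that $2\|\hvu_t\|^2\sim\Xc^2(2M)$) the exact scalar value is $2(s_t+1)^2$ as you computed; the paper's $3s_t^2+4s_t+3$ is therefore a harmless overestimate, and the stated bound of $3$ remains valid either way. Your route is shorter and tighter; the paper's term-by-term expansion has the compensating advantage that it is exactly the machinery reused in Lemma~\ref{lm:independent}, where the sum $\hat{\hvu}_t+\hat{\hvu}_{t,t+1}$ is not a single conditionally Gaussian scalar and such an expansion cannot be avoided.
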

\vspace{2pt}

\begin{proof} 
See Appendix~\ref{sec:boundh41}.
\end{proof} 

\vspace{5pt}

\begin{lemma}   \cite[Theorem~6]{ST:96}  \label{lm:fourthmoment}
Let  $\uvu \in \Cc^{M\times 1} \  \sim \  \Cc\Nc ( \zerou ,   \Omega)$.  For a fixed Hermitian matrix $A\in \Cc^{M\times M}$, then
\begin{align*}
\E [  ( \uvu^{\H} A  \uvu )^2  ] &=  2 \trace ( A \Omega A \Omega)  + ( \trace ( A \Omega))^2.
\end{align*}
\end{lemma}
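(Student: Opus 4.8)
The plan is to reduce Lemma~\ref{lm:fourthmoment} to the isotropic case by whitening, and then to a one-dimensional moment computation by diagonalization. Since $\Omega$ is Hermitian positive semidefinite, I would factor $\Omega = \Omega^{1/2}\Omega^{1/2}$ with $\Omega^{1/2}$ Hermitian and represent $\uvu = \Omega^{1/2}\wv$, where $\wv \sim \Cc\Nc(\zerou, I_M)$ is standard circularly symmetric; this realizes the prescribed covariance without assuming $\Omega$ is invertible. Setting $B \defeq \Omega^{1/2} A \Omega^{1/2}$, which is again Hermitian, the quadratic form becomes $\uvu^\H A \uvu = \wv^\H B \wv$, so it suffices to evaluate $\E[(\wv^\H B \wv)^2]$ for an isotropic $\wv$.

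Next I would spectrally decompose $B = \Um \Lambda \Um^\H$ with $\Um$ unitary and $\Lambda = \diag(\lambda_1, \ldots, \lambda_M)$ real, and set $\vv \defeq \Um^\H \wv$. Because the standard circularly symmetric complex Gaussian is invariant under unitary transformations, $\vv \sim \Cc\Nc(\zerou, I_M)$ as well, with i.i.d.\ entries $v_1, \ldots, v_M \sim \Cc\Nc(0,1)$, and $\wv^\H B \wv = \sum_{i=1}^M \lambda_i |v_i|^2$. The problem then collapses to $\E[(\sum_i \lambda_i |v_i|^2)^2] = \sum_{i,j} \lambda_i \lambda_j\, \E[|v_i|^2 |v_j|^2]$. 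Splitting into diagonal terms $i=j$ and off-diagonal terms $i \neq j$, the off-diagonal contributions factor by independence into $\E[|v_i|^2]\,\E[|v_j|^2]$, while the diagonal terms involve the scalar fourth moment $\E[|v_i|^4]$. Completing the square via $\sum_{i\neq j}\lambda_i\lambda_j = (\sum_i \lambda_i)^2 - \sum_i \lambda_i^2$ then isolates a $(\sum_i \lambda_i)^2$ piece and a $\sum_i \lambda_i^2$ piece.

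Finally I would recombine using the cyclic invariance of the trace: $\sum_i \lambda_i = \trace \Lambda = \trace B = \trace(A\Omega)$ and $\sum_i \lambda_i^2 = \trace(\Lambda^2) = \trace(B^2) = \trace(A\Omega A\Omega)$. Substituting these converts the two pieces into a $(\trace(A\Omega))^2$ contribution and a multiple of $\trace(A\Omega A\Omega)$, assembling the claimed identity. As an independent check of the combinatorics I would verify the same answer directly through the Isserlis/Wick moment theorem for complex Gaussians, writing $\E[(\uvu^\H A \uvu)^2] = \sum_{k,l,p,q} A_{kl} A_{pq}\, \E[\bar u_k u_l \bar u_p u_q]$ and enumerating the surviving index pairings. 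The main obstacle I anticipate is purely bookkeeping: correctly evaluating the diagonal fourth moment and tracking exactly which pairings survive, since that is what pins down the numerical coefficient of the $\trace(A\Omega A\Omega)$ term; the remaining linear algebra and the unitary invariance of the isotropic Gaussian are routine.
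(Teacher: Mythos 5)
The paper never proves this lemma---it is imported wholesale by citation from \cite{ST:96}---so there is no internal proof to compare against, and a self-contained derivation like yours is in principle valuable. Your strategy is also the standard correct one: the factorization $\uvu=\Omega^{1/2}\wv$ (valid for singular $\Omega$), the reduction to $\wv^\H B\wv$ with $B=\Omega^{1/2}A\Omega^{1/2}$, the unitary diagonalization with $\trace B=\trace(A\Omega)$ and $\trace(B^2)=\trace(A\Omega A\Omega)$, and the Wick cross-check are all sound.

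The genuine gap is the one item you deferred as ``bookkeeping'': the diagonal fourth moment $\E[|v_i|^4]$. That number is not bookkeeping---it is the entire content of the lemma's constant---and evaluating it under this paper's own convention shows your proof cannot terminate in the stated identity. The notation section defines $\Cc\Nc(\zerou,\Omega)$ as the \emph{proper} (circularly symmetric) complex Gaussian with covariance $\E[\uvu\uvu^\H]=\Omega$; for a standard proper complex scalar $v_i\sim\Cc\Nc(0,1)$, $|v_i|^2$ is exponential with mean one, so $\E[|v_i|^4]=2$, not $3$. Your decomposition then gives
\begin{align*}
\E\Bigl[\Bigl(\sum_i \lambda_i |v_i|^2\Bigr)^2\Bigr]
\;=\; 2\sum_i \lambda_i^2 \;+\; \Bigl(\sum_i\lambda_i\Bigr)^2 - \sum_i\lambda_i^2
\;=\; \bigl(\trace(A\Omega)\bigr)^2 + \trace(A\Omega A\Omega),
\end{align*}
i.e.\ coefficient $1$ on $\trace(A\Omega A\Omega)$, whereas the lemma asserts coefficient $2$. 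The scalar case $M=1$, $A=\Omega=1$ makes the conflict concrete: the lemma's right-hand side is $2+1=3$, while the true value $\E[|\uvu|^4]$ for a proper complex Gaussian is $2$. The value $3$ (hence the coefficient $2$) is the \emph{real}-Gaussian answer, where the third Wick pairing survives; your own proposed cross-check would expose this, since properness kills the pseudo-covariance $\E[u_l u_q]$ and leaves only two pairings in $\E[\bar u_k u_l \bar u_p u_q]$. So, carried out honestly, your method proves a sharper identity than the statement and cannot be completed into a proof of the statement as written; the cited formula holds only under a non-proper (e.g.\ real) convention that contradicts the paper's definition of $\Cc\Nc$. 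This mismatch is harmless downstream---the paper invokes the lemma only to produce upper bounds, which an overstated constant merely loosens---but your write-up must either flag the constant or state explicitly which convention it adopts.
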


\vspace{5pt}
\begin{lemma}   \label{lm:independent}
For  $\hat{\hvu}_{t+1} =  \hat{\hvu}_{t}  +  \hat{\hvu}_{t,t+1}$  defined   in  \eqref{eq:mmse424a} and \eqref{eq:mmse9832a},  $t \in \{ 1, 2,  \cdots, T_c -1\}$, we have 
\begin{align*}
\E  \bigl[ ||  \hat{\hvu}_{t+1} ||^4 \bigr]   \leq  \E \bigl[   \| \hat{\hvu}_{t}\|^4 \bigr]     +  4 \cdot  \E \bigl[   \| \hat{\hvu}_{t} \|^2  \bigr]  + 3 . 
\end{align*}
\end{lemma}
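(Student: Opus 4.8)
The plan is to condition on $(\Wme,\yv^{t-1})$ and exploit that, given this information, $\hat{\hvu}_t$, $\xvu_t$ and $\Omega_t$ are all deterministic while the increment $\hat{\hvu}_{t,t+1}$ is a \emph{fixed} vector times a scalar zero-mean circularly symmetric complex Gaussian. Since $t\in\{1,\dots,T_c-1\}$ we have $t+1\not\equiv 1 \pmod{T_c}$, so the in-block increment form \eqref{eq:mmse424a} applies throughout. First I would expand
\[
\|\hat{\hvu}_{t+1}\|^2 = \|\hat{\hvu}_t\|^2 + \|\hat{\hvu}_{t,t+1}\|^2 + 2\,\mathrm{Re}\bigl(\hat{\hvu}_t^\H \hat{\hvu}_{t,t+1}\bigr),
\]
square it, and take $\E[\,\cdot\mid\Wme,\yv^{t-1}\,]$ term by term. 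Writing the innovation $\nu_t \defeq \yv_t - \xvu_t^\T \hat{\hvu}_t = \xvu_t^\T \tilde{\hvu}_t + \zv_t$, Lemma~\ref{lm:densityh} gives $\nu_t \mid (\Wme,\yv^{t-1}) \sim \Cc\Nc(0, s_t)$ with $s_t \defeq \xvu_t^\T \Omega_t \xvu_t^* + 1$, and the increment formula reads $\hat{\hvu}_{t,t+1} = (\Omega_t \xvu_t^*/s_t)\,\nu_t$.

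The key simplification is that every cross term odd in $\nu_t$ vanishes in conditional expectation. Because $\hat{\hvu}_t$ is deterministic given $(\Wme,\yv^{t-1})$, the term $2\,\mathrm{Re}(\hat{\hvu}_t^\H\hat{\hvu}_{t,t+1})$ is linear in $\nu_t$ with zero conditional mean, so its product with the deterministic $\|\hat{\hvu}_t\|^2$ contributes nothing. Likewise the product $\|\hat{\hvu}_{t,t+1}\|^2\cdot 2\,\mathrm{Re}(\hat{\hvu}_t^\H\hat{\hvu}_{t,t+1})$ is a third-order expression in $\nu_t$ of the form $\mathrm{Re}(\,\cdot\,|\nu_t|^2\nu_t)$, whose conditional mean is zero by circular symmetry. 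Hence only the squares $\|\hat{\hvu}_t\|^4$, $\|\hat{\hvu}_{t,t+1}\|^4$, $\bigl(2\,\mathrm{Re}(\hat{\hvu}_t^\H\hat{\hvu}_{t,t+1})\bigr)^2$ and the mixed product $2\|\hat{\hvu}_t\|^2\|\hat{\hvu}_{t,t+1}\|^2$ survive.

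It then remains to bound the surviving terms. The term $\E[\|\hat{\hvu}_{t,t+1}\|^4 \mid \Wme,\yv^{t-1}] \leq 3$ is exactly \eqref{eq:4boundh11} of Lemma~\ref{lm:boundh41}, and $2\|\hat{\hvu}_t\|^2\,\E[\|\hat{\hvu}_{t,t+1}\|^2 \mid \Wme,\yv^{t-1}] \leq 2\|\hat{\hvu}_t\|^2$ by \eqref{eq:2boundh1} of Lemma~\ref{lm:boundh1}. The crux is the squared cross term, which I would evaluate exactly: with $\hat{\hvu}_{t,t+1} = (\Omega_t\xvu_t^*/s_t)\nu_t$ one gets $\E[(2\,\mathrm{Re}(\hat{\hvu}_t^\H\hat{\hvu}_{t,t+1}))^2\mid\Wme,\yv^{t-1}] = 2|\hat{\hvu}_t^\H\Omega_t\xvu_t^*|^2/s_t$. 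Applying the generalized Cauchy--Schwarz inequality for the positive semidefinite form $\Omega_t$, namely $|\hat{\hvu}_t^\H\Omega_t\xvu_t^*|^2 \leq (\hat{\hvu}_t^\H\Omega_t\hat{\hvu}_t)(\xvu_t^\T\Omega_t\xvu_t^*)$, together with $\hat{\hvu}_t^\H\Omega_t\hat{\hvu}_t \leq \|\hat{\hvu}_t\|^2$ (from $\Omega_t\preceq I_M$, Lemma~\ref{lm:estimate1}) and $\xvu_t^\T\Omega_t\xvu_t^* = s_t-1 \leq s_t$, bounds this term by $2\|\hat{\hvu}_t\|^2$.

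Collecting the pieces gives $\E[\|\hat{\hvu}_{t+1}\|^4 \mid \Wme,\yv^{t-1}] \leq \|\hat{\hvu}_t\|^4 + 4\|\hat{\hvu}_t\|^2 + 3$, and taking the outer expectation yields the claim. The main obstacle is the squared cross term: one must correctly use that the second moment of the real part of a scalar circularly symmetric complex Gaussian is half its total variance, and then recognize that the Cauchy--Schwarz bound together with $\Omega_t\preceq I_M$ collapses this term to exactly $2\|\hat{\hvu}_t\|^2$. The vanishing of the odd-order cross terms by circular symmetry is what keeps the remaining bookkeeping clean.
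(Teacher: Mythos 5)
Your proof is correct and follows essentially the same route as the paper's: the same conditional expansion of $\|\hat{\hvu}_{t}+\hat{\hvu}_{t,t+1}\|^4$ given $(\Wme,\yv^{t-1})$, the same vanishing of the odd-order cross terms by properness/circular symmetry, the same invocation of Lemmas~\ref{lm:boundh1} and~\ref{lm:boundh41}, and the same final bound of $2\|\hat{\hvu}_t\|^2$ on the squared cross term. The only immaterial difference is the last linear-algebra step: you evaluate the cross term exactly through the scalar innovation $\nu_t$ and bound $|\hat{\hvu}_t^\H\Omega_t\xvu_t^*|^2$ by Cauchy--Schwarz for the form $\Omega_t$, whereas the paper bounds the identical quantity via $\trace(AB)\leq\lambda_{\max}(A)\trace(B)$ followed by $\xvu_t^\T\Omega_t\Omega_t\xvu_t^*\leq\xvu_t^\T\Omega_t\xvu_t^*+1$; both rest on $\Omega_t\preceq I_M$ from Lemma~\ref{lm:estimate1} and yield the same constant.
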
   

\begin{proof} 
See Appendix~\ref{sec:independent}.
\end{proof}

\vspace{10pt}

Now we are ready to prove \eqref{eq:chbound1}-\eqref{eq:chbound5} in Lemma~\ref{lm:ExpCht} (or equivalently, \eqref{eq:chbound11}-\eqref{eq:chbound15}).  

\vspace{10pt}

{\bf \emph{Proof of \eqref{eq:chbound1}:}}  At first we focus on the case of $t \in \{ 1, 2,  \cdots, T_c\}$  and prove \eqref{eq:chbound1} in Lemma~\ref{lm:ExpCht} (or equivalently, \eqref{eq:chbound11}):  
\begin{align}
\E    \bigl[  \|\hat{\hvu}_t\|^2  \bigr]   
&= \E \bigl[  || \hat{\hvu}_{1, 2} ||^2  + || \hat{\hvu}_{2, 3} ||^2 + \cdots  + || \hat{\hvu}_{t-1, t} ||^2 \bigr]   \label{eq:independent333}\\
& \leq  1 + \cdots + 1     \label{eq:boundh1333}\\
& = t  - 1           \label{eq:bfinal323}
\end{align}
where  \eqref{eq:independent333} results from Lemma~\ref{lm:independent1};
\eqref{eq:boundh1333} follows from Lemma~\ref{lm:boundh1}.  
For the general case of $t \in \{ 1, 2,  \cdots, n\}$, we note that  $\hat{\hvu}_t$  is a function of  $(\xvu^{t-1}_{T_c \lfloor \frac{t-1}{T_c}\rfloor  + 1} , \yv^{t-1}_{T_c \lfloor \frac{t-1}{T_c}\rfloor  + 1})$ (cf.~\eqref{eq:mmse424}, \eqref{eq:mmse9832}), where $\yv^{t-1}_{T_c \lfloor \frac{t-1}{T_c}\rfloor  + 1}$ corresponds to the  channel outputs (up to time $t-1$) within the current channel block associated with time $t$.  
We  also note that  the previous result in \eqref{eq:bfinal323}   depends only on the number of channel outputs within the current channel block.
Therefore, one can easily follow the previous steps and show that 
\begin{align}
\E   \bigl[  \|\hat{\hvu}_t\|^2  \bigr]   
&=  \sum_{ i = T_c \lfloor \frac{t-1}{T_c}\rfloor  + 1 }^{t -1 }  \E [  || \hat{\hvu}_{i, i+1} ||^2 ]  \non \\
& \leq  1+ 1 + \cdots + 1    \label{eq:boundh1888}\\
& =  [ (t-1)   \  \text{mod} \  T_c] ,    \quad  \quad  t \in \{ 1, 2,  \cdots, n\}        \label{eq:bfinal883}
\end{align}
 where $\hat{\hvu}_{i,i+1} \defeq   \frac{\Omega_{i} \xvu^{*}_{i} (\yv_{i} - \xvu^\T_{i} \hat{\hvu}_{i} )  }{  \xvu^\T_{i} \Omega_{i} \xvu^{*}_{i} +1 } $  for $ T_c \lfloor \frac{t-1}{T_c}\rfloor  + 1 \leq  i  \leq t-1$, and  $\hat{\hvu}_{i}$ and $\Omega_{i}$ are defined in \eqref{eq:mmse424} and \eqref{eq:mmse9832};  \eqref{eq:boundh1888}  is again from Lemma~\ref{lm:boundh1}.

\vspace{10pt}

{\bf \emph{Proof of \eqref{eq:chbound2}:}}  We now prove  \eqref{eq:chbound2} in Lemma~\ref{lm:ExpCht} (or \eqref{eq:chbound12}):
\begin{align}
\E    \bigl[  \|\hat{\hvu}_t\|^2  \bigr]   
&\leq    \E    \bigl[  \|\hat{\hvu}_t\|^2  \bigr]    +  \E \bigl[  \|\tilde{\hvu}_t\|^2  \bigr]       \non\\
& =     \E    \bigl[  \|\hat{\hvu}_t  +  \tilde{\hvu}_t\|^2  \bigr]  -   \underbrace{  \E \bigl[ \hat{\hvu}_t^\H  \tilde{\hvu}_t  \bigr]}_{=0}   -     \underbrace{\E \bigl[ \tilde{\hvu}_t^\H  \hat{\hvu}_t   \bigr]}_{=0}    \label{eq:boundh357789}\\
& =  \E    \bigl[  \| \hvut \|^2  \bigr]           \label{eq:boundh3688}\\
&= M  \label{eq:bfinal2578}
\end{align}
where $\tilde{\hvu}_t \defeq \hvut  - \hat{\hvu}_t$;
\eqref{eq:boundh357789} is from the identity that $ \| \avu  + \bvu \|^2  =  \|\avu\|^2 +  \|\bvu\|^2  +  \avu^\H \bvu +   \bvu^\H \avu$ for any two vectors  $\avu, \bvu \in \Cc^{M\times 1}$;
\eqref{eq:boundh3688} follows from the fact that  $\E \bigl[ \hat{\hvu}_t^\H  \tilde{\hvu}_t  \bigr] =  \E \bigl[  \E \bigl[  \hat{\hvu}_t^\H  \tilde{\hvu}_t  \big| \Wme ,  \yv^{t-1}\bigr] \bigr] = \E[ 0 ] =0$ by using the results that $ \tilde{\hvu}_t  \big| (\Wme ,  \yv^{t-1})  \sim \mathcal{CN}( \zerou, \Omega_{t}) $ and that  $ \hat{\hvu}_t$ is deterministic given $(\Wme ,  \yv^{t-1})$;  similarly, $ \E \bigl[ \tilde{\hvu}_t^\H  \hat{\hvu}_t   \bigr] =0$;
\eqref{eq:bfinal2578} is from the assumption that  $\hvut \sim \mathcal{CN}( \zerou,  I_M) $.

\vspace{10pt}

{\bf \emph{Proof of \eqref{eq:chbound3}:}} We now focus on the case of $t \in \{ 1, 2,  \cdots, T_c\}$ and  prove  \eqref{eq:chbound3} in Lemma~\ref{lm:ExpCht} (or  \eqref{eq:chbound13}):
\begin{align}
\E    \bigl[  \|\hat{\hvu}_t\|^4  \bigr]   
&\leq  \E \bigl[   \| \hat{\hvu}_{t-1}\|^4 \bigr]     +  4 \cdot  \E \bigl[   \| \hat{\hvu}_{t-1} \|^2  \bigr]  + 3    \label{eq:independent4677}\\
& \leq  \E \bigl[   \| \hat{\hvu}_{t-1}\|^4 \bigr]     +  4 (t-1)  + 3   \label{eq:boundh2577}\\
& \leq     \E \bigl[   \| \hat{\hvu}_{1}\|^4 \bigr]   +  4 \sum_{k=1}^{t-1} k   + 3 (t-1)  \label{eq:boundh3678}\\
& =  2(t-1)^2 + 5(t-1)            \label{eq:bfinal9926}
\end{align}
where \eqref{eq:independent4677} follows from Lemma~\ref{lm:independent};
\eqref{eq:boundh2577}  is from the result in \eqref{eq:bfinal323}; 
\eqref{eq:boundh3678} follows by repeating the steps of  \eqref{eq:independent4677} and \eqref{eq:boundh2577};
\eqref{eq:bfinal9926}  uses the definition that $\hat{\hvu}_{1} = \zerou$.
For the general case of $t \in \{ 1, 2,  \cdots, n\}$, we again note that  $\hat{\hvu}_t$  is a function of  $(\xvu^{t-1}_{T_c \lfloor \frac{t-1}{T_c}\rfloor  + 1} , \yv^{t-1}_{T_c \lfloor \frac{t-1}{T_c}\rfloor  + 1})$.  
Therefore, one can easily follow the previous steps and show that 
\begin{align}
\E    \bigl[  \|\hat{\hvu}_t\|^4  \bigr]   
& \leq     \E \bigl[   \| \hat{\hvu}_{T_c \lfloor \frac{t-1}{T_c}\rfloor  + 1}\|^4 \bigr]   +   \sum_{k=1}^{ [(t-1) \text{mod}  T_c ] } \!\!\!\!\! 4 k  \  +  \ 3 [(t -1) \ \text{mod} \  T_c]   \label{eq:boundh9663}\\
& =  2 [ (t -1) \  \text{mod} \  T_c]^2 + 5 [(t -1) \  \text{mod} \  T_c]           \label{eq:bfinal4377}
\end{align}
where \eqref{eq:bfinal4377}  uses the definition of $  \hat{\hvu}_{T_c \lfloor \frac{t-1}{T_c}\rfloor  + 1} = \zerou$.

\vspace{10pt}

{\bf \emph{Proof of \eqref{eq:chbound4}:}}  We now prove   \eqref{eq:chbound4} in Lemma~\ref{lm:ExpCht} (or \eqref{eq:chbound14}):
\begin{align}
&\E    \bigl[  \|\hat{\hvu}_t\|^4  \bigr]     \non \\
\leq  &  \E    \bigl[  \|\hat{\hvu}_t\|^4  \bigr]    +  \underbrace{ \E \bigl[  \|\tilde{\hvu}_t\|^4  \bigr]}_{\geq 0}   +   \underbrace{  \E    \bigl[ 2 \|\hat{\hvu}_t\|^2 \|\tilde{\hvu}_t\|^2 \bigr] }_{\geq 0}  +  \underbrace{  \E    \bigl[ 4\text{Re}^2(\hat{\hvu}_t^\H \tilde{\hvu}_t)\bigr] }_{\geq 0}  + \underbrace{ \E    \bigl[ 4 ( \|\hat{\hvu}_t\|^2 +  \|\tilde{\hvu}_t\|^2) \cdot \text{Re} (\hat{\hvu}_t^\H \tilde{\hvu}_t) \bigr] }_{= 0}      \label{eq:boundh3788}\\
=  &   \E    \bigl[  \|\hat{\hvu}_t  +  \tilde{\hvu}_t\|^4  \bigr]    \label{eq:boundh3677}\\
 =&  \E    \bigl[  \| \hvut \|^4  \bigr]           \non\\
=&  M^2 +2M  \label{eq:bfinal8266}
\end{align}
where $\tilde{\hvu}_t \defeq \hvut  - \hat{\hvu}_t$;
\eqref{eq:boundh3677} stems from the identity that 
\begin{align}
 &   \| \avu  + \bvu \|^4  =  \|\avu\|^4 +  \|\bvu\|^4  +  2   \|\avu\|^2 \|\bvu\|^2  + 4   \text{Re}^2(\avu^\H \bvu) + 4 ( \|\avu\|^2 +  \|\bvu\|^2) \cdot \text{Re} (\avu^\H \bvu)   \non
\end{align}
for any two vectors  $\avu, \bvu \in \Cc^{M\times 1}$, where $\text{Re}(\bullet )$ denotes the real part of the argument;
\eqref{eq:bfinal8266} follows from Lemma~\ref{lm:fourthmoment};
\eqref{eq:boundh3788} results from the fact that 
\begin{align}  
 &  \E \bigl[ 4 ( \|\hat{\hvu}_t\|^2 +  \|\tilde{\hvu}_t\|^2) \cdot \text{Re} (\hat{\hvu}_t^\H \tilde{\hvu}_t) \bigr]   \non \\
 = &\E \bigl[ \ \E \bigl[ 4 ( \|\hat{\hvu}_t\|^2 +  \|\tilde{\hvu}_t\|^2) \cdot \text{Re} (\hat{\hvu}_t^\H \tilde{\hvu}_t) \ \big| (\Wme ,  \yv^{t-1}) \bigr] \ \bigr]   \non\\
 =& \E \bigl[  \  4  \|\hat{\hvu}_t\|^2  \cdot   \underbrace{ \E \bigl[  \text{Re} (\hat{\hvu}_t^\H \tilde{\hvu}_t) \ \big| (\Wme ,  \yv^{t-1}) \bigr] }_{= 0} +  \underbrace{  \E \bigl[ 4  \|\tilde{\hvu}_t\|^2 \cdot \text{Re} (\hat{\hvu}_t^\H \tilde{\hvu}_t) \ \big| (\Wme ,  \yv^{t-1}) \bigr]}_{= 0}    \ \bigr]  \label{eq:boundh92467}\\
=  & \E [ 0 + 0 ]     \label{eq:boundh9356}\\
= &0   \non
\end{align}
where \eqref{eq:boundh92467}   results from the fact that  $\hat{\hvu}_{t}$ is deterministic given  $(\Wme ,  \yv^{t-1})$; \eqref{eq:boundh9356} follows from the identities that $ \E [ \text{Re} (\au^\H \bvu) ]  = \text{Re} ( \E[ \au^\H \bvu ])  = 0$  and  $ \E [   \|\bvu\|^2 \cdot \text{Re} (\au^\H \bvu) ]  = \text{Re} ( \E [    \au^\H \bvu \cdot \|\bvu\|^2  ])  = 0   $ for a fixed  vector $\au$ and a Gaussian  vector $\bvu \sim \Cc\Nc ( \zerou,   \Km )$. Note that the \emph{odd}-order moments of a complex proper Gaussian vector are  zeros (see, e.g., \cite{Kostas:02}).

\vspace{10pt}

{\bf \emph{Proof of \eqref{eq:chbound5}:}}  Finally,  \eqref{eq:chbound5} in Lemma~\ref{lm:ExpCht} follows from \eqref{eq:chbound1}-\eqref{eq:chbound4}. Specifically,  combining  \eqref{eq:chbound2} and  \eqref{eq:chbound4} gives $ \E \bigl[  ( \|\hat{\hvut}\|^2 +1)^2\bigr]  =   \E [  \|\hat{\hvut}\|^4 ]  +  2 \E [ \|\hat{\hvut}\|^2 ]  +1 \leq  M^2 +4M +1$, while combining  \eqref{eq:chbound1} and  \eqref{eq:chbound3}  gives  $ \E \bigl[  ( \|\hat{\hvut}\|^2 +1)^2\bigr]  =   \E [  \|\hat{\hvut}\|^4 ]  +  2 \E [ \|\hat{\hvut}\|^2 ]  +1  \leq  2 [ (t -1) \  \text{mod} \  T_c]^2 + 7 [ (t -1) \  \text{mod} \  T_c] +1$.  At this point it proves  Lemma~\ref{lm:ExpCht}.  For Lemmas~\ref{lm:independent1}-\ref{lm:boundh41} and Lemma~\ref{lm:independent},  which have been used above, the proofs are given as follows.

\subsection{ Proof of Lemma~\ref{lm:independent1}  \label{sec:independent1}}

We here prove that, for   $\hat{\hvu}_{t+1}$ and $\hat{\hvu}_{t,t+1}$  defined as  in \eqref{eq:mmse424a} and \eqref{eq:mmse9832a},  $t \in \{ 1, 2,  \cdots, T_c -1\}$, we have 
\begin{align*}
\E  \bigl[ || \hat{\hvu}_{1} + \hat{\hvu}_{1,2} +\hat{\hvu}_{2,3} + \cdots +  \hat{\hvu}_{t,t+1} ||^2 \bigr]   =\E \bigl[  ||\hat{\hvu}_{1}||^2 +  || \hat{\hvu}_{1, 2} ||^2  + || \hat{\hvu}_{2, 3} ||^2 + \cdots  + || \hat{\hvu}_{t, t+1} ||^2\bigr] .  
\end{align*}

 For  the case of $t \in \{ 1, 2,  \cdots, T_c-1\}$, we have
\begin{align}
 & \E  \Bigl[ || \hat{\hvu}_{1} + \hat{\hvu}_{1,2} +\hat{\hvu}_{2,3} + \cdots +  \hat{\hvu}_{t,t+1} ||^2  \Bigr]      \nonumber\\
  &=  \E  \Bigl[ ||  \hat{\hvu}_{t}+  \hat{\hvu}_{t,t+1} ||^2  \Bigr]     \label{eq:split9942}\\  
 & =  \E  \Bigl[  \  \E  \Bigl[  || \hat{\hvu}_{t} + \hat{\hvu}_{t, t+1} ||^2 \  \big|   \Wme,  \yv^{t-1} \Bigr] \   \Bigr]  \label{eq:split8325}\\ 
  & =  \E  \Bigl[ \   \E  \Bigl[    || \hat{\hvu}_{t} ||^2    +   ||\hat{\hvu}_{t, t+1} ||^2 +  \hat{\hvu}_{t}^{\H} \ \hat{\hvu}_{t, t+1}  +   \hat{\hvu}_{t, t+1}^{\H}  \ \hat{\hvu}_{t}      \  \big|   \Wme,  \yv^{t-1} \Bigr]    \ \Bigr]    \nonumber\\ 
  & =  \E  \Bigl[   \   || \hat{\hvu}_{t}  ||^2 +\E  \bigl[   || \hat{\hvu}_{t, t+1} ||^2  \ \big| \Wme,  \yv^{t-1} \bigr]  +0 + 0  \  \Bigr]  \label{eq:ind828}\\ 
  & =  \E  \bigl[     || \hat{\hvu}_{t}  ||^2  \bigr]  +   \E  \Bigl[ \E  \bigl[   || \hat{\hvu}_{t, t+1} ||^2  \ \big| \Wme,  \yv^{t-} \bigr]  \  \Bigr]  \nonumber \\ 
    & =  \E  \bigl[     || \hat{\hvu}_{t}  ||^2  \bigr]  +   \E  \bigl[   || \hat{\hvu}_{t, t+1} ||^2  \bigr]   \nonumber \\ 
    & =  \E  \bigl[     || \hat{\hvu}_{t-1}  +  \hat{\hvu}_{t-1, t}  ||^2  \bigr]  +   \E  \bigl[   || \hat{\hvu}_{t, t+1} ||^2 \bigr]   \label{eq:split0193}\\ 
        & = \E  \bigl[     || \hat{\hvu}_{t-1}   ||^2  \bigr] +   \E  \bigl[     ||  \hat{\hvu}_{t-1, t}  ||^2  \bigr]  +   \E  \bigl[   || \hat{\hvu}_{t, t+1} ||^2 \bigr]   \label{eq:split9245}\\   
& \quad \vdots  \nonumber\\
& =   \E [ || \hat{\hvu}_{1} ||^2 ] +  \E[ || \hat{\hvu}_{1, 2} ||^2]  +   \E  [|| \hat{\hvu}_{2, 3} ||^2 ] + \cdots  + \E[ || \hat{\hvu}_{t, t+1} ||^2  ]  \label{eq:split42653}
\end{align}
where \eqref{eq:split9942} uses the definitions of  $\hat{\hvu}_{t+1}$ and  $\hat{\hvu}_{t,t+1}$ in \eqref{eq:mmse424a}  and  \eqref{eq:mmse9832a};  
\eqref{eq:split8325} is from the identity that  $\E[\boldsymbol{a}] = \E[\E[\boldsymbol{a} | \boldsymbol{b} ]]$ for  random $\boldsymbol{a}$ and $\boldsymbol{b}$;
 \eqref{eq:ind828} follows from the fact that   $\hat{\hvu}_{t, t+1}$ is a complex Gaussian  vector with zero mean given $(\Wme,  \yv^{t-1} )$ (cf.~Lemma~\ref{lm:densityh}) and the fact that $\hat{\hvu}_{t}$ is a deterministic function of $(\Wme,  \yv^{t-1} )$ given the encoding maps in \eqref{eq:mapx}; 
 \eqref{eq:split0193} uses the  definitions of  $ \hat{\hvu}_{t-1}$  and  $\hat{\hvu}_{t-1, t}$ in \eqref{eq:mmse424a}  and  \eqref{eq:mmse9832a}; 
 \eqref{eq:split9245} follows from the previous steps in \eqref{eq:split9942}-\eqref{eq:split0193}; 
  \eqref{eq:split42653} follows from the same step   in  \eqref{eq:split9245}.  Note that $\hat{\hvu}_{1} = \zerou$.

\subsection{ Proof of Lemma~\ref{lm:boundh1}  \label{sec:boundh1}}

We will prove that, for $\hat{\hvu}_{t,t+1}$  defined as  in \eqref{eq:mmse424a} and  \eqref{eq:mmse9832a}, 
$t \in \{ 1, 2,  \cdots, T_c -1\}$, the following bounds hold   
\begin{align*}
 \E \bigl[ ||\hat{\hvu}_{t, t+1} ||^2  \big|   \Wme ,  \yv^{t-1}  \bigr]   & \leq  1,   \\
  \E [ ||\hat{\hvu}_{t, t+1} ||^2 ] & \leq  1  . 
\end{align*}
We will just prove the first  inequality, as the second inequality follows immediately from the first  inequality and  the identity that  $ \E [ ||\hat{\hvu}_{t, t+1} ||^2   ]  =   \E \bigl[ \   \E \bigl[  ||\hat{\hvu}_{t, t+1} ||^2   \big|   \Wme ,  \yv^{t-1}  \bigr]  \ \bigr] $.

Given that  $\hat{\hvu}_{t, t+1} = \frac{\Omega_{t} \xvu^{*}_{t} (\yv_{t} -  \xvu^\T_{t} \hat{\hvu}_{t} )  }{  \xvu^\T_{t} \Omega_{t} \xvu^{*}_{t}  +1 }$, for $t \in \{ 1, 2,  \cdots, T_c -1\}$, we have
\begin{align}
 & \E \bigl[  ||\hat{\hvu}_{t, t+1} ||^2  \big|   \Wme ,  \yv^{t-1} \bigr]        \nonumber\\
  & = \E \Bigl[  \Bigl(\frac{\Omega_{t} \xvu^{*}_{t} (\yv_{t} \!- \! \xvu^\T_{t} \hat{\hvu}_{t} )  }{  \xvu^\T_{t} \Omega_{t} \xvu^{*}_{t}  +1 } \Bigr)^\H  \Bigl(\frac{\Omega_{t} \xvu^{*}_{t} (\yv_{t} \!-\! \xvu^\T_{t} \hat{\hvu}_{t} )  }{  \xvu^\T_{t} \Omega_{t} \xvu^{*}_{t}  +1 } \Bigr)  \Big|   \Wme ,  \yv^{t-1} \Bigr] \nonumber\\
 & = \E \Bigl[ \   \frac{ \bigl( \Omega_{t} \xvu^{*}_{t} ( \xvu^\T_{t}  \tilde{\hvut}   + \zv_{t} )   \bigr)^\H  \bigl( \Omega_{t} \xvu^{*}_{t} ( \xvu^\T_{t} \tilde{\hvut}  + \zv_{t} ) \bigr) }{\bigl( \xvu^\T_{t} \Omega_{t} \xvu^{*}_{t}  +1 \bigr)^2}  \Big|   \Wme ,  \yv^{t-1} \Bigr]  \label{eq:tildeh992}\\ 
  & = \E \Bigl[ \   \frac{   \xvu^\T_{t}  \Omega_{t} \Omega_{t} \xvu^{*}_{t}  \cdot  | (\xvu^\T_{t}  \tilde{\hvut}   + \zv_{t} )|^2}{\bigl( \xvu^\T_{t} \Omega_{t} \xvu^{*}_{t}  +1 \bigr)^2}  \Big|   \Wme ,  \yv^{t-1} \Bigr]  \non \\ 
   & =    \frac{   \xvu^\T_{t}  \Omega_{t} \Omega_{t} \xvu^{*}_{t}   \cdot \E \bigl[    ( \xvu^\T_{t}  \tilde{\hvut}   \tilde{\hvut}^{\H} \xvu^{*}_{t}     + \zv_{t}  \zv_{t}^{*}  + \xvu^\T_{t}  \tilde{\hvut} \zv_{t}^{*} + \zv_{t} \tilde{\hvut}^{\H} \xvu^{*}_{t}    ) \big|   \Wme ,  \yv^{t-1}  \bigr] }{\bigl( \xvu^\T_{t} \Omega_{t} \xvu^{*}_{t}  +1 \bigr)^2}      \label{eq:lmest8355}  \\ 
  & =   \frac{   \xvu^\T_{t}  \Omega_{t} \Omega_{t} \xvu^{*}_{t}   (    \xvu^\T_{t}   \Omega_{t}   \xvu^{*}_{t}     +  1  + 0 + 0    )}{\bigl( \xvu^\T_{t} \Omega_{t} \xvu^{*}_{t}  +1 \bigr)^2}       \label{eq:lmest5267}  \\ 
  & =  \frac{    \xvu^\T_{t} \Omega_{t} \Omega_{t} \xvu^{*}_{t}  }{  \xvu^\T_{t} \Omega_{t} \xvu^{*}_{t}  +1 }   \nonumber \\
& \leq   \frac{    \xvu^\T_{t} \Omega_{t} \xvu^{*}_{t}  +1  }{  \xvu^\T_{t} \Omega_{t} \xvu^{*}_{t}  +1 }     \label{eq:lmest92834}\\
& = 1  
\end{align}
where  $\hat{\hvu}_{t,t+1}$  is defined  in \eqref{eq:mmse424a} and  \eqref{eq:mmse9832a};
 \eqref{eq:tildeh992} uses the definition of  $\tilde{\hvut}  \defeq \hvut  - \hat{\hvu}_{t}$ and the fact that $\yv_{t} - \xvu^\T_{t} \hat{\hvu}_{t} = \xvu^\T_{t}  \tilde{\hvut}   + \zv_{t} $;  
\eqref{eq:lmest8355} results from  the facts that $\xvut$ is a deterministic function of $(\yv^{t-1}, \Wme)$ and  that $\Omega_{t}$ is a deterministic function of $(\yv^{t-2}, \Wme)$ given the encoding maps in \eqref{eq:mapx};
\eqref{eq:lmest5267} follows from  the facts that $\tilde{\hvut} |  (\yv^{t-1}, \Wme) \sim \Cc\Nc ( \zerou,   \Omega_{t})$ and that $ \zv_{t}$ is independent of $\tilde{\hvut}$; 
   \eqref{eq:lmest92834} follows from that  
\begin{align}
 \xvu^\T_{t} \Omega_{t} \Omega_{t} \xvu^{*}_{t}  \leq   \xvu^\T_{t} \Omega_{t}\xvu^{*}_{t}   \leq     \xvu^\T_{t} \Omega_{t}\xvu^{*}_{t}    + 1,  \non
 \end{align}
 where the first inequality follows from that $ \mathbf{0} \preceq \Omega_{t}  \preceq  I_M $ (cf.~Lemma~\ref{lm:estimate1}) and that 
$ \xvu^\T_{t} \Omega_{t}\xvu^{*}_{t} - \xvu^\T_{t} \Omega_{t} \Omega_{t} \xvu^{*}_{t} \defeq  \xvu^\T_{t} \Um \Lambda \Um^\H  \xvu^{*}_{t} - \xvu^\T_{t} \Um \Lambda  \Um^\H  \Um \Lambda \Um^\H    \xvu^{*}_{t} = \xvu^\T_{t} \Um (\Lambda - \Lambda^2)  \Um^\H  \xvu^{*}_{t}  \geq 0$  by using the  singular value decomposition  of  $\Omega_{t} \defeq  \Um \Lambda \Um^\H$, where $\Um$ and $\Lambda$ are the unitary matrix and diagonal matrix respectively. Note that if $ \mathbf{0} \preceq \Omega_{t}  \preceq  I_M $, then $\Um ( \Lambda -  \Lambda^2) \Um^\H   \succeq  \mathbf{0}$.  At this point we complete the proof.

\subsection{ Proof of Lemma~\ref{lm:boundh41}  \label{sec:boundh41}}

We will prove that, for $\hat{\hvu}_{t,t+1}$  defined as  in \eqref{eq:mmse424a} and  \eqref{eq:mmse9832a}, 
$t \in \{ 1, 2,  \cdots, T_c -1\}$, the following inequalities hold  
\begin{align}
 \E \bigl[ ||\hat{\hvu}_{t, t+1} ||^4  \big|   \Wme ,  \yv^{t-1}  \bigr]   & \leq  3,   \label{eq:4boundh111} \\
  \E [ ||\hat{\hvu}_{t, t+1} ||^4 ]  & \leq  3 .   \label{eq:4boundh121}
\end{align}

We will just prove the first  inequality in \eqref{eq:4boundh111}, as the second inequality in \eqref{eq:4boundh121} follows immediately from  \eqref{eq:4boundh111} and  the identity that  $ \E [ ||\hat{\hvu}_{t, t+1} ||^4 ]  =   \E \bigl[  \E \bigl[ ||\hat{\hvu}_{t, t+1} ||^4  \big|   \Wme ,  \yv^{t-1}  \bigr] \bigr] $.

The proof of \eqref{eq:4boundh111} follows from the proof steps of  Lemma~\ref{lm:boundh1}. For $\hat{\hvu}_{t,t+1}$  defined as  in \eqref{eq:mmse424a} and  \eqref{eq:mmse9832a},   $t \in \{ 1, 2,  \cdots, T_c -1\}$, we have
\begin{align}
 \E \bigl[  ||\hat{\hvu}_{t, t+1} ||^4   \big|   \Wme ,  \yv^{t-1}  \bigr]     & = \E \Bigl[ \ \Bigl(\Bigl(\frac{\Omega_{t} \xvu^{*}_{t} (\yv_{t} \!- \! \xvu^\T_{t} \hat{\hvu}_{t} )  }{  \xvu^\T_{t} \Omega_{t} \xvu^{*}_{t}  +1 } \Bigr)^\H  \Bigl(\frac{\Omega_{t} \xvu^{*}_{t} (\yv_{t} \!-\! \xvu^\T_{t} \hat{\hvu}_{t} )  }{  \xvu^\T_{t} \Omega_{t} \xvu^{*}_{t}  +1 } \Bigr) \Bigr)^2   \Big|   \Wme ,  \yv^{t-1}   \Bigr] \nonumber\\
 & = \E \Bigl[ \ \frac{\bigl( \bigl( \Omega_{t} \xvu^{*}_{t} ( \xvu^\T_{t}  \tilde{\hvut}   + \zv_{t} )   \bigr)^\H   \Omega_{t} \xvu^{*}_{t} ( \xvu^\T_{t} \tilde{\hvut}  + \zv_{t} ) \bigr)^2 }{\bigl( \xvu^\T_{t} \Omega_{t} \xvu^{*}_{t}  +1 \bigr)^4} \Big|   \Wme ,  \yv^{t-1}  \Bigr]  \label{eq:4tildeh992}\\ 
  & = \E \Bigl[ \   \frac{\bigl(    \xvu^\T_{t} \Omega_{t} \Omega_{t}   \xvu^{*}_{t}\bigr)^2  \cdot    \bigl| \bigl( \xvu^\T_{t} \tilde{\hvut}  + \zv_{t} \bigr)\bigr|^4 }{\bigl( \xvu^\T_{t} \Omega_{t} \xvu^{*}_{t}  +1 \bigr)^4}  \Big|   \Wme ,  \yv^{t-1}   \Bigr]  \non\\ 
        & =    \frac{\bigl(    \xvu^\T_{t} \Omega_{t} \Omega_{t}   \xvu^{*}_{t}\bigr)^2  }{\bigl( \xvu^\T_{t} \Omega_{t} \xvu^{*}_{t}  +1 \bigr)^4}   \ \cdot \E \Bigl[      \bigl| \bigl( \xvu^\T_{t} \tilde{\hvut}  + \zv_{t} \bigr)\bigr|^4     \Big|   \Wme ,  \yv^{t-1}  \Bigr]   \label{eq:4det256}
\end{align}
where \eqref{eq:4tildeh992} uses the definition of  $\tilde{\hvut}  \defeq \hvut  - \hat{\hvu}_{t}$;  
\eqref{eq:4det256} follows from  the facts that  $\xvut$ is a deterministic function of $(\yv^{t-1}, \Wme)$ and that $\Omega_{t}$ is a deterministic function of $(\yv^{t-2}, \Wme)$ given the encoding maps in \eqref{eq:mapx}.
Let us focus on the inner expectation term in \eqref{eq:4det256}. 
Note that,  for  two \emph{complex numbers} $\av$ and $\bv$, we have 
\begin{align}
 &   | ( \av  + \bv )|^4  =  |\av|^4 +  |\bv|^4  +  2   |\av|^2 |\bv|^2  + 4   \text{Re}^2(\av \bv^*) + 4 ( |\av|^2 +  |\bv|^2) \cdot \text{Re} (\av \bv^*) .  \label{eq:expend3245}
\end{align}
In the following, we will replace $\av$ and $\bv$ with $\xvu^\T_{t} \tilde{\hvut}$ and $ \zv_{t}$ respectively and compute $\E \bigl[      \bigl| \bigl( \xvu^\T_{t} \tilde{\hvut}  + \zv_{t} \bigr)\bigr|^4     \big|   \Wme ,  \yv^{t-1}  \bigr]$. At first we note that given $ \zv_{t} \sim \Cc\Nc ( 0,   1)$ and $\tilde{\hvut} |  (\yv^{t-1}, \Wme) \sim \Cc\Nc ( \zerou,   \Omega_{t})$, the following equalities hold true:
\begin{align}
  \E[  \zv_{t} ] &= 0           \label{eq:moment1}  \\
  \E[  |\zv_{t}|^2  ] &= 1   \label{eq:moment2} \\
    \E[  \zv_{t} \zv_{t}  ] &= 0   \label{eq:moment222} \\
    \E[ \zv_{t} \cdot  |\zv_{t}|^2  ] &= 0   \label{eq:moment22} \\
  \E[  |\zv_{t}|^4  ] &=  3    \label{eq:moment3} \\
    \E[  \xvu^\T_{t} \tilde{\hvut}   \ |   \Wme ,  \yv^{t-1}]  &=   0   \label{eq:moment4} \\
  \E[  |\xvu^\T_{t} \tilde{\hvut} |^2  \ |   \Wme ,  \yv^{t-1}]  &=   \xvu^\T_{t} \Omega_{t} \xvu^{*}_{t}   \label{eq:moment5} \\
  \E[  |\xvu^\T_{t} \tilde{\hvut} |^4  \  |   \Wme ,  \yv^{t-1}]  &=   3 ( \xvu^\T_{t} \Omega_{t} \xvu^{*}_{t}  )^2     \label{eq:moment6}
\end{align} 
where \eqref{eq:moment6}  follows from Lemma~\ref{lm:fourthmoment} (shown at the beginning of this section), i.e.,
$\E[  |  \xvu^\T_{t}   \tilde{\hvu}_t   |^4  |   \Wme ,  \yv^{t-1}]$ 
= $ \E[ (   \tilde{\hvu}^{\H}_t \xvu^{*}_{t} \xvu^\T_{t}   \tilde{\hvut} )^2  |   \Wme ,  \yv^{t-1}] $  =$   2 \trace ( \xvu^{*}_{t} \xvu^\T_{t}  \Omega_{t}   \xvu^{*}_{t} \xvu^\T_{t}  \Omega) $  + $ ( \trace (  \xvu^{*}_{t} \xvu^\T_{t}  \Omega_{t} ))^2$  =  $3 ( \xvu^\T_{t} \Omega_{t} \xvu^{*}_{t}  )^2 $; 
\eqref{eq:moment3} also follows from Lemma~\ref{lm:fourthmoment}.  
By using  \eqref{eq:expend3245}-\eqref{eq:moment6}, we have 
\begin{align}
   &\E \Bigl[      \bigl| \bigl( \xvu^\T_{t} \tilde{\hvut}  + \zv_{t} \bigr)\bigr|^4     \Big|   \Wme ,  \yv^{t-1}  \Bigr]    \non\\
   = & \E \Bigl[   |\xvu^\T_{t} \tilde{\hvut} |^4   +   |\zv_{t}|^4       \!+\!   2 |\xvu^\T_{t} \tilde{\hvut} |^2  \cdot   |\zv_{t}|^2 \!+\!     4   \text{Re}^2 \bigl(\xvu^\T_{t} \tilde{\hvut}   \zv_{t}^* \bigr )    \!+\!  4 ( |\xvu^\T_{t} \tilde{\hvut} |^2 +  |\zv_{t}|^2) \cdot \text{Re} \bigl(\xvu^\T_{t} \tilde{\hvut}   \zv_{t}^* \bigr )  \Big|   \Wme ,  \yv^{t-1}  \Bigr]   \label{eq:expend8525} \\    
 = & 3 ( \xvu^\T_{t} \Omega_{t} \xvu^{*}_{t}  )^2   +  3  +     2   \xvu^\T_{t} \Omega_{t} \xvu^{*}_{t}   +     \E \Bigl[ 4   \text{Re}^2 \bigl(\xvu^\T_{t} \tilde{\hvut}   \zv_{t}^* \bigr )  \Big|   \Wme ,  \yv^{t-1}  \Bigr]  + 0  \label{eq:expend4266} \\  
  = & 3 ( \xvu^\T_{t} \Omega_{t} \xvu^{*}_{t}  )^2   +  3  +     2   \xvu^\T_{t} \Omega_{t} \xvu^{*}_{t} +      2  \E \bigl[  |\xvu^\T_{t} \tilde{\hvut} |^2  \big|   \Wme ,  \yv^{t-1}  \bigr]  \cdot \E  |\zv_{t}|^2      \label{eq:expend9255}   \\   
    = & 3 ( \xvu^\T_{t} \Omega_{t} \xvu^{*}_{t}  )^2   +  3  +     4   \xvu^\T_{t} \Omega_{t} \xvu^{*}_{t}      \label{eq:expend425}                         
\end{align}
where \eqref{eq:expend8525} is from \eqref{eq:expend3245};
\eqref{eq:expend4266} follows from \eqref{eq:moment1}-\eqref{eq:moment6} as well as  the fact that $\zv_{t}$ is independent of $\xvu_{t}$ and $ \tilde{\hvut}$;
\eqref{eq:expend9255} stems from the following conclusion  for  two \emph{independent} complex random variables $\av$ and $\bv $, $\bv \sim \Cc\Nc (0, 1)$,  that is,  $ \E [4\text{Re}^2 (\av \bv^*) ] =  \E [(  \av \bv^*  + \av^* \bv    )(  \av \bv^*  + \av^* \bv    ) ]  =  \E [  2 |\av|^2 |\bv|^2    +  \av \av  (\bv \bv)^* +    (\av \av)^*\bv \bv  ]   =  2 \E [ |\av|^2 |\bv|^2] $  (cf.~\eqref{eq:moment222}). In the above we replace $\av$ and $\bv$ with $\xvu^\T_{t} \tilde{\hvut}$ and $ \zv_{t}$ respectively.  The last step in \eqref{eq:expend425}  follows from \eqref{eq:moment5}. 

By plugging \eqref{eq:expend425} into \eqref{eq:4det256}, we have 
\begin{align}
  \E [  ||\hat{\hvu}_{t, t+1} ||^4  |   \Wme ,  \yv^{t-1} ]     & =\frac{\bigl(    \xvu^\T_{t} \Omega_{t} \Omega_{t}   \xvu^{*}_{t}\bigr)^2  }{\bigl( \xvu^\T_{t} \Omega_{t} \xvu^{*}_{t}  +1 \bigr)^4}   \ \cdot \bigl( 3 ( \xvu^\T_{t} \Omega_{t} \xvu^{*}_{t}  )^2   +  3  +     4   \xvu^\T_{t} \Omega_{t} \xvu^{*}_{t}  \bigr)    \non \\
  & \leq     \frac{\bigl(    \xvu^\T_{t} \Omega_{t} \Omega_{t}   \xvu^{*}_{t}\bigr)^2  }{\bigl( \xvu^\T_{t} \Omega_{t} \xvu^{*}_{t}  +1 \bigr)^4}   \ \cdot 3 \bigl( \xvu^\T_{t} \Omega_{t} \xvu^{*}_{t}    + 1  \bigr)^2     \label{eq:4det4525} \\
    & =  3   \cdot  \frac{\bigl(    \xvu^\T_{t} \Omega_{t} \Omega_{t}   \xvu^{*}_{t}\bigr)^2  }{\bigl( \xvu^\T_{t} \Omega_{t} \xvu^{*}_{t}  +1 \bigr)^2}       \non \\ 
 & \leq  3   \cdot  \frac{\bigl( \xvu^\T_{t} \Omega_{t} \xvu^{*}_{t}  +1 \bigr)^2   }{\bigl( \xvu^\T_{t} \Omega_{t} \xvu^{*}_{t}  +1 \bigr)^2}   \label{eq:4det3567} \\  
      & =   3       \label{eq:4det0144} 
\end{align}
where \eqref{eq:4det4525} uses the fact that  $\xvu^\T_{t} \Omega_{t} \xvu^{*}_{t} \geq 0$ since   $ \Omega_{t} \succeq \mathbf{0} $ (cf.~Lemma~\ref{lm:estimate1});
\eqref{eq:4det3567} follows from the same step in \eqref{eq:lmest92834}, i.e.,  $\xvu^\T_{t} \Omega_{t} \Omega_{t}   \xvu^{*}_{t} \leq  \xvu^\T_{t} \Omega_{t} \xvu^{*}_{t}  +1$. At this point we complete the proof.

\subsection{ Proof of Lemma~\ref{lm:independent}  \label{sec:independent}}

We will prove that, for  $\hat{\hvu}_{t+1} =  \hat{\hvu}_{t}  +  \hat{\hvu}_{t,t+1}$  defined   in  \eqref{eq:mmse424a} and \eqref{eq:mmse9832a},  $t \in \{ 1, 2,  \cdots, T_c -1\}$, we have 
\begin{align*}
\E  \bigl[ ||  \hat{\hvu}_{t+1} ||^4 \bigr]   \leq  \E \bigl[   \| \hat{\hvu}_{t}\|^4 \bigr]     +  4 \cdot  \E \bigl[   \| \hat{\hvu}_{t} \|^2  \bigr]  + 3 . 
\end{align*}

 We will at first focus on the upper bound  of $\E \bigl[      \|   \hat{\hvu}_{t} +  \hat{\hvu}_{t,t+1} \|^4     \big|   \Wme ,  \yv^{t-1}  \bigr] $.  Remind that  $\hat{\hvu}_{t, t+1} = \frac{\Omega_{t} \xvu^{*}_{t} (\yv_{t} - \xvu^\T_{t} \hat{\hvu}_{t} )  }{  \xvu^\T_{t} \Omega_{t} \xvu^{*}_{t}  +1 } = \frac{\Omega_{t} \xvu^{*}_{t} ( \xvu^\T_{t}  \tilde{\hvut}  + \zv_t)  }{  \xvu^\T_{t} \Omega_{t} \xvu^{*}_{t}  +1 }$ and that $ \tilde{\hvut}  |  (\yv^{t-1}, \Wme) \sim \Cc\Nc ( \zerou,   \Omega_{t})$  (cf.~Lemma~\ref{lm:densityh}), where $\tilde{\hvut} \defeq  \hvut  - \hat{\hvu}_{t}$. Thus, one can easily conclude that 
\begin{align}
 \hat{\hvu}_{t,t+1} |  (\yv^{t-1}, \Wme) \sim \Cc\Nc ( \zerou,   \frac{\Omega_{t}  \xvu^{*}_{t}  \xvu^\T_{t} \Omega_{t}  }{  \xvu^\T_{t} \Omega_{t} \xvu^{*}_{t}  +1 }).    \label{eq:hatht1gaussian}
\end{align}
Note that, for any two vectors  $\avu, \bvu \in \Cc^{M\times 1}$,   $ \| \avu  + \bvu \|^4$ can be expanded as in \eqref{eq:expend3245}.
Then, by replacing $\avu$ and $\bvu$ with $\hat{\hvu}_{t}$ and $\hat{\hvu}_{t,t+1}$ respectively, we have 
\begin{align}
 &   \E \bigl[      \|  \hat{\hvu}_{t} +  \hat{\hvu}_{t,t+1}\|^4     \big|   \Wme ,  \yv^{t-1}  \bigr]   \non\\ 
  = &\E \Bigl[   \| \hat{\hvu}_{t}\|^4 \! + \! \|\hat{\hvu}_{t,t+1}\|^4  \! + \!  2   \| \hat{\hvu}_{t} \|^2 \|\hat{\hvu}_{t,t+1}\|^2  \! +\!  4   \text{Re}^2( \hat{\hvu}_{t}^\H \hat{\hvu}_{t,t+1}) \!+ \!4 ( \| \hat{\hvu}_{t}\|^2 \!+\!  \|\hat{\hvu}_{t,t+1}\|^2) \cdot \text{Re} ( \hat{\hvu}_{t}^\H \hat{\hvu}_{t,t+1}) \   \Big|   \Wme ,  \yv^{t-1}  \Bigr]  \label{eq:expend25677} \\
  = &  \| \hat{\hvu}_{t}\|^4  +  \E \bigl[   \|\hat{\hvu}_{t,t+1}\|^4   \big|   \Wme ,  \yv^{t-1}  \bigr]    +  2   \| \hat{\hvu}_{t} \|^2   \cdot \E \bigl[  \|\hat{\hvu}_{t,t+1}\|^2   \big|   \Wme ,  \yv^{t-1}  \bigr]  +  4   \E \bigl[  \text{Re}^2( \hat{\hvu}_{t}^\H \hat{\hvu}_{t,t+1})   \   \big|   \Wme ,  \yv^{t-1}  \bigr]   \label{eq:expend9285} \\
 \leq    &  \| \hat{\hvu}_{t}\|^4  +  3     +  2   \| \hat{\hvu}_{t} \|^2    +  4   \E \bigl[  \text{Re}^2( \hat{\hvu}_{t}^\H \hat{\hvu}_{t,t+1})   \   \big|   \Wme ,  \yv^{t-1}  \bigr]   \label{eq:expend4378} \\ 
  \leq    &  \| \hat{\hvu}_{t}\|^4  +  3     +  2   \| \hat{\hvu}_{t} \|^2    +   2   \| \hat{\hvu}_{t} \|^2    \label{eq:expend0135} 
\end{align}
 where  \eqref{eq:expend25677} results from \eqref{eq:expend3245}; 
 \eqref{eq:expend9285} follows from the fact that  $\hat{\hvu}_{t}$ is deterministic given  $(\Wme ,  \yv^{t-1})$, and the identities that $ \E [ \text{Re} (\au^\H \bvu) ]  = \text{Re} ( \E[ \au^\H \bvu ])  = 0$  and  $ \E [   \|\bvu\|^2 \cdot \text{Re} (\au^\H \bvu) ]  = \text{Re} ( \E [    \au^\H \bvu \cdot \|\bvu\|^2  ])  = 0   $ for a fixed  vector $\au$ and a Gaussian  vector $\bvu \sim \Cc\Nc ( \zerou,   \Km )$; note that the \emph{odd}-order moments of a complex proper Gaussian vector are  zeros (see, e.g., \cite{Kostas:02});  \eqref{eq:expend4378} results from Lemma~\ref{lm:boundh1} and Lemma~\ref{lm:boundh41}, i.e.,   $\E \bigl[   \|\hat{\hvu}_{t,t+1}\|^4   \big|   \Wme ,  \yv^{t-1}  \bigr]  \leq  3$ and $\E \bigl[  \|\hat{\hvu}_{t,t+1}\|^2   \big|   \Wme ,  \yv^{t-1}  \bigr] \leq 1$;  
 \eqref{eq:expend0135} follows from that 
 \begin{align}
 &   4   \E \bigl[  \text{Re}^2( \hat{\hvu}_{t}^\H \hat{\hvu}_{t,t+1})   \   \big|   \Wme ,  \yv^{t-1}  \bigr]   \non \\ 
  = &  \E \bigl[    2 \hat{\hvu}_{t}^\H \hat{\hvu}_{t,t+1} \hat{\hvu}_{t,t+1}^\H \hat{\hvu}_{t}  \   \big|   \Wme ,  \yv^{t-1}  \bigr]    +  2 \text{Re}\bigl( \E \bigl[  \hat{\hvu}_{t}^\H \hat{\hvu}_{t,t+1}  \hat{\hvu}_{t}^\H \hat{\hvu}_{t,t+1}  \   \big|   \Wme ,  \yv^{t-1}  \bigr]  \bigr)   \label{eq:expend5826}  \\ 
        = &      2 \cdot \hat{\hvu}_{t}^\H  \cdot  \frac{  \Omega_{t}  \xvu^{*}_{t}  \xvu^\T_{t} \Omega_{t}   }{  \xvu^\T_{t} \Omega_{t} \xvu^{*}_{t}  +1 } \cdot \hat{\hvu}_{t}     +  2 \text{Re}\bigl( \E \bigl[  \hat{\hvu}_{t}^\H \hat{\hvu}_{t,t+1}  \hat{\hvu}_{t}^\H \hat{\hvu}_{t,t+1}  \   \big|   \Wme ,  \yv^{t-1}  \bigr]  \bigr)   \label{eq:expend3588}  \\ 
        = &       2 \cdot \hat{\hvu}_{t}^\H  \cdot  \frac{  \Omega_{t}  \xvu^{*}_{t}  \xvu^\T_{t} \Omega_{t}   }{  \xvu^\T_{t} \Omega_{t} \xvu^{*}_{t}  +1 } \cdot \hat{\hvu}_{t}   \label{eq:expend37889}  \\   
        = &       2 \cdot  \trace\Bigl(\hat{\hvu}_{t} \hat{\hvu}_{t}^\H  \cdot  \frac{  \Omega_{t}  \xvu^{*}_{t}  \xvu^\T_{t} \Omega_{t}   }{  \xvu^\T_{t} \Omega_{t} \xvu^{*}_{t}  +1 }  \Bigr)  \non  \\   
       \leq  &       2 \cdot   \| \hat{\hvu}_{t} \|^2  \cdot \trace\Bigl( \frac{  \Omega_{t}  \xvu^{*}_{t}  \xvu^\T_{t} \Omega_{t}   }{  \xvu^\T_{t} \Omega_{t} \xvu^{*}_{t}  +1 }  \Bigr)  \label{eq:expend9286}  \\  
         =   &       2 \cdot   \| \hat{\hvu}_{t} \|^2  \cdot \trace\Bigl( \frac{  \xvu^\T_{t} \Omega_{t} \Omega_{t}  \xvu^{*}_{t}     }{  \xvu^\T_{t} \Omega_{t} \xvu^{*}_{t}  +1 }  \Bigr)  \non  \\  
        \leq   &       2 \cdot   \| \hat{\hvu}_{t} \|^2  \cdot \trace\Bigl( \frac{  \xvu^\T_{t} \Omega_{t} \xvu^{*}_{t}  +1  }{  \xvu^\T_{t} \Omega_{t} \xvu^{*}_{t}  +1 }  \Bigr)   \label{eq:expend9983}   \\ 
    =    &   2   \| \hat{\hvu}_{t} \|^2    \label{eq:expend2446} 
\end{align}
where \eqref{eq:expend5826} follows from the identity that  $4\text{Re}^2(\avu^\H \bvu)  = ( \avu^\H \bvu + \bvu^\H \avu) ( \avu^\H \bvu + \bvu^\H \avu)=2 \avu^\H \bvu \bvu^\H \avu + 2 \text{Re}(\avu^\H \bvu \avu^\H \bvu) $ for two vectors $\avu$ and $\bvu$ with the same dimension;
\eqref{eq:expend3588} stems from \eqref{eq:hatht1gaussian}, i.e., $\hat{\hvu}_{t,t+1} |  (\yv^{t-1}, \Wme) \sim \Cc\Nc ( \zerou,   \frac{\Omega_{t}  \xvu^{*}_{t}  \xvu^\T_{t} \Omega_{t}  }{  \xvu^\T_{t} \Omega_{t} \xvu^{*}_{t}  +1 })$;
\eqref{eq:expend37889} follows from the identity that $\E[ \au^\H \bvu  \au^\H \bvu ]  = 0 $ for a fixed  vector $\au$ and a complex Gaussian  vector $\bvu \sim \Cc\Nc ( \zerou,   \Km )$; 
note that if $\bvu \sim \Cc\Nc ( \zerou,   \Km )$, then   $\cv\defeq  \au^\H \bvu  \sim \Cc\Nc ( 0,   \au^\H \Km \au )$ and  $\E[ \cv \cdot \cv ]  = 0 $;
\eqref{eq:expend9286} follows from the identities that  $\trace( A B)  \leq  \lambda_{\max}(A ) \trace(B)$ for positive semidefinite $m\times m$ Hermitian matrices $A, B$, and that $\lambda_{\max}(\hat{\hvut}\hat{\hvut}^\H) = \|\hat{\hvu}_{t}\|^2$; 
\eqref{eq:expend9983} follows from the same step in \eqref{eq:lmest92834}, i.e.,  $\xvu^\T_{t} \Omega_{t} \Omega_{t}   \xvu^{*}_{t} \leq  \xvu^\T_{t} \Omega_{t} \xvu^{*}_{t}  +1$.

Finally, from the step in \eqref{eq:expend0135}, we have the following inequality \[\E \bigl[      \|  \hat{\hvu}_{t} +  \hat{\hvu}_{t,t+1}\|^4     \big|   \Wme ,  \yv^{t-1}  \bigr] \leq \| \hat{\hvu}_{t}\|^4   +  4   \| \hat{\hvu}_{t} \|^2  + 3. \]  By taking the expectation on both sides of the above inequality, and using the identity that $ \E \bigl[      \|  \hat{\hvu}_{t} +  \hat{\hvu}_{t,t+1}\|^4    \bigr]  =  \E \bigl[  \  \E \bigl[      \|  \hat{\hvu}_{t} +  \hat{\hvu}_{t,t+1}\|^4     \big|   \Wme ,  \yv^{t-1} \bigr]  \ \bigr]$, it yields  
\begin{align}
  \E \bigl[      \|  \hat{\hvu}_{t} +  \hat{\hvu}_{t,t+1}\|^4    \bigr]
  \leq      \E \bigl[   \| \hat{\hvu}_{t}\|^4 \bigr]     +  4 \cdot  \E \bigl[   \| \hat{\hvu}_{t} \|^2  \bigr]  + 3    \label{eq:expend9844} 
\end{align}
which completes the proof.

\section*{Acknowledgement}

We wish to thank Ayfer \"Ozg\"ur and Andrea Goldsmith for helpful comments during the early stage of this work.


\begin{thebibliography}{10}
\providecommand{\url}[1]{#1}
\csname url@samestyle\endcsname
\providecommand{\newblock}{\relax}
\providecommand{\bibinfo}[2]{#2}
\providecommand{\BIBentrySTDinterwordspacing}{\spaceskip=0pt\relax}
\providecommand{\BIBentryALTinterwordstretchfactor}{4}
\providecommand{\BIBentryALTinterwordspacing}{\spaceskip=\fontdimen2\font plus
\BIBentryALTinterwordstretchfactor\fontdimen3\font minus
  \fontdimen4\font\relax}
\providecommand{\BIBforeignlanguage}[2]{{%
\expandafter\ifx\csname l@#1\endcsname\relax
\typeout{** WARNING: IEEEtran.bst: No hyphenation pattern has been}%
\typeout{** loaded for the language `#1'. Using the pattern for}%
\typeout{** the default language instead.}%
\else
\language=\csname l@#1\endcsname
\fi
#2}}
\providecommand{\BIBdecl}{\relax}
\BIBdecl

\bibitem{Marzetta:10}
T.~L. Marzetta, ``Noncooperative cellular wireless with unlimited numbers of
  base station antennas,'' \emph{IEEE Trans. Wireless Commun.}, vol.~9, no.~11,
  pp. 3590 -- 3600, Nov. 2010.

\bibitem{Andrews+:14}
J.~G. Andrews, S.~Buzzi, W.~Choi, S.~V. Hanly, A.~Lozano, A.~C.~K. Soong, and
  J.~C. Zhang, ``What will {5G} be?'' \emph{IEEE Journal on Selected Areas in
  Communications}, vol.~32, no.~6, pp. 1065 -- 1081, Jun. 2014.

\bibitem{NLM:13}
H.~Q. Ngo, E.~G. Larsson, and T.~L. Marzetta, ``Energy and spectral efficiency
  of very large multiuser {MIMO} systems,'' \emph{IEEE Trans. Commun.},
  vol.~61, no.~4, pp. 1436 -- 1449, Apr. 2013.

\bibitem{Goldsmith:05}
A.~J. Goldsmith, \emph{Wireless Communications}.\hskip 1em plus 0.5em minus
  0.4em\relax Cambridge University Press, 2005.

\bibitem{TV:05}
D.~Tse and P.~Viswanath, \emph{Fundamentals of Wireless Communication}.\hskip
  1em plus 0.5em minus 0.4em\relax Cambridge University Press, 2005.

\bibitem{Tel:99}
I.~E. Telatar, ``Capacity of multi-antenna {G}aussian channels,'' \emph{Europ.
  Trans. Telecomm.}, vol.~10, no.~6, pp. 585--596, Nov. 1999.

\bibitem{JP:03}
S.~K. Jayaweera and H.~V. Poor, ``Capacity of multiple-antenna systems with
  both receiver and transmitter channel state information,'' \emph{IEEE Trans.
  Inf. Theory}, vol.~49, no.~10, pp. 2697 -- 2709, Oct. 2003.

\bibitem{GV:97}
A.~Goldsmith and P.~Varaiya, ``Capacity of fading channels with channel side
  information,'' \emph{IEEE Trans. Inf. Theory}, vol.~43, no.~6, pp. 1986 --
  1992, Nov. 1997.

\bibitem{Gallager:68}
R.~Gallager, \emph{Information Theory and Reliable Communiation}.\hskip 1em
  plus 0.5em minus 0.4em\relax New York: Wiley, 1968.

\bibitem{KAC:90}
S.~Kasturia, J.~Aslanis, and J.~Cioffi, ``Vector coding for partial response
  channels,'' \emph{IEEE Trans. Inf. Theory}, vol.~36, no.~4, pp. 741 -- 762,
  Jul. 1990.

\bibitem{AC:91}
P.~Algoet and J.~Cioffi, ``The capacity of a channel with {Gaussian} noise and
  intersymbol interference,'' in \emph{Proc. {IEEE} Int. Symp. Inf. Theory
  {(ISIT)}}, Jun. 1991.

\bibitem{ZT:02}
L.~Zheng and D.~N.~C. Tse, ``Communicating on the {Grassmann} manifold: a
  geometric approach to the noncoherent multiple-antenna channel,'' \emph{IEEE
  Trans. Inf. Theory}, vol.~48, no.~2, pp. 359 -- 383, Feb. 2002.

\bibitem{LS:02}
A.~Lapidoth and S.~Shamai, ``Fading channels: how perfect need ``perfect side
  information" be?'' \emph{IEEE Trans. Inf. Theory}, vol.~48, no.~5, pp. 1118
  -- 1134, May 2002.

\bibitem{Medard:2000}
M.~M\'edard, ``The effect upon channel capacity in wireless communications of
  perfect and imperfect knowledge of the channel,'' \emph{IEEE Trans. Inf.
  Theory}, vol.~46, no.~3, pp. 933 -- 946, May 2000.

\bibitem{Narula+:98}
A.~Narula, M.~J. Lopez, M.~D. Trott, and G.~W. Wornell, ``Efficient use of side
  information in multiple-antenna data transmission over fading channels,''
  \emph{IEEE Journal on Selected Areas in Communications}, vol.~16, no.~8, pp.
  1423 -- 1436, Apr. 1998.

\bibitem{Love+:03}
D.~J. Love, R.~W. {Heath Jr.}, and T.~Strohmer, ``Grassmannian beamforming for
  multiple-input multiple-output wireless systems,'' \emph{IEEE Trans. Inf.
  Theory}, vol.~49, no.~10, pp. 2735 -- 2727, Oct. 2003.

\bibitem{Erkip+:03}
K.~K. Mukkavilli, A.~Sabharwal, E.~Erkip, and B.~Aazhang, ``On beamforming with
  finite rate feedback in multiple-antenna systems,'' \emph{IEEE Trans. Inf.
  Theory}, vol.~49, no.~10, pp. 2562 -- 2579, Oct. 2003.

\bibitem{TG:06}
T.~Yoo and A.~Goldsmith, ``Capacity and power allocation for fading {MIMO}
  channels with channel estimation error,'' \emph{IEEE Trans. Inf. Theory},
  vol.~52, no.~5, pp. 2203 -- 2214, May 2006.

\bibitem{RR:06}
J.~C. Roh and B.~D. Rao, ``Transmit beamforming in multiple-antenna systems
  with finite rate feedback: a {VQ}-based approach,'' \emph{IEEE Trans. Inf.
  Theory}, vol.~52, no.~3, pp. 1101 -- 1112, Mar. 2006.

\bibitem{XGR:06}
Y.~Xie, C.~N. Georghiades, and K.~Rohani, ``Optimal bandwidth allocation for
  the data and feedback channels in {MISO-FDD} systems,'' \emph{IEEE Trans.
  Commun.}, vol.~54, no.~2, pp. 197 -- 203, Feb. 2006.

\bibitem{SJ:07}
S.~Srinivasa and S.~A. Jafar, ``The optimality of transmit beamforming: A
  unified view,'' \emph{IEEE Trans. Inf. Theory}, vol.~53, no.~4, pp. 1558 --
  1564, Apr. 2007.

\bibitem{JS:07}
S.~A. Jafar and S.~Srinivasa, ``On the optimality of beamforming with quantized
  feedback,'' \emph{IEEE Trans. Commun.}, vol.~55, no.~12, pp. 2288 -- 2302,
  Dec. 2007.

\bibitem{DL:06}
A.~D. Dabbagh and D.~J. Love, ``Feedback rate-capacity loss tradeoff for
  limited feedback {MIMO} systems,'' \emph{IEEE Trans. Inf. Theory}, vol.~52,
  no.~5, pp. 2190 -- 2202, May 2006.

\bibitem{YL:07}
C.~K. {Au-Yeung} and D.~J. Love, ``On the performance of random vector
  quantization limited feedback beamforming in a {MISO} system,'' \emph{IEEE
  Trans. Wireless Commun.}, vol.~6, no.~2, pp. 458 -- 462, Feb. 2007.

\bibitem{KVM:14}
R.~T. Krishnamachari, M.~K. Varanasi, and K.~Mohanty, ``{MIMO} systems with
  quantized covariance feedback,'' \emph{IEEE Trans. Signal Process.}, vol.~62,
  no.~2, pp. 485--495, Jan. 2014.

\bibitem{LSW:05}
A.~Lapidoth, S.~Shamai, and M.~A. Wigger, ``On the capacity of fading {MIMO}
  broadcast channels with imperfect transmitter side-information,'' in
  \emph{Proc. Allerton Conf. Communication, Control and Computing}, Sep. 2005.

\bibitem{YJG:06}
T.~Yoo, N.~Jindal, and A.~Goldsmith, ``Finite-rate feedback {MIMO} broadcast
  channels with a large number of users,'' in \emph{Proc. {IEEE} Int. Symp.
  Inf. Theory {(ISIT)}}, Jul. 2006.

\bibitem{Jindal:06m}
N.~Jindal, ``{MIMO} broadcast channels with finite-rate feedback,'' \emph{IEEE
  Trans. Inf. Theory}, vol.~52, no.~11, pp. 5045 -- 5060, Nov. 2006.

\bibitem{Love+:08}
D.~J. Love, R.~W. Heath, V.~K. Lau, D.~Gesbert, B.~D. Rao, and M.~Andrews, ``An
  overview of limited feedback in wireless communication systems,'' \emph{IEEE
  Journal on Selected Areas in Communications}, vol.~26, no.~8, pp. 1341 --
  1365, Oct. 2008.

\bibitem{DJ:14}
A.~G. Davoodi and S.~A. Jafar, ``Aligned image sets under channel uncertainty:
  Settling conjectures on the collapse of degrees of freedom under finite
  precision {CSIT},'' \emph{IEEE Trans. Inf. Theory}, vol.~62, no.~10, pp. 5603
  -- 5618, Oct. 2016.

\bibitem{SH:10}
W.~Santipach and M.~L. Honig, ``Optimization of training and feedback overhead
  for beamforming over block fading channels,'' \emph{IEEE Trans. Inf. Theory},
  vol.~56, no.~12, pp. 6103-- 6115, Dec. 2010.

\bibitem{Caire+:10m}
G.~Caire, N.~Jindal, M.~Kobayashi, and N.~Ravindran, ``Multiuser {MIMO}
  achievable rates with downlink training and channel state feedback,''
  \emph{IEEE Trans. Inf. Theory}, vol.~56, no.~6, pp. 2845 -- 2866, Jun. 2010.

\bibitem{KCJ:08}
M.~Kobayashi, G.~Caire, and N.~Jindal, ``How much training and feedback are
  needed in {MIMO} broadcast channels?'' in \emph{Proc. {IEEE} Int. Symp. Inf.
  Theory {(ISIT)}}, Jul. 2008, pp. 2663 -- 2667.

\bibitem{KJC:09}
M.~Kobayashi, N.~Jindal, and G.~Caire, ``Optimized training and feedback for
  {MIMO} downlink channels,'' in \emph{Proc. {IEEE} Inf. Theory Workshop
  {(ITW)}}, Jun. 2009.

\bibitem{CJKR:07}
G.~Caire, N.~Jindal, M.~Kobayashi, and N.~Ravindran, ``Quantized vs. analog
  feedback for the {MIMO} broadcast channel: A comparison between zero-forcing
  based achievable rates,'' in \emph{Proc. {IEEE} Int. Symp. Inf. Theory
  {(ISIT)}}, Jun. 2007.

\bibitem{HKD:11}
J.~Hoydis, M.~Kobayashi, and M.~Debbah, ``Optimal channel training in uplink
  network {MIMO} systems,'' \emph{IEEE Trans. Signal Process.}, vol.~59, no.~6,
  pp. 2824 -- 2833, Jun. 2011.

\bibitem{KJC:11}
M.~Kobayashi, N.~Jindal, and G.~Caire, ``Training and feedback optimization for
  multiuser {MIMO} downlink,'' \emph{IEEE Trans. Commun.}, vol.~59, no.~8, pp.
  2228 -- 2240, Aug. 2011.

\bibitem{Caire+:13}
A.~Adhikary, J.~Nam, J.-Y. Ahn, and G.~Caire, ``Joint spatial division and
  multiplexing -- {The} large-scale array regime,'' \emph{IEEE Trans. Inf.
  Theory}, vol.~59, no.~10, pp. 6441 -- 6463, Oct. 2013.

\bibitem{Caire+:14}
Z.~Jiang, A.~F. Molisch, G.~Caire, and Z.~Niu, ``Achievable rates of {FDD}
  massive {MIMO} systems with spatial channel correlation,'' \emph{IEEE Trans.
  Wireless Commun.}, vol.~14, no.~5, pp. 2868 --2882, May 2015.

\bibitem{Choi+:14}
J.~Choi, D.~J. Love, and P.~Bidigare, ``Downlink training techniques for {FDD}
  massive {MIMO} systems: {Open}-loop and closed-loop training with memory,''
  \emph{IEEE Journal of Selected Topics in Signal Processing}, vol.~8, no.~5,
  pp. 802 -- 814, Oct. 2014.

\bibitem{Kramer:14}
G.~Kramer, ``Information networks with in-block memory,'' \emph{IEEE Trans.
  Inf. Theory}, vol.~60, no.~4, pp. 2105 -- 2120, Apr. 2014.

\bibitem{HH:03}
B.~Hassibi and B.~M. Hochwald, ``How much training is needed in
  multiple-antenna wireless links?'' \emph{IEEE Trans. Inf. Theory}, vol.~49,
  no.~4, pp. 951-- 963, Apr. 2003.

\bibitem{MH:99}
T.~Marzetta and B.~Hochwald, ``Capacity of a mobile multiple-antenna
  communication link with {Rayleigh} flat fading,'' \emph{IEEE Trans. Inf.
  Theory}, vol.~45, no.~1, pp. 139 -- 157, Oct. 1999.

\bibitem{MRYDLSB:13}
V.~Morgenshtern, E.~Riegler, W.~Yang, G.~Durisi, S.~Lin, B.~Sturmfels, and
  H.~Bolcskei, ``Capacity pre-log of noncoherent {SIMO} channels via
  {Hironaka's} theorem,'' \emph{IEEE Trans. Inf. Theory}, vol.~59, no.~7, pp.
  4213 -- 4229, Jul. 2013.

\bibitem{XGH:13}
M.~Xu, D.~Guo, and M.~Honig, ``Downlink noncoherent cooperation without
  transmitter phase alignment,'' \emph{IEEE Trans. Wireless Commun.}, vol.~12,
  no.~8, pp. 3920 -- 3931, Aug. 2013.

\bibitem{CCLM:13}
J.~Choi, Z.~Chance, D.~Love, and U.~Madhow, ``Noncoherent trellis-coded
  quantization for massive {MIMO} limited feedback beamforming,'' in
  \emph{Proc. Inf. Theory and App. Workshop {(ITA)}}, Feb. 2013.

\bibitem{CMGEG:15}
M.~Chowdhury, A.~Manolakos, F.~{Gomez-Cuba}, E.~Erkip, and A.~Goldsmith,
  ``Capacity scaling in noncoherent wideband massive {SIMO} systems,'' in
  \emph{Proc. {IEEE} Inf. Theory Workshop {(ITW)}}, May 2015.

\bibitem{CMG:16}
M.~Chowdhury, A.~Manolakos, and A.~Goldsmith, ``Scaling laws for noncoherent
  energy-based communications in the {SIMO MAC},'' \emph{IEEE Trans. Inf.
  Theory}, vol.~62, no.~4, pp. 1980 -- 1992, Apr. 2016.

\bibitem{FQW:17}
G.~Ferrante, T.~Quek, and M.~Win, ``Revisiting the capacity of noncoherent
  fading channels in {mmWave} system,'' \emph{IEEE Trans. Commun.}, vol.~65,
  no.~8, pp. 3259 -- 3275, Aug. 2017.

\bibitem{SSD:17}
J.~Sebastian, A.~Sengupta, and S.~Diggavi, ``On capacity of noncoherent {MIMO}
  with asymmetric link strengths,'' in \emph{Proc. {IEEE} Int. Symp. Inf.
  Theory {(ISIT)}}, Jun. 2017, pp. 541 -- 545.

\bibitem{BAZHH:18}
V.~Baeza, A.~Armada, W.~Zhang, M.~El-Hajjar, and L.~Hanzo, ``A noncoherent
  multiuser large-scale {SIMO} system relying on {M}-ary {DPSK} and
  {BICM-ID},'' \emph{IEEE Trans. Veh. Technol.}, vol.~67, no.~2, pp. 1809 --
  1814, Feb. 2018.

\bibitem{GP:17}
K.~Ghavami and M.~{Naraghi-Pour}, ``Noncoherent massive {MIMO} detection by
  expectation propagation,'' in \emph{Proc. {IEEE} Global Conf. Communications
  {(GLOBECOM)}}, Dec. 2017.

\bibitem{VR:97}
M.~K. Varanasi and A.~Russ, ``Noncoherent decorrelative multiuser detection for
  nonlinear nonorthogonal modulation,'' in \emph{Proc. {IEEE} Int. Conf.
  Communications {(ICC)}}, Jun. 1997.

\bibitem{Raphaeli:96}
D.~Raphaeli, ``Noncoherent coded modulation,'' \emph{IEEE Trans. Commun.},
  vol.~44, no.~2, pp. 172 -- 183, Feb. 1996.

\bibitem{GDME:17}
F.~{Gomez-Cuba}, J.~Du, M.~{Medard}, and E.~Erkip, ``Unified capacity limit of
  non-coherent wideband fading channels,'' \emph{IEEE Trans. Wireless Commun.},
  vol.~16, no.~1, pp. 43 -- 57, Jan. 2017.

\bibitem{BPS:98}
E.~Biglieri, J.~Proakis, and S.~Shamai, ``Fading channels:
  Information-theoretic and communications aspects,'' \emph{IEEE Trans. Inf.
  Theory}, vol.~44, no.~6, pp. 2619--2692, Oct. 1998.

\bibitem{MG:02}
M.~M\'edard and R.~G. Gallager, ``Bandwidth scaling for fading multipath
  channels,'' \emph{IEEE Trans. Inf. Theory}, vol.~48, no.~4, pp. 840 -- 852,
  Apr. 2002.

\bibitem{GPV:05}
M.~C. Gursoy, H.~V. Poor, and S.~Verd\'{u}, ``The noncoherent {Rician} fading
  channel - part {II}: Spectral efficiency in the low power regime,''
  \emph{IEEE Trans. Wireless Commun.}, vol.~4, no.~5, pp. 2207 -- 2221, Sep.
  2005.

\bibitem{GPV:02}
------, ``Efficient signaling for low-power {Rician} fading channels,'' in
  \emph{Proc. Allerton Conf. Communication, Control and Computing}, Oct. 2002.

\bibitem{BT:06}
S.~R. Bhaskaran and E.~Telatar, ``Kurtosis constraints in communication over
  fading channels,'' in \emph{Proc. {IEEE} Int. Conf. Communications {(ICC)}},
  Jun. 2006.

\bibitem{PV:04}
V.~V. Prelov and S.~Verd\'{u}, ``Second-order asymptotics of mutual
  information,'' \emph{IEEE Trans. Inf. Theory}, vol.~50, no.~8, pp. 1567 --
  1580, Aug. 2004.

\bibitem{ETW:08}
R.~H. Etkin, D.~N.~C. Tse, and H.~Wang, ``Gaussian interference channel
  capacity to within one bit,'' \emph{IEEE Trans. Inf. Theory}, vol.~54,
  no.~12, pp. 5534 -- 5562, Dec. 2008.

\bibitem{Kay:93}
S.~M. Kay, \emph{Fundamentals of Statistical Signal Processing, Volume 1:
  Estimation Theory}.\hskip 1em plus 0.5em minus 0.4em\relax Prentice Hall,
  1993.

\bibitem{Garcia:08}
A.~Leon-Garcia, \emph{Probability, Statistics, and Random Processes for
  Electrical Engineering}, 3rd~ed.\hskip 1em plus 0.5em minus 0.4em\relax
  Pearson/Prentice Hall, 2008.

\bibitem{TableGR:96}
I.~S. Gradshteyn and I.~M. Ryzhik, \emph{Table of Integrals, Series and
  Products}, 5th~ed.\hskip 1em plus 0.5em minus 0.4em\relax Academic Press,
  1996.

\bibitem{HandbookAS:64}
M.~Abramowitz and I.~A. Stegun, \emph{Handbook of Mathematical Functions with
  Formulas, Graphs, and Mathematical Tables}.\hskip 1em plus 0.5em minus
  0.4em\relax For sale by the Superintendent of Documents, U.S. Government
  Printing Office, Washington, D.C., 1964.

\bibitem{SG:02}
P.~Sebah and X.~Gourdon, ``{Introduction to the Gamma Function},'' 2002,
  [Online]. Available: http://numbers.computation.free.fr.

\bibitem{CQ:03}
C.-P. Chen and F.~Qi, ``The best lower and upper bounds of harmonic sequence,''
  \emph{{RGMIA Res. Rep. Coll. Available online at http://rgmia.org/v6n2.php}},
  vol.~6, no.~2, 2003.

\bibitem{NM:93}
F.~D. Neeser and J.~L. Massey, ``Proper complex random processes with
  applications to information theory,'' \emph{IEEE Trans. Inf. Theory},
  vol.~39, no.~4, pp. 1293 -- 1302, Jul. 1993.

\bibitem{ST:96}
S.~A. Sultan and D.~S. Tracy, ``Moments of the complex multivariate normal
  distribution,'' \emph{{Linear Algebra and its Applications}}, vol. 237, pp.
  191 -- 204, 1996.

\bibitem{Kostas:02}
K.~Triantafyllopoulos, ``Moments and cumulants of the multivariate real and
  complex {Gaussian} distributions,'' 2002, {Department of Mathematics,
  University of Bristol}.

\end{thebibliography}


\end{document}